\documentclass[a4paper,10pt,leqno]{amsart}
\usepackage{amsmath}
\usepackage{amsthm}
\usepackage{amssymb}
\usepackage{stmaryrd}
\usepackage{url}
\usepackage{here}
\usepackage[all]{xy}
\usepackage{comment}
\usepackage{mathtools}
\usepackage{ascmac}
\usepackage{ulem}
\usepackage{framed}
\usepackage{graphicx}
\usepackage{nicematrix}
\usepackage{lmodern}
\usepackage{caption}
\usepackage{subcaption}
\usepackage{footnote}
\usepackage{amscd}
\usepackage{mathdots}
\usepackage{titlesec}
\usepackage{titletoc}
\mathtoolsset{showonlyrefs}

\usepackage{xcolor}
\definecolor{webgreen}{rgb}{0,.5,0}
\definecolor{webbrown}{rgb}{.6,0,0}
\definecolor{RoyalBlue}{cmyk}{1, 0.50, 0, 0}
\usepackage[colorlinks=true, breaklinks=true, urlcolor=webbrown, linkcolor=RoyalBlue, citecolor=webgreen, backref=page]{hyperref}


\titlecontents{section} 
[0pt]                 
{\bfseries}            
{\thecontentslabel\quad} 
{}                    
{\titlerule*[0.5em]{.}\contentspage} 

\titlecontents{subsection} 
[5pt]                 
{}            
{\thecontentslabel\quad} 
{}                    
{\titlerule*[0.5em]{.}\contentspage} 

\titlecontents{subsubsection} 
[10pt]                 
{}            
{\thecontentslabel\quad} 
{}                    
{\titlerule*[0.5em]{.}\contentspage} 

\titlecontents{paragraph} 
[15pt]                 
{}            
{\thecontentslabel\quad} 
{}                    
{\titlerule*[0.5em]{.}\contentspage} 


\titlespacing*{\section}{0pt}{5.5ex plus 1ex minus .2ex}{4.3ex plus .2ex}

\numberwithin{equation}{subsection}

\titleformat{\section}{\centering\normalfont\large\scshape}{\thesection}{1em}{}

\titleformat{\subsection}{\normalfont\normalsize\bfseries}{\thesubsection}{1em}{}

\titleformat{\subsubsection}{\normalfont\normalsize\bfseries}{\thesubsubsection}{1em}{}

\setcounter{secnumdepth}{4}

\titleformat{\paragraph}{\normalfont\normalsize\it}{\theparagraph}{1em}{}

\theoremstyle{plain}
\newtheorem{thm}{Theorem}[section]
\newtheorem{lem}[thm]{Lemma}

\newtheorem{RHP}{Riemann-Hilbert Problem}

\theoremstyle{definition}

\theoremstyle{remark}
\newtheorem*{rem}{Remark}

\newtheoremstyle{case}{}{}{}{}{}{:}{ }{}
\theoremstyle{case}


\newcommand{\C}{\mathbb C}   
\newcommand{\R}{\mathbb R}

\newcommand{\Z}{\mathbb Z}



\DeclareMathOperator{\res}{res}
\DeclareMathOperator{\sign}{sgn}

\DeclareMathOperator{\arccosh}{arccosh}




\newcommand{\bp}{\begin{pmatrix}} 
\newcommand{\ep}{\end{pmatrix}} 
\renewcommand{\O}{\mathcal{O}}
\renewcommand{\Re}{\operatorname{Re}}
\renewcommand{\Im}{\operatorname{Im}}

\begin{document}

\title[Singular asymptotics of the sinh-Gordon equation]{The non-linear steepest descent approach to the singular asymptotics of the sinh-Gordon reduction of the Painlev\'e III equation}

\author[Its]{Alexander R. Its}
\address{(A.R. Its) Department of Mathematical Sciences, 
Indiana University Indianapolis,
402 N. Blackford St., Indianapolis, IN 46202 USA}
\email{aits@iu.edu}

\author[Miyahara]{Kenta Miyahara}
\address{(K. Miyahara) Department of Mathematical Sciences, 
Indiana University Indianapolis,
402 N. Blackford St., Indianapolis, IN 46202 USA}
\email{kemiya@iu.edu}

\author[Yattselev]{Maxim L. Yattselev}
\address{(M.L. Yattselev) Department of Mathematical Sciences, 
Indiana University Indianapolis,
402 N. Blackford St., Indianapolis, IN 46202 USA}
\email{maxyatts@iu.edu}

\begin{abstract}
Motivated by the simplest case of tt*-Toda equations, we study the large and small $x$ asymptotics for \( x>0 \) of real solutions of the sinh-Godron Painlev\'e III($D_6$) equation. These solutions are parametrized through the monodromy data of the corresponding Riemann-Hilbert problem. This unified approach provides connection formulae between the behavior at the origin and infinity of the considered solutions. 
\\
\newline
\textit{Keywords}: Painlev\'e III equation; sinh-Gordon equation; isomonodromic deformation; Riemann-Hilbert problem; steepest-descent method
\end{abstract}

\date{\today}

\let\ds\displaystyle

\maketitle 

\setcounter{tocdepth}{3}
\tableofcontents

\setlength{\parskip}{6pt}

\section{Introduction}

The tt* (topological—anti-topological fusion) equations were introduced by Cecotti and Vafa to describe certain deformations of super-symmetric quantum field theories \cite{CV1, CV2, CV3}. Our interest lies in the special case known as tt*-Toda equations subject to the anti-symmetry and radial symmetry conditions \cite{GuLi14, GIL1, GIL2, GIL3}. In the simplest case (tt*-Toda equation of type $A_1$) it becomes
\begin{align}
(w_0)_{rr} + \frac{(w_0)_r}{r} = 4 \sinh (4w_0), \label{tt*-Toda n=1}
\end{align}
where $w_0 : \R_{>0} \to \R$, $r \mapsto w_0(r)$, and \( \R_{>0} :=(0,\infty) \). It is well known that the above equation is a special reduction of the Painlev\'e III($D_6$) equation, given, in the canonical form, by 
\begin{align}
    w_{xx} = \frac{w_x^2}{w} - \frac{w_x}{x} + \frac{\alpha w^2 + \beta}{x} + \gamma w^3 + \frac{\delta}{w}, \label{PIIID6}
\end{align}
where $w : \C \to \C$, $x \mapsto w(x)$ and $\alpha, \beta, \gamma, \delta$ are some complex parameters such that $\gamma \delta \neq 0$. If we choose $\alpha = \beta = 0$ and $\gamma = -\delta = 1/4$, then any of the four functions $w(x) =\pm e^{\pm u(x)/2}$ solves \eqref{PIIID6} if \( u(x) \) is a solution of the sinh-Gordon Painlev\'e~III equation
\begin{align}
    u_{xx} + \frac{u_x}{x} = \sinh u, \label{sinh-gordon PIII}
\end{align}
which is identical to \eqref{tt*-Toda n=1} subject to a minor transformation \( 4w_0(r) = u(4r)\). A further transformation $u(x) = i \pi - i v(x)$ brings \eqref{sinh-gordon PIII} into the radial symmetric reduction of the elliptic sine-Gordon equation:
\begin{align}
    v_{xx} + \frac{v_x}{x} + \sin v = 0, \label{sine-gordon PIII}
\end{align}
which is usually called the sine-Gordon Painlev\'e III equation and plays an important role in the study of the 2D Ising model \cite{BMW, MTW}. 

Due to our interest in tt*-Toda equations, we study asymptotics (as \( x\to+\infty \) and \( x\to 0^+\)) of {\it real} solutions of \eqref{sinh-gordon PIII}. More precisely, recall that solutions of \eqref{PIIID6} can have non-polar singularities only at $0$ and $\infty$ while polar singularities in $\C^* := \C \setminus \{0\}$ do depend on the considered solution (this is the so-called Painlev\'e property). Hence, solutions of the sinh-Gordon reduction \eqref{sinh-gordon PIII} of \eqref{PIIID6} can have movable logarithmic singularities. It also can be readily verified that solutions of \eqref{PIIID6} can only have simple poles. Thus, going around any logarithmic singularity (via the complex plane) of a solution of \eqref{sinh-gordon PIII} will result in the addition of \( 2 \pi i \) to this solution. So, when we speak of real solutions, we ignore all the acquired additions of \( 2\pi i\) (since \eqref{sinh-gordon PIII} does not change when \( u(x) \) is replaced by \( u(x) + 2\pi i n \), \( n\in \Z \), thus obtained functions are still local solutions of \eqref{sinh-gordon PIII}). Clearly, real solutions of \eqref{sinh-gordon PIII} correspond to the real (on \(\R_{>0}\)) solutions of \eqref{PIIID6}.

Investigations of this type are not new. A lot of information on the history of the sine/sinh-Gordon Painlev\'e III equation can be found in the monograph \cite{FIKN}. In particular, McCoy, Tracy, and Wu \cite{MTW} constructed a one-parameter family of solutions of \eqref{PIIID6} with \( \alpha (-\delta)^{1/2} + \beta \gamma^{1/2} = 0 \) that remain bounded for large \( x \) and described their behavior as \( x\to+\infty \) and \( x\to0^+\). These solutions are in fact smooth on the whole \( \R_{>0}\) and are called \textit{global solutions}, see \cite{GIL2}. Later, Novoksh\"enov \cite{Novok4} asymptotically identified the locations of singularities of non-smooth real solutions of \eqref{sinh-gordon PIII} parametrized by their behavior at the origin: \( u(x)= r\ln x+s+ \O(x^{2-|r|})\) for some \( s,|r|<2 \).

A substantial amount of work has been dedicated to describing asymptotics of the solutions of \eqref{sine-gordon PIII} similarly parametrized by their behavior at the origin via
\begin{equation}
\label{r s pairs}
v(x)= r \ln x+ s + \O\big(x^{2-|\Im(r)|}\big), \quad x\to 0^+.
\end{equation}
In \cite{Novok2, Novok3}, Novoksh\"enov studied asymptotics as \( x\to+\infty \) of the real solutions, which correspond to purely imaginary solutions of \eqref{sinh-gordon PIII}. In \cite{Novok07boutrouxforSGP3, Novok08boutrouxforSGP3}, assuming \( r,s\in\C\), \( |\Im(r)|<2 \), he described asymptotics of \eqref{r s pairs} for large \( |x| \), \( 0<\arg(x)<\frac\pi2\), in terms of the Jacobi $\mathrm{sn}$ function (when \( r,s\) are purely imaginary, this provides asymptotics of the real solutions of \eqref{sinh-gordon PIII} in the first quadrant). Asymptotics as \( x\to+\infty \) of \eqref{r s pairs} for a range of parameters \( r,s \) including purely imaginary ones with \( |\Im(r)|<2 \), can be found in the introduction of \cite{ItsProk}. Even though these asymptotics can be obtained from the work of Kitaev \cite{Kitaev2} outlined in the next paragraph, they were re-derived by the authors using the Riemann-Hilbert approach employed in this paper. Let us point out that the small $x$ asymptotics \eqref{r s pairs} does not describe all the behavior of solutions near $0^+$, see Theorem~\ref{Niles result} further below, which is based on the work of Niles \cite{Niles}. 

In \cite{Kitaev2}, Kitaev used the isomonodromy theory and the WKB method to find the connection formulae for the small and large parameter asymptotics of the solutions of the degenerate Painlev\'e V equation, which bijectively corresponds to Painlev\'e III($D_6$). These results then in principle allow one to extract the large and small $x$ asymptotics of the solutions of \eqref{sinh-gordon PIII} via certain non-trivial changes of variables. However, these transformations are rather complicated. More importantly, the approach taken in \cite{Kitaev2} does not readily generalize to tackling the asymptotics of tt*-Toda equations of type \( A_n \) for general \( n>1 \), which is our overarching goal. On the other hand, the approach we take here of parametrizing solutions via the monodromy data of the associated linear system and then applying Deift-Zhou steepest descent techniques \cite{DZ} as \( x\to+\infty \) and techniques of Niles \cite{Niles} as \( x\to0^+ \) to the corresponding Riemann-Hilbert problem seem to be much more promising in this regard, see \cite{GIKMO}.


\section{Main Results}

As standard in the theory of Painlev\'e transcendents, we parameterize the solutions by the monodromy data of the corresponding Riemann-Hilbert problem. Section~\ref{section:2} contains a detailed description of how this parameterization is achieved. At this point, it suffices to say that the monodromy data can be represented by a single parameter \( p \) that belongs to \( \overline\C\setminus\overline{\mathbb D} \), where \( \mathbb D \) is the open unit disk centered at the origin. Our main result is the following theorem.

\begin{thm}
\label{main thm}
To each finite \( p \), \( |p|>1 \), there corresponds one real solution of the sinh-Gordon Painlev\'e III equation \eqref{sinh-gordon PIII} on \( \R_{>0} \) and
        \begin{align}
            u(x) = 2 \, \ln \left(\frac{1 - \sin \left( x - \kappa\ln x + \phi \right)+\O(x^{-1})}{\left|\cos \left( x - \kappa \ln x + \phi \right) +\O(x^{-1}) \right|} \right), \label{asymp for B neq 0}
        \end{align}
as $x \to +\infty$, where the error term is uniform in \( x \),
\begin{equation}
\label{kappa phi}
\kappa:= \frac{1}{2\pi} \ln\big(|p|^2 -1\big), \quad \text{and} \quad \phi := -\kappa \ln 4 + \arg \left( p\,\Gamma\left(\frac12+i\kappa\right) \right).
\end{equation}
To \( p=\infty \) there corresponds a one-parameter family of real solutions of the sinh-Gordon Painlev\'e III equation \eqref{sinh-gordon PIII} on \( \R_{>0} \) such that 
\begin{align}
    u(x) = \frac{2s^\R}\pi K_0(x) + \O\left(\frac{e^{-2x}}{\sqrt{x}} \right),
    \label{exp decay}
\end{align}
as \( x\to+\infty \), where \( s^\R\in\R \) is the parametrizing (Stokes) parameter and \( K_0(x) \) is the modified Bessel function of the second kind.

\end{thm}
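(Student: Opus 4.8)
The plan is to establish Theorem~\ref{main thm} by first setting up the Riemann-Hilbert problem (RHP) associated to the sinh-Gordon Painlev\'e III equation with its monodromy data encoded by the single parameter $p$, and then performing a Deift-Zhou nonlinear steepest descent analysis as $x\to+\infty$. I would begin with the linear system (Lax pair) for \eqref{sinh-gordon PIII} and its associated Stokes matrices, showing that a real solution on $\R_{>0}$ corresponds precisely to the reality and symmetry constraints that force the monodromy data onto a single complex parameter $p$ with $|p|\geq 1$; the case $|p|=1$ should be excluded as it corresponds to degenerate (non-real or singular) behavior, leaving $|p|>1$ and $p=\infty$ as the two regimes in the statement. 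The exponential phase function governing the jumps is essentially $\theta(\lambda,x) = x(\lambda + 1/\lambda)$ (up to normalization), whose saddle points at $\lambda = \pm i$ (or $\pm 1$ after rotation) dictate the local parametrix structure.

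For the generic case $|p|>1$, I would carry out the standard steps: (i) normalize the RHP at infinity and zero; (ii) deform the jump contours onto the steepest-descent paths through the saddle points so that the oscillatory jumps become exponentially close to identity away from the saddles; (iii) introduce a $g$-function or $\delta$-function (a scalar RHP) to absorb the diagonal part of the jump and handle the $|p|^2-1$ factor, which is precisely where the logarithmic term $\kappa \ln x$ and the constant $\kappa \ln 4$ enter — the $\delta$-function involves Cauchy integrals of $\ln(|p|^2-1)$ over an arc, producing power-law behavior $x^{\pm i\kappa}$; (iv) build local parametrices at the two saddle points out of confluent hypergeometric (parabolic cylinder) functions, whose connection coefficients contribute the $\Gamma(\tfrac12 + i\kappa)$ factor in $\phi$ — this is the step that manufactures the precise constant in \eqref{kappa phi}; (v) match the local and global parametrices on small circles and solve the resulting small-norm RHP for the error, which gives the $\O(x^{-1})$ bounds. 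Finally, extracting $u(x)$ from the large-$\lambda$ expansion of the RHP solution and simplifying the resulting expression in terms of $\sin(x-\kappa\ln x+\phi)$ and $\cos(x-\kappa\ln x+\phi)$ yields \eqref{asymp for B neq 0}; the logarithm of the modulus of the cosine is the footprint of the movable logarithmic singularities (the denominator vanishes at an infinite sequence of points accumulating at $+\infty$), consistent with Novoksh\"enov's description.

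For the exceptional case $p=\infty$, the jump matrices on the relevant contour degenerate so that the RHP becomes solvable in terms of Bessel functions directly — the monodromy parameter $p$ being infinite means the Stokes multiplier $s$ that survives is a free real parameter $s^\R$, and the corresponding solution decays exponentially. Here I would linearize \eqref{sinh-gordon PIII} around $u\equiv 0$, noting that $u_{xx} + u_x/x - u = 0$ is the modified Bessel equation of order zero whose decaying solution is $K_0(x)$, and then control the nonlinear correction via a small-norm RHP argument (or a direct fixed-point argument on the integral equation), giving the $\O(e^{-2x}/\sqrt{x})$ remainder. The constant $2s^\R/\pi$ is pinned down by tracing $s^\R$ through the explicit Bessel parametrix.

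The main obstacle I anticipate is step (iv): constructing the local parametrices at the two coalescing-type saddle points and computing their connection matrices with enough precision to extract the exact constant $\phi = -\kappa\ln 4 + \arg(p\,\Gamma(\tfrac12+i\kappa))$ — the parabolic-cylinder/confluent-hypergeometric model problem must be solved and matched carefully, and keeping track of all the rotation, conjugation, and $\delta$-function conventions so that the $\Gamma$-function argument and the $\ln 4$ shift come out correctly (rather than off by a sign or a factor of $2$) is where the real bookkeeping lies. A secondary difficulty is rigorously justifying the uniformity of the error term in $x$ across the whole ray $\R_{>0}$, since the local analysis near the saddles and the matching estimates must be shown to hold with constants independent of $x$ as $x\to+\infty$, and one must be careful that the denominator $|\cos(\cdots)+\O(x^{-1})|$ in \eqref{asymp for B neq 0} is interpreted correctly near its zeros.
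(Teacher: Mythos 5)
Your outline captures the broad strategy (RHP parametrization, steepest descent with a diagonal model problem and parabolic-cylinder local parametrices), but there is a genuine gap at your step (v) that would stop the argument cold. After the local-global parametrix matching, the jump $G_R = P^{(gl)}[P^{(\pm1)}]^{-1}$ on the boundary circles $\partial U_{\pm1}$ is \emph{not} close to the identity: the leading correction is proportional to $\alpha\,x^{\nu-\frac12}$ where $\nu = \frac12 + i\kappa$, so $|x^{\nu-\frac12}| = |x^{i\kappa}| = 1$ and the matching error is $I + \O(1)$, not $I + \O(x^{-1/2})$. The reality constraints on the monodromy data force $\Re\nu = \frac12$ exactly, which is the critical case, so the naive small-norm step you propose simply fails. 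This is not a bookkeeping nuisance; it reflects the fact that the genuine solution has an infinite sequence of logarithmic singularities $x_n\to+\infty$, so no uniformly small error could possibly exist. The paper handles this by a dressing / pole-ansatz (after Bothner--Its): strip off the explicit non-decaying rank-one factors $E^{(\pm1)}(\lambda)$, obtaining a meromorphic RHP with first-order poles at $\lambda=\pm1$ whose residues are constrained linearly, then write the solution as $\tilde R(\lambda) = (\lambda I + B)\hat R(\lambda)\,\mathrm{diag}(\lambda-1,\lambda+1)^{-1}$ so that $\hat R$ solves a genuine small-norm problem. The matrix $B$, determined algebraically from the residue conditions, carries the leading singular asymptotics; the $\sin$ and $\cos$ in \eqref{asymp for B neq 0} are precisely the entries of $B$, and the zeros of the denominator $\cos(x-\kappa\ln x+\phi)$ are the exceptional values of $x$ where $B$ fails to be solvable. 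Without this step your plan produces no closed-form leading term and no control of the error near the singularities.

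Two secondary remarks. First, the phase the paper uses after the rescaling $\lambda\mapsto 4\lambda/x$ is $\varphi(\lambda) = \frac{i}{4}(\lambda+1/\lambda)$ with saddles at $\lambda=\pm1$, not $\theta=\frac{i}{4}(\lambda-1/\lambda)$; the sign flip (together with a $\sigma_1$ conjugation inside the unit circle) is what makes the jump on $S^1$ oscillatory rather than exponentially growing, so your "$\pm i$ after rotation" heuristic needs to be made precise before the lens opening can proceed. Second, for $p=\infty$ the paper does not linearize the ODE: the Bessel function $K_0(x)$ emerges directly from the Cauchy integral $\frac1{2\pi}\int_0^\infty e^{-\frac x2(y+1/y)}\,\frac{dy}{y} = \frac1\pi K_0(x)$ applied to the jump $G_Y - I$ on the imaginary axis, and the $\O(e^{-2x}/\sqrt{x})$ remainder follows from the standard $L^2$ bound on $\rho-I$. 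Your proposal to linearize \eqref{sinh-gordon PIII} around $u\equiv 0$ and recognize the modified Bessel equation is a nice sanity check but would require an a priori decay bound on $u$ that you have not established; your fallback "fixed-point argument on the integral equation" is essentially the paper's route and is cleaner because it never leaves the RHP framework.
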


\begin{rem}
Recall, see \cite[Equation (10.40.2)]{NIST:DLMF}, that for large \( x \) it holds that
\[
K_0(x) \sim \frac{e^{-x}}{\sqrt {2\pi}} \sum_{n=0}^\infty \frac{(-1)^n}{2^n} \frac{\Gamma^2(n+1/2)}{\Gamma(n+1)} \frac{1}{x^{n+1/2}}.
\]    
\end{rem}

\begin{rem}
Formula \eqref{asymp for B neq 0} implies that if we label the singularities of \( u(x)\) by \( x_n \) in the increasing order, then for all natural numbers \( n \) large enough it holds that
\begin{equation}
\label{asymptotic of x_n}
    x_n = \pi\left( n-\frac12 \right) + \kappa \ln(4\pi n) - \arg \left( p\,\Gamma\left(\frac12+i\kappa \right) \right) + \O\left( \frac{\ln n}n \right).
\end{equation}
\end{rem}

\begin{rem}
Denote by \( \Xi_\epsilon \) the complement of \( \cup_n (x_n-\epsilon,x_n+\epsilon) \) in \( \R_{>0} \). Then,
\begin{align*}
u(x) & = 2 \, \ln \left(\frac{1 - \sin \left( x - \kappa\ln x + \phi \right)}{\left|\cos \left( x - \kappa \ln x + \phi \right)\right|} \right) + \O_\epsilon(x^{-1})    \\
& = \ln \left(\frac{1 - \sin \left( x - \kappa\ln x + \phi \right)}{1 + \sin \left( x - \kappa\ln x + \phi \right)} \right) + \O_\epsilon(x^{-1}),
\end{align*}
where the error term is uniform on \( \Xi_\epsilon \) for all \( x \) large (i.e., locally uniformly in \( \Xi_0 \)).
\end{rem}

\begin{rem}
    As mentioned in the introduction, the location of poles of \( w \), the corresponding solution of   \eqref{PIIID6}, has been investigated in \cite{Novok4} when \(|\Im(p)|<1\). Novoksh\"enov considered \eqref{sinh-gordon PIII} with \( \sinh u \) replaced with \( 4 \sinh u \), in which case the solution is given by \( u(2x) \). Moreover, he was only interested in the singularities if \( u \) that lead to the poles of \( w \) (and not zeros), i.e., in \( x_{2n} \). These points were denote by \( a_n \) in \cite[Equation (0.9)]{Novok4} (this formula has a typo, \( \pi(n-\frac12) \) needs to be replaced by \( \pi (n+\frac14) \) as evident from \cite[Equations (1.1) and (4.9)]{Novok4}). Then one can readily verify that \( a_n \) and \( \frac12 x_{2n} \) have exactly the same asymptotic behavior upon noticing that \( r,s \) in \cite[Equation (0.2)]{Novok4} are equal to \( \gamma,\rho_p+\gamma\ln2 \), see Theorem~\ref{Niles result} further below, and \( A,B \) from \cite[Equation (0.8)]{Novok4} can be expressed as
    \[
    A = \frac\pi{\sqrt{1-\Im(p)^2}} \frac{\sqrt{|p^2|-1}}{\Re(p)+\sqrt{1-\Im(p)^2}} \quad \text{and} \quad AB = \frac{\pi^2}{1-\Im(p)^2}.
    \]
\end{rem}

Due to \cite{Niles}, one can describe the asymptotic behavior near $0$ of the real solutions of \eqref{sinh-gordon PIII} using the above parameter $p$, which provides the connection formulae between asymptotics near infinity and the origin for each real solution.

\begin{thm}\label{Niles result}
When \( p=\infty\), the corresponding real solutions of the sinh-Gordon Painlev\'e III equation \eqref{sinh-gordon PIII} parametrized by the Stokes parameter \( s^\R \) as in Theorem~\ref{main thm} admit the following asymptotic behavior on \( \R_{>0} \) near the origin:
\begin{equation}
\label{asymptotic near zero}
 u(x) = \begin{cases}
     \displaystyle \gamma \ln x + \rho_\infty + \O\big(x^{2 - |\gamma|}\big), & |s^\R|<2, \bigskip \\
     \displaystyle 2\sigma \ln \left[-\frac{x}{2} \left(\ln \frac{x}{8} + \delta_\infty \right)\right] + \O\left(\frac x{\ln x}\right), & |s^\R| = 2, \bigskip \\
     \displaystyle 2 \sigma \ln \left| \frac{x}{4t} \sin \left( 2t \ln \frac{x}{8} + \Delta_\infty \right) \right| + \O(x^2), & |s^\R|>2,
 \end{cases}
 \end{equation}
 where the error terms hold as \( x\to 0^+ \) and, in the third case, it is locally uniform in the complement of \( \cup_{n\geq 0} 8e^{-\frac{\pi n+\Delta_\infty}{2t}}\) (asymptotic locations of the singularities),
 \[
 \gamma := - \frac{4}{\pi} \arcsin \left( \frac{s^{\R}}{2} \right), \quad t := \frac{1}{\pi} \arccosh \left( \frac{|s^{\R}|}{2} \right), \quad \sigma := - \mathrm{sgn}(s^{\R}),
 \]
 \( \delta_\infty \) is the Euler's constant, and
 \[
 \rho_\infty := -3 \gamma \ln 2 + \ln\left[ \frac{ \Gamma^2 \left(\frac{1}{2} - \frac{\gamma}{4} \right)}{\Gamma^{2} \left(\frac{1}{2} + \frac{\gamma}{4} \right)}\right], \quad \Delta_\infty := - 2 \arg \Gamma(1 + it).
 \]
For finite $p$, \( |p|>1 \), the corresponding real solution of the sinh-Gordon Painlev\'e III equation \eqref{sinh-gordon PIII} admits asymptotic behavior on \( \R_{>0} \) near the origin as in \eqref{asymptotic near zero}, this time depending on the value of \( s^\R := -2\Im(p) \) and with constants \( \rho_\infty,\delta_\infty,\Delta_\infty\) replaced by
\[
\rho_p := \rho_\infty + \ln \left[ \frac{|p|^2-1}{q_p^2} \right], \quad \delta_p := \delta_\infty + \frac{\sigma\pi}{2q_p}, \quad \Delta_p := \Delta_\infty + \sigma\arg(q_p),
\]
where \( \displaystyle q_p := \Re(p) + \sqrt{1-\Im(p)^2} \in\{z:\Im(z)\geq0\} \).
\end{thm}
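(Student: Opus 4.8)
The backbone of the argument is the small-parameter steepest-descent analysis of the Riemann--Hilbert problem associated with \eqref{sinh-gordon PIII} carried out in \cite{Niles}; the work to be done is to run that analysis on the specific problem of Section~\ref{section:2}, whose data is encoded by $p$, and to extract the constants. The parameter $p$ enters the jump matrices of that problem through exactly two pieces of monodromy data: a Stokes-type multiplier, which reduces to the real Stokes parameter $s^\R$, and a connection coefficient. When $p=\infty$ the second datum trivialises and we are in the situation treated directly in \cite{Niles}; the first step is then simply to verify that the function $u(x)$ read off from our normalisation coincides with the one analysed there and to transcribe the resulting three-regime expansion, which yields \eqref{asymptotic near zero} with all constants bearing the subscript $\infty$. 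For finite $p$ the plan is to carry the extra connection datum through the same analysis and track how it perturbs each constant.

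For the small-$x$ asymptotics proper the steps are the standard ones. First I would rescale the spectral variable so that the $x$-dependent exponentials in the jumps become $\O(1)$ on a fixed contour; the resulting model problem, as $x\to0^+$, has at its heart a confluent-hypergeometric (equivalently, Bessel-type) parametrix at the origin of the $\lambda$-plane whose characteristic exponents are $\pm\gamma/4$ in the subcritical regime. These exponents are real and distinct when $|s^\R|<2$, they differ by an integer precisely when $|s^\R|=2$, and they acquire an imaginary part proportional to $t$ when $|s^\R|>2$ --- this is exactly the trichotomy $|s^\R|\lessgtr 2$ of the theorem, and it forces three genuinely different model problems. Next, open lenses, build a global parametrix away from the distinguished points, glue in the local parametrix, and reduce to a small-norm error problem; reading off the relevant off-diagonal entries (or the logarithmic derivative) of the leading solution produces $u(x)$. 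The ratio of Gamma functions in $\rho_\infty$ and the constant $\Delta_\infty=-2\arg\Gamma(1+it)$ are precisely the values of the model solution on its matching circle; the Euler constant $\delta_\infty$ in the critical case arises from the logarithmic degeneration of that model when its two exponents collide, which simultaneously explains the double logarithm $\ln\big[-\tfrac x2\big(\ln\tfrac x8+\delta_\infty\big)\big]$ and the weaker $\O(x/\ln x)$ error; and the oscillation $\sin(2t\ln\tfrac x8+\Delta_\infty)$ in the supercritical case is the imaginary exponent $x^{\pm 2it}$ made visible.

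To pass to finite $p$, observe that the extra connection datum enters the transformed problem as a constant right factor supported near one distinguished point, so it only modifies the local parametrix there by an explicit constant multiplication/conjugation. Propagating this through the matching produces the shifts $\rho_p-\rho_\infty=\ln\big[(|p|^2-1)/q_p^2\big]$, $\delta_p-\delta_\infty=\sigma\pi/(2q_p)$ and $\Delta_p-\Delta_\infty=\sigma\arg(q_p)$. The combination $q_p=\Re(p)+\sqrt{1-\Im(p)^2}$ is the natural one here because $\Im(p)=-s^\R/2$ forces $\sqrt{1-\Im(p)^2}$ to equal $\cos(\pi\gamma/4)$ on the principal branch, so $q_p=\Re(p)+\cos(\pi\gamma/4)$ is exactly what the endpoint and connection data assemble into; one also checks that the branch conventions ($q_p$ in the closed upper half-plane, $\sigma=-\operatorname{sgn}(s^\R)$) stay consistent across the three regimes and degenerate correctly at $|s^\R|=2$.

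The main obstacle is twofold. First, the bookkeeping: every Gamma-function factor and every additive constant in the exponents must be tracked through the chain of explicit transformations, and a stray sign or factor of two anywhere --- in the identification $s^\R=-2\Im(p)$, in the branch of $q_p$, or in the normalisation matching our Riemann--Hilbert problem to that of \cite{Niles} --- would corrupt the connection formulae; this comparison of conventions is the least glamorous but most error-prone part. Second, the critical case $|s^\R|=2$ is genuinely delicate: one cannot specialise the generic formulas, since the model parametrix degenerates, so the double-logarithm asymptotics and the $\O(x/\ln x)$ error demand a separate coalescence analysis of the local problem as its two characteristic exponents merge.
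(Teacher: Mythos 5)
Your proposal takes a genuinely different and considerably harder route than the paper, and it rests on a misreading of what is already available in \cite{Niles}. The paper does not re-run any small-$x$ steepest-descent analysis. Instead it establishes an explicit gauge equivalence \eqref{phi-hat to psi-hat} between the sinh-Gordon RHP~\ref{rhp original} and the sine-Gordon RHP~\ref{rhp niles}, namely $\hat\Psi(\lambda)=\sigma_1\hat\Phi(\lambda)\sigma_1$ for $|\lambda|>\rho$ and $\hat\Psi(\lambda)=i\sigma_1\hat\Phi(\lambda)$ for $|\lambda|<\rho$, together with $u=i\pi-iv$, and then computes how the sinh-Gordon monodromy $(\iota,p)$ translates into the sine-Gordon data $(\varsigma,p_{\sin},q_{\sin})$: $p=\infty$ corresponds to Niles' \emph{special} connection matrix, and $1<|p|<\infty$ to his \emph{general} one with $p_{\sin}=\bar p$, $q_{\sin}=-p$. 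Theorem~\ref{Niles result} is then a direct citation of \cite[Theorems~7--9]{Niles}, followed by elementary Gamma-function algebra in Appendix~\ref{connection P3} to convert Niles' constants into $\rho_p,\delta_p,\Delta_p$. Your sketch misses the gauge-transformation step entirely (the phrase about ``simply verifying that the function $u(x)$ coincides'' conceals the nontrivial matrix identity that makes the transfer rigorous), and, more substantively, you appear to assume Niles only treats the $p=\infty$ case, proposing to re-derive the finite-$p$ case as a perturbation of a local parametrix. That is unnecessary: Niles' theorems already parameterise the small-$x$ asymptotics by the full connection data, so the finite-$p$ case is handled by the same citation, not by new analysis.

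Beyond efficiency, your plan for finite $p$ also has a gap. Your claim that the connection datum ``enters the transformed problem as a constant right factor supported near one distinguished point, so it only modifies the local parametrix there'' is not obviously correct: the connection matrix $E$ (equivalently $E_{\sin}$) is a jump on the full circle separating the $\lambda\to 0$ and $\lambda\to\infty$ regions, and establishing that its effect reduces to a local multiplicative correction would itself require an argument. The paper avoids this issue by not attempting a perturbative treatment at all. Your physical picture of the trichotomy (exponents $\pm\gamma/4$, integer coalescence at $|s^\R|=2$, imaginary part $\propto t$ for $|s^\R|>2$) and of where the Gamma factors, Euler's constant, and $\arg\Gamma(1+it)$ arise is sound as a description of \emph{why} Niles' theorems take their form, but it is not the proof the paper gives.
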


\begin{rem}
As shown by \cite{GIL2,GIL4}, the case $p = \infty$ and $s^{\R} \in [-2, 2]$ corresponds to the globally smooth solutions on \( \R_{>0} \). Theorems~~\ref{main thm} and~\ref{Niles result} show that these solutions admit exponentially decaying behavior \eqref{exp decay} near $x = \infty$ and logarithmic behavior near $x = 0$ as in the first two lines of \eqref{asymptotic near zero}.
\end{rem}

\begin{rem}
Theorems~~\ref{main thm} and~\ref{Niles result} provide a connection between the behavior at infinity and around the origin along \( \R_{>0}\) of the corresponding solutions Painlev\'e III equation \eqref{PIIID6}, but only up to a sign since \( w(x) = \pm e^{\pm u(x)/2}\) and \( u(x)/2 \) is defined up to addition of integer multiples of \( \pi i \). However, the proofs of these theorems allow us to determine the sign uniquely, which we do in Appendix~\ref{connection P3}.
\end{rem}

\begin{rem}
Recent paper \cite[Section 4.1]{YuqiLi} conjectures, based on numerical analysis, singular asymptotics near the origin of exponentially decaying near infinity solutions \( w_0(r),w_1(r)\) of the tt*-Toda equation of type $A_3$:
\begin{align}
\begin{dcases}
    (w_0)_{rr} + \frac1r (w_0)_r = -2 e^{2(w_1 - w_0)} + 2 e^{4 w_0}, \\
    (w_1)_{rr} + \frac1r (w_1)_r = -2 e^{-4 w_1} + 2 e^{2(w_1 - w_0)},
\end{dcases} \label{tt*-toda A3}
\end{align}
where $w_0,w_1$ are real-valued functions of \( r \) on $\R_{>0}$. Clearly, a further assumption $w_0 + w_1 = 0$ reduces \eqref{tt*-toda A3} to \eqref{tt*-Toda n=1}. Theorem \ref{Niles result} proves the conjecture in this degenerate case, see Appendix~\ref{conj by Li} for details.
\end{rem}

As standard in the theory of Painlev\'e transcendents, we view \eqref{sinh-gordon PIII} as a compatibility condition of an overdetermined system of linear differential equations, a Lax pair. A Lax pair for the sinh-Gordon equation was first found by Flaschka-Newell \cite{FN}, see also \cite[Chpater 15]{FIKN} (the Lax pairs in these references lead to the one used below after minor changes of variables). Let a matrix function $\Psi: \C^* \times \C \to \mathrm{Mat}_{2}(\C)$, $(\lambda, x) \mapsto \Psi(\lambda, x)$, be such that
\begin{align}
\begin{dcases}
    \frac{d \Psi}{d \lambda} = \left( \frac{i x^2}{16} \sigma_3 + \frac{x u_x}{4 \lambda} \sigma_1 + \frac{i}{\lambda^2}Q \right) \Psi,\\
    \frac{d \Psi}{d x} = \left( \frac{u_x}{2} \sigma_1 + \frac{i x \lambda}{8} \sigma_3 \right) \Psi, 
\end{dcases}
    \label{Lax pair sinh-Gordob PIII}
\end{align}
where
\begin{equation}
    \label{defn of Q}
    Q = \begin{pmatrix}
        \cosh u & - \sinh u\\
        \sinh u & - \cosh u
    \end{pmatrix}.
\end{equation}
Then one can readily verify that the compatibility condition $\Psi_{\lambda x} = \Psi_{x \lambda}$ is equivalent to the sinh-Gordon Painlev\'e III equation.

\begin{rem}
As pointed out in the introduction, when \( x\in\R_{>0}\), equation \eqref{sinh-gordon PIII} could be seen as the tt*-Toda equation of type $A_1$, which admits a Lax pair representation originally discovered by Mikhailov in \cite{Mik}, see also \cite[Equations (1.3)-(1.4)]{GIL2},
\begin{align}
    \begin{dcases}
    \Tilde{\Psi}_{\zeta} = \left(- \frac{1}{\zeta^2} W -\frac{r(w_0)_{r}}{\zeta}\sigma_3  + r^2 W^{T} \right) \Tilde{\Psi}, \\
    \Tilde{\Psi}_{r} = \left( -(w_0)_{r}\,\sigma_3 + 2 r \zeta W^T \right) \Tilde{\Psi},
\end{dcases} \label{Lax pair for tt*}
\end{align}
where $\Tilde{\Psi}$ maps $(\zeta, r) \in \C^* \times \R_{>0}$ to $\Tilde{\Psi}(\zeta, r) \in \mathrm{Mat}_2 (\C)$ and  
\begin{align}
    W = \begin{pmatrix}
        0 & e^{-2w_0}\\
        e^{2w_0} & 0
    \end{pmatrix}.
\end{align}
Under a change of variables 
\begin{align}
    w_0(r) = \frac{1}{4} u(x), \quad r = \frac{1}{4} x, \quad \zeta = i \lambda,
\end{align}
the gauge transformation
\begin{align}
    \Psi(\lambda, x) = \frac{1}{\sqrt{2}}
    \begin{pmatrix}
        e^{-w_0} & e^{w_0}\\
        -e^{-w_0} & e^{w_0}\\
    \end{pmatrix} \tilde{\Psi}(\zeta, \Tilde{x})
\end{align}
defines a bijection between solutions of the Lax pair \eqref{Lax pair for tt*} for the tt*-Toda equation of type $A_1$ and solutions of \eqref{Lax pair sinh-Gordob PIII}, the Lax pair for the sinh-Gordon equation.
\end{rem}

\begin{rem}
Another important Lax pair for us is the Lax pair for the sine-Gordon equation \eqref{sine-gordon PIII} that can be found in \cite[Chapter 13]{FIKN} and \cite{Niles}:
\begin{align}
\begin{dcases}
    \frac{d \Phi}{d \lambda} = \left(-\frac{i x^2}{16} \sigma_3 - \frac{i x v_x}{4 \lambda} \sigma_1 + \frac{i}{\lambda^2} \begin{pmatrix}
        \cos v & i \sin v\\
        -i \sin v & - \cos v
    \end{pmatrix} \right) \Phi,\\
    \frac{d \Phi}{d x} = \left( - \frac{i x \lambda}{8} \sigma_3 - \frac{i v_x}{2} \sigma_1 \right) \Phi .\label{Niles pair}
\end{dcases}
\end{align}
One can readily check that the change of variables $u(x) = i \pi - i v(x)$ and gauge transformations $\Psi (\lambda) = i \sigma_1 \Phi (\lambda)$ or $\Psi (\lambda) = \sigma_1 \Phi (\lambda) \sigma_1$ provide correspondence between \eqref{Niles pair} and \eqref{Lax pair sinh-Gordob PIII}.    
\end{rem}


Acknowledgments: The first author was partially supported by NSF grant DMS: 1955265, by RSF grant No. 22-11-00070, and by a Visiting Wolfson Research Fellowship from the Royal Society. The second author was partially supported by a scholarship from the Japan Student Services Organization and the Hokushin Scholarship Foundation. The third author was partially supported by a grant from the Simons Foundation, CGM-706591.

\section{Proof of Theorem~\ref{Niles result}}
\label{section:2}

In this section, we show how to formulate a Riemann-Hilbert problem for a solution of \eqref{Lax pair sinh-Gordob PIII}. Then we state an analogous Riemann-Hilbert problem for a solution of \eqref{Niles pair} and show how to connect the monodromy data between these two problems. Theorem~\ref{Niles result} is then automatically obtained from \cite[Theorems~7--9]{Niles}.

\subsection{Direct Monodromy Problem for sinh-Gordon Equation}

Consider the first equation of the Lax pair \eqref{Lax pair sinh-Gordob PIII}, that is,
\begin{align}
    \frac{d \Psi}{d \lambda} = A(\lambda, x) \Psi, \quad A(\lambda,x) := \frac{i x^2}{16} \sigma_3 + \frac{x u_x}{4 \lambda} \sigma_1 + \frac{i}{\lambda^2}Q, \label{first eq}
\end{align}
where \( Q \) is given in \eqref{defn of Q}. For future use observe that the coefficient matrix $A(\lambda, x)$ satisfies the following relations:
\begin{itemize}
    \item anti-symmetry: $\sigma_3 A(\lambda,x) \sigma_3 = A^T(- \lambda,x)$;
    \vspace{.1cm}
    \item inversion symmetry: $P_0 \sigma_3 A\big(-\frac{16}{\lambda x^2},x\big) \sigma_3 P_0^{-1} = \frac{\lambda^2 x^2}{16} A(\lambda,x)$;
    \vspace{.1cm}
    \item cyclic symmetry: $\sigma_1 A(\lambda,x) \sigma_1 = - A(- \lambda,x)$;
    \vspace{.1cm}
    \item reality: $\overline{A(\overline{\lambda},x)} = \sigma_1 A(\lambda,x) \sigma_1$.
\end{itemize}

We note that equation \eqref{first eq} has two singular points $\lambda = 0, \infty$, which are both irregular, and that the coefficients of $A(\lambda, x)$ near the leading order singularities are diagonalizable. Indeed, near infinity (after changing to a local variable \( \xi \), where \( \lambda = 1/\xi \)) it is \( -(ix^2/16)\sigma_3 \), which is already diagonal, and near zero it is
\begin{align}
    iQ = P_0 \, i\sigma_3 \, P_0^{-1}, \quad 
    P_0 = \begin{pmatrix}
        \cosh \frac{u}{2} & \sinh \frac{u}{2}\\
        \sinh \frac{u}{2} & \cosh \frac{u}{2}
    \end{pmatrix}.
    \label{defn of P0}
\end{align}
Thus, see \cite[Prop. 1.1]{FIKN}, {\it formal fundamental solutions} of \eqref{first eq} at $\lambda = 0, \infty$ must be of the form
\begin{align}
\begin{aligned}
    &\Psi_{f}^{(0)} (\lambda) = P_0 (I + \O(\lambda)) e^{-\frac{i}{\lambda} \sigma_3},\\
    &\Psi_{f}^{(\infty)} (\lambda) = (I + \O(1/\lambda)) e^{\frac{i x^2 \lambda}{16} \sigma_3}.
\end{aligned} \label{formal solution}
\end{align}

Formal fundamental solutions \eqref{formal solution} determine {\it canonical solutions} (genuine solutions that possess asymptotic behavior \eqref{formal solution} in the corresponding Stokes sectors) near $\lambda = 0, \infty$. This principle is explained in detail in Chapter 2 of \cite{FIKN}. We give only a brief list of facts here.
\begin{itemize}
\item Stokes sectors at $\zeta = 0$:
\begin{align}
\begin{aligned}
    &\Omega_{2k+1}^{(0)}(\lambda) = \left\{ \lambda \in \C^* \,|\, -(2k+1)\pi < \arg \lambda < -(2k-1)\pi \right\}, \\
    &\Omega_{2k}^{(0)}(\lambda) = \left\{ \lambda \in \C^* \,|\, -2k\pi < \arg \lambda < -2(k-1)\pi \right\}.
\end{aligned}
\end{align}
\item Stokes sectors at $\zeta = \infty$:
\begin{align}
\begin{aligned}
    &\Omega_{2k+1}^{(\infty)}(\lambda) = \left\{ \lambda \in \C^* \,|\, (2k-1)\pi < \arg \lambda < (2k+1)\pi \right\}, \\
    &\Omega_{2k}^{(\infty)}(\lambda) = \left\{ \lambda \in \C^* \,|\, 2(k-1)\pi < \arg \lambda < 2k \pi \right\}.
\end{aligned}
\end{align}
\end{itemize}
It is important for us to keep track of the argument of \( \lambda \) as we think of Stokes sectors as subsets of the universal covering of \( \C^* \). Notice that $\Omega_n^{(0)}(\lambda) = \Omega_n^{(\infty)}(\lambda^{-1})$ for $n \in \Z$. In each Stokes sector, there is a unique (canonical) solution  that satisfies
\begin{align}
\begin{aligned}
    &\Psi_n^{(0)}(\lambda) \sim \Psi_{f}^{(0)} (\lambda), \quad \lambda \to 0, \quad \lambda \in \Omega_n^{(0)}, \\
    &\Psi_n^{(\infty)}(\lambda) \sim \Psi_{f}^{(\infty)} (\lambda), \quad \lambda \to \infty, \quad \lambda \in \Omega_n^{(\infty)}.
\end{aligned}
\end{align}

Every solution of \eqref{first eq} and particularly all of the canonical solutions are analytically continued on the whole universal covering of $\C^*$. Hence, it holds that
\begin{align}
\label{riemann surface}
    \Psi_{n}^{(0)}(\lambda) = \Psi_{n-2}^{(0)}(\lambda e^{2\pi i}), \quad
    \Psi_{n-2}^{(\infty)}(\lambda) = \Psi_{n}^{(\infty)}(\lambda e^{ 2\pi i}),
\end{align}
where \( \Psi(\lambda e^{2\pi i}) \) is the short-hand notation for the analytic continuation of \( \Psi(\lambda) \) to the same point \( \lambda \) along the path that winds once around the origin in the counter-clockwise direction. 

Because any two solutions are connected via multiplication by a constant matrix on the right, there exists a connection matrix $E$ such that
\begin{align}
\label{connection matrix}
    \Psi_1^{(\infty)}(\lambda) = \Psi_1^{(0)}(\lambda) E.
\end{align}
Using cyclic symmetry of \( A(\lambda,x)\) one can readily verify that \( \sigma_1 \Psi(-\lambda) \sigma_1\) is a canonical solution if \( \Psi(\lambda)\) is a canonical solution. In particular, it holds that
\begin{align}
    \Psi_1^{(0)}(\lambda) = \sigma_1 \Psi_2^{(0)}(\lambda e^{-i\pi}) \sigma_1 \quad \text{and} \quad \Psi_1^{(\infty)}(\lambda) = \sigma_1 \Psi_2^{(\infty)}(\lambda e^{i\pi}) \sigma_1.
    \label{symmetry in terms of canonical soln}
\end{align}
Combining the above observations with \eqref{connection matrix}, we get that
\begin{align}
\label{connection matrix 2}
    \Psi_2^{(\infty)}(\lambda) = \Psi_2^{(0)}(\lambda e^{-2\pi i}) \sigma_1 E \sigma_1.
\end{align}

We define Stokes matrices as the ones connecting adjacent canonical solutions:
\begin{align}
    S_n^{(0, \infty)} = \left[ \Psi_n^{(0, \infty)}(\lambda) \right]^{-1} \Psi_{n+1}^{(0, \infty)}(\lambda), \label{defn of stokes mat}
\end{align}
where \( \lambda \) belongs to the upper half-plane if \( n \) is odd and to the lower half-plane if \( n \) is even (again, these matrices are necessarily constant). Due to the analytic continuation property, it holds that
\[
S_{2k+1}^{(0, \infty)} = S_1^{(0, \infty)} \quad \text{and} \quad S_{2k}^{(0, \infty)} = S_2^{(0, \infty)}.
\]
Taking $\lambda \to 0, \infty$ in \eqref{defn of stokes mat} for $\lambda$ in the appropriate Stokes sector and using the series expression of each canonical solution \eqref{formal solution} at those singularities, one can readily establish triangularity of the Stokes matrices and the fact that they have 1's on the main diagonal. Moreover, by using the symmetries of $A(\lambda, x)$ as we did before \eqref{connection matrix 2}, one gets that Stokes matrices must satisfy
\begin{itemize}
    \item anti-symmetry: $S_1^{(\infty)} = \sigma_3 \left[S_2^{(\infty)}\right]^{T-1} \sigma_3$;
    \vspace{.1cm}
    \item inversion symmetry: $S_n^{(0)} = \sigma_3 S_{n+1}^{(\infty)} \sigma_3, \quad n \in \Z$;
    \vspace{.1cm}
    \item cyclic symmetry: $S_1^{(\infty)} = \sigma_1 S_2^{(\infty)} \sigma_1$;
    \vspace{.1cm}
    \item reality: $\overline{S_1^{(\infty)}} = \sigma_1 \left[ S_2^{(\infty)} \right]^{-1} \sigma_1$;
\end{itemize}
 and therefore they depend on a single real parameter (Stokes parameter) $s^{\R} \in \R$:
\begin{align}
\begin{aligned}
    &S^{(\infty)}_1 = \left[ S_2^{(0)}\right]^{-1} = \begin{pmatrix}
        1 & a\\
        0 & 1
    \end{pmatrix}, \quad S^{(\infty)}_2 = \left[S_1^{(0)}\right]^{-1} = \begin{pmatrix}
        1 & 0\\
        a & 1
    \end{pmatrix},\\
    &a = i s^{\R} \in i \R.
\end{aligned} \label{Stokes matrices}
\end{align}
Using the same symmetries once more, one also gets that
\begin{align}
\sigma_1 E \sigma_1 = S_2^{(0)} E S_1^{(\infty)} = \overline E.
\label{relations for E}
\end{align}
Because \( \det E =1\), relations \eqref{relations for E} show that
\begin{align}
    E = \begin{pmatrix}
        A & B^{\R}\\
        B^{\R} & \overline{A}
    \end{pmatrix}, \quad |A|^2 = 1 + (B^\R)^2, \quad A - \overline{A} = a B^\R, \quad B^\R\in\R. \label{E parametrization}
\end{align}
If we write $a = i s^{\R}$ and $A = x^{\R} + i y^{\R}$ where $x^{\R}, y^{\R} \in \R$, expression \eqref{E parametrization} implies that $E$ is parametrized by triples $(x^{\R}, s^{\R}, B^{\R})$ such that
\begin{align}
    (x^{\R})^2 + \left( \frac{(s^{\R})^2}{4} - 1\right)(B^{\R})^2 = 1 \label{monodromy surface},
\end{align}
see Figure~\ref{monodromy surface pic}. We shall call the set of the above triples the monodromy surface 
\begin{align}
    \mathcal{M} = \left\{ m \equiv (x^{\R}, s^{\R}, B^{\R}) \,|\, \text{$m$ satisfies }\eqref{monodromy surface} \right\}.
\end{align}
We refer those who are interested in the geometry of $\mathcal{M}$ in detail to the monograph \cite{GueHert}.
\begin{figure}[htbp]
    \centering
    \includegraphics[scale = .5]{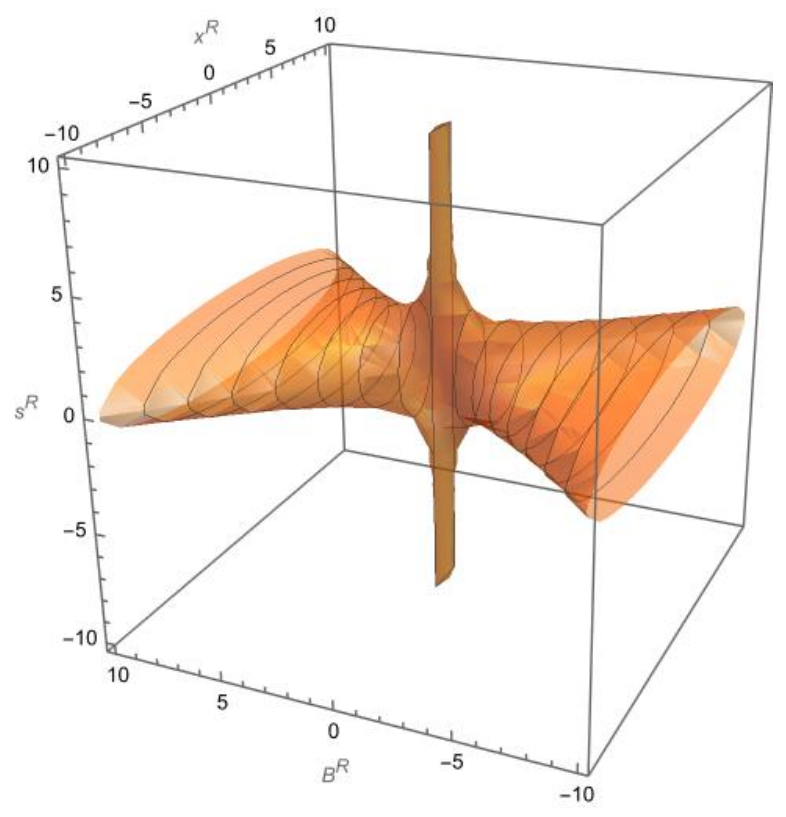}
    \caption{Monodromy surface $\mathcal{M}$ for the sinh-Gordon Painlev\'e III equation.}
    \label{monodromy surface pic}
\end{figure}
What we have shown is that one can always associate the monodromy data $m \in \mathcal{M}$ to each solution $u(x)$ of \eqref{sinh-gordon PIII}.

Another way to parameterize the monodromy data is as follows. Let \( p\in\overline\C\setminus\overline{\mathbb D}\) and \( \iota\in\{\pm1\} \). Set
\begin{equation}
\label{p parametrization}
\begin{cases}
\displaystyle s^\R = -2\Im(p), \quad B^\R = \frac{\iota}{\sqrt{|p|^2-1}}, \quad A = \overline p B^\R, & p\neq\infty, \smallskip \\
s^\R \text{ is any}, \quad B^\R =0, \quad A = \iota, & p=\infty. 
\end{cases}
\end{equation}
It can be readily verified that conditions \eqref{E parametrization} and \eqref{monodromy surface} are indeed fulfilled. Conversely, given a triple \((x^\R,s^\R,B^\R)\), we can set \( p= \overline A/B^\R \), \( \iota = \sign(B^\R)\) when \( B^\R \) is non-zero and \( p=\infty \), \( \iota = A \) otherwise.

\subsection{Riemann-Hilbert Problem for sinh-Gordon Equation}\label{2.2}

Riemann-Hilbert problems arise when one considers the inverse monodromy map
\begin{align*}
    \mathcal{M} \ni m \mapsto u(x, m).
\end{align*}
Given any \( \rho>0 \),  define a sectionally holomorphic in \( \C^* \) matrix function
\begin{equation}
\label{define Psi hat}
    \begin{cases}
    \Psi_1^{(0)}(\lambda), & \arg(\lambda)\in(-\frac\pi2,\frac\pi2),~~0<|\lambda|<\rho, \\
    \Psi_2^{(0)}(\lambda), & \arg(\lambda)\in(-\frac{3\pi}2,-\frac\pi2),~~0<|\lambda|<\rho, \\
    \Psi_1^{(\infty)}(\lambda), & \arg(\lambda)\in(-\frac\pi2,\frac\pi2),~~|\lambda|>\rho, \\
    \Psi_2^{(\infty)}(\lambda), & \arg(\lambda)\in(\frac\pi2,\frac{3\pi}2),~~|\lambda|>\rho.
\end{cases}
\end{equation}
It readily follows from \eqref{connection matrix}--\eqref{defn of stokes mat} that \eqref{define Psi hat} has constant jumps on the circle $S_\rho^1:=\{|\lambda|=\rho\}$ and the imaginary axis as specified on Figure~\ref{Psi hat problem pic} (here, one needs to recall \eqref{riemann surface} to find the jumps on \((0,i\rho)\) and \((-i\infty,-i\rho)\)) as well as asymptotic behavior at $\lambda = 0, \infty$ given by the formal solutions $\Psi_f^{(0, \infty)}$. By construction, the jump matrices depend only on the monodromy data.

Conversely, let \( m\in\mathcal M\) be monodromy data and \( S_{1,2}^{(0,\infty)}\) and \( E \) be as in \eqref{Stokes matrices} and \eqref{relations for E} corresponding to \( m \). Consider the following Riemann-Hilbert problem.

\begin{RHP} \label{rhp original}
Find a $2 \times 2$ matrix function $\hat{\Psi}(\lambda)$ such that
\begin{enumerate}
    \item \text{sectional analyticity}: $\hat{\Psi}(\lambda)$ is analytic for $\lambda \in \C \setminus \Gamma$, where the oriented contour $\Gamma$ is depicted in Figure \ref{Psi hat problem pic};
    \item \text{jump condition}: one-sides traces $\hat{\Psi}_{\pm}(\lambda)$ defined by
    \begin{align*}
        \hat{\Psi}_{\pm}(\lambda) := \lim_{\{ \pm \text{ side of } \Gamma\} \ni \lambda' \to \lambda \in \Gamma } \hat{\Psi}(\lambda'),
    \end{align*}
    where a subscript $+$ (resp. $-$) refers to a limit to the oriented contour from its left (resp. right) side,    
    exists a.e. on \( \Gamma \), belong to \( L^2(\Gamma) \), and satisfy
    \begin{align*}
        \hat{\Psi}_+(\lambda) = \hat{\Psi}_-(\lambda) G_{\hat{\Psi}}, \quad \lambda \in \Gamma,
    \end{align*}
    where the jump matrices $G_{\hat{\Psi}}$ on \( \Gamma \) are as on Figure \ref{Psi hat problem pic};
    \item \text{normalization conditions}: it holds that
    \begin{align}
    \label{Psi hat as lambda -> 0}
        \hat{\Psi}(\lambda) = \begin{cases}
            (I + \O(1/\lambda)) e^{\frac{i x^2 \lambda}{16} \sigma_3} &  \text{as} \quad \lambda\to\infty, \smallskip \\
            P_0 (I + \O(\lambda)) e^{-\frac{i}{\lambda} \sigma_3} & \text{as} \quad \lambda\to0.
        \end{cases}
    \end{align}
\end{enumerate}
\end{RHP}
\begin{figure}[htbp]
    \centering
    \includegraphics[width=7cm]{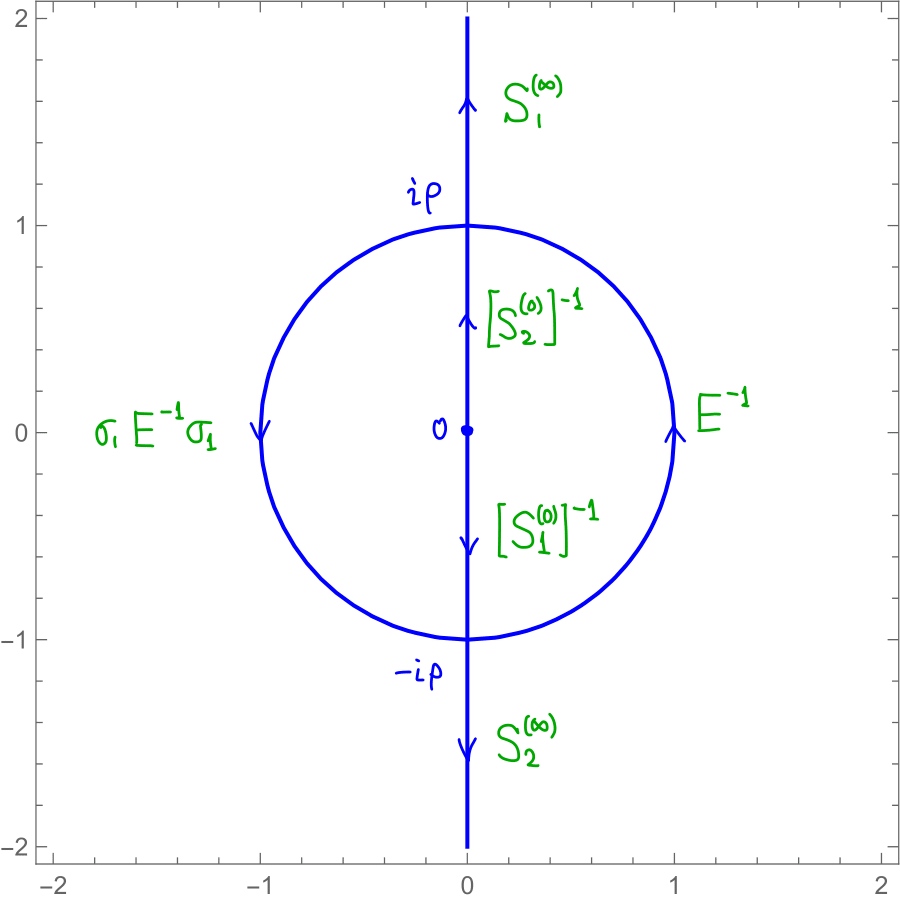}
    \caption{The contour $\Gamma$ and the jump matrices $G_{\hat{\Psi}}$ for RHP~\ref{rhp original}.}
    \label{Psi hat problem pic}
\end{figure}

It is a non-trivial but standard result in integrable systems, see for example \cite[Theorem~5.4]{FIKN} and \cite{ItsNiles}, that the solution of the above problem exists and is meromorphic in \( x \), determines solutions of \eqref{Lax pair sinh-Gordob PIII} via \eqref{define Psi hat}, and allows one to recover the corresponding solution of \eqref{sinh-gordon PIII} via asymptotics around the origin. The proof of Theorem~\ref{main thm} proceeds by asymptotically analyzing RHP~\ref{rhp original} for large \( x \).

\begin{rem}
The choice of \( \iota \), see \eqref{p parametrization}, does not affect real solutions of \eqref{sinh-gordon PIII}. Indeed, let \( u_+(x) \) be the solution of \eqref{sinh-gordon PIII} corresponding to some \( p \) and \( \iota = 1 \) and \( \hat\Psi \) be the corresponding solution of RHP~\ref{rhp original}. Define
\begin{align*}
    \Tilde{\Psi} = \begin{dcases}
        -\hat{\Psi}, & |\lambda| < \rho,\\
        \hat{\Psi}, &|\lambda| > \rho.
    \end{dcases}
\end{align*}
One can readily see that \( \Tilde\Psi \) satisfies RHP~\ref{rhp original}(2) with jump matrices corresponding to the same \( p \) and \( \iota = -1 \). Because \( -P_0(u_+) = P_0(u_++2\pi i)\) and we define real solutions as solutions that are real modulo addition of integer multiples of \( 2\pi i \), \( \Tilde\Psi \) is in fact the solution of RHP~\ref{rhp original} corresponding to \( p \) and \( \iota = -1 \). Thus, \( u_-(x) \), the solution of \eqref{sinh-gordon PIII} corresponding to \( p \) and \( \iota = -1 \), is equal to \( u_+(x) \).
\end{rem}

\subsection{Riemann-Hilbert Problem for sine-Gordon Equation}

The direct monodromy problem for the sine-Gordon equation is well known, see for example \cite[Chapter 13]{FIKN} or \cite{Niles}, and can be deduced in the same way as above. Following \cite{Niles}, the monodromy data for all (not necessarily real) solutions of \eqref{sine-gordon PIII} consist of Stokes matrices
\begin{equation}
\label{sine stokes data}
S_{1,\sin}^{(\infty)} = S_{2,\sin}^{(0)} = \begin{pmatrix} 1 & 0 \\ \varsigma & 1 \end{pmatrix} \quad \text{and} \quad S_{2,\sin}^{(\infty)} = S_{1,\sin}^{(0)} = \begin{pmatrix} 1 & \varsigma \\ 0 & 1 \end{pmatrix}
\end{equation}
as well as the connection matrix
\begin{equation}
\label{sine connection matrix}
E_{\sin} = \pm i \sigma_1 \quad \text{or} \quad E_{\sin} = \frac{\pm1}{\sqrt{1+p_{\sin} q_{\sin}}} \begin{pmatrix} 1 & p_{\sin} \\ -q_{\sin} & 1 \end{pmatrix}, \quad \begin{cases}
    1+p_{\sin} q_{\sin} \neq 0, \\ \varsigma = p_{\sin} + q_{\sin},
\end{cases}
\end{equation}
where the first case is known as {\it special case} and the second one as the {\it general case}. Again, exactly as in the case of the sinh-Gordon equation, the inverse monodromy problem relies on the following Riemann-Hilbert problem.

\begin{RHP} \label{rhp niles}
Find a $2 \times 2$ matrix function $\hat{\Phi}(\lambda)$ such that
\begin{enumerate}
    \item $\hat{\Phi}(\lambda)$ is analytic for $\lambda \in \C \setminus \Gamma$, where $\Gamma$ is the same contour as in RHP~\ref{rhp original};
    \item one-sides traces $\hat{\Phi}_{\pm}(\lambda)$ exists a.e. on \( \Gamma \), belong to \( L^2(\Gamma) \), and satisfy
    \begin{align*}
        \hat{\Phi}_+(\lambda) = \hat{\Phi}_-(\lambda) G_{\hat{\Phi}}, \quad \lambda \in \Gamma,
    \end{align*}
    where $G_{\hat{\Phi}}$ on \( \Gamma \) is as on Figure \ref{Phi hat problem pic} and in \eqref{sine stokes data} and \eqref{sine connection matrix};
    \item it holds that
    \begin{align}
    \label{Phi hat as lambda -> 0}
            \hat{\Phi}(\lambda) = \begin{cases}
            (I + \O(1/\lambda)) e^{\frac{-i x^2 \lambda}{16} \sigma_3} &  \text{as} \quad \lambda\to\infty, \smallskip \\
            P_{\sin} (I + \O(\lambda)) e^{-\frac{i}{\lambda} \sigma_3} & \text{as} \quad \lambda\to0,
        \end{cases}
    \end{align}        
where \(\displaystyle P_{\sin} = \begin{pmatrix}
            \cos \frac{v}{2} & -i \sin \frac{v}{2} \\
            -i \sin \frac{v}{2} & \cos \frac{v}{2}
        \end{pmatrix} \).
\end{enumerate}
\end{RHP}

As usual, \eqref{Phi hat as lambda -> 0} describes asymptotics of the formal fundamental solutions of the first equation in \eqref{Niles pair} at \( \lambda=0,\infty \). 

\begin{figure}[htbp]
    \centering
    \includegraphics[width=7cm]{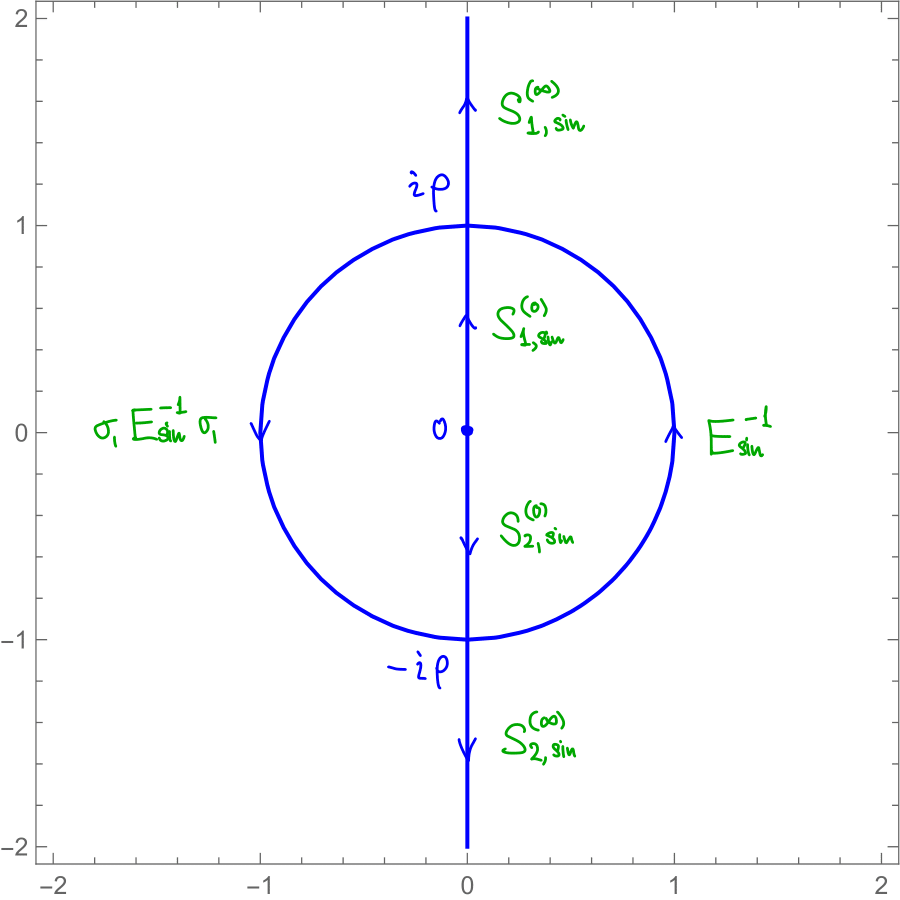}
    \caption{The jump matrices $G_{\hat{\Phi}}$ for RHP~\ref{rhp niles}.}
    \label{Phi hat problem pic}
\end{figure}

Recall that Lax pair \eqref{Niles pair} can be transformed into \eqref{Lax pair sinh-Gordob PIII} by setting either $\Psi (\lambda) = i \sigma_1 \Phi (\lambda)$ or $\Psi (\lambda) = \sigma_1 \Phi (\lambda) \sigma_1$ and making a change of variables $u(x) = i \pi - i v(x)$. Notice that the first gauge transformation takes the formal fundamental solution of \eqref{Niles pair} around \( \lambda=0\) into the formal fundamental solutions of \eqref{Lax pair sinh-Gordob PIII} around \( \lambda=0\) while the second one provides correspondence between the formal fundamental solutions at \( \lambda=\infty \). Therefore, RHP~\ref{rhp niles} transforms into RHP~\ref{rhp original} upon setting
\begin{align}
    \hat{\Psi}(\lambda) = 
    \begin{dcases}
        \sigma_1 \hat{\Phi}(\lambda) \sigma_1, & |\lambda| > \rho, \\
        i \sigma_1 \hat{\Phi}(\lambda), & |\lambda| < \rho.
    \end{dcases} \label{phi-hat to psi-hat}
\end{align}
The above transformation induces the following relation between monodromy data:
\[
\begin{cases}
    S_{1}^{(\infty)} = \sigma_1 S_{1,\sin}^{(\infty)} \sigma_1, \quad S_{2}^{(\infty)} = \sigma_1 S_{2,\sin}^{(\infty)} \sigma_1, \smallskip \\
    S_{2}^{(0)} = S_{1,\sin}^{(0)-1}, \quad S_{1}^{(0)} = S_{2,\sin}^{(0)-1}, \quad E = -i E_{\sin}\sigma_1.
\end{cases}
\]
Using \( (\iota,p) \) parametrization \eqref{p parametrization} of the monodromy data for sinh-Gordon equation, we get that \( p=\infty \) corresponds to the special case with \( \varsigma=a=is^\R \) and \( 1<|p|<\infty\) corresponds to the general case with \( p_{\sin} = \overline p \) and \( q_{\sin} = -p\) where again \(\varsigma = p_{\sin} + q_{\sin} = -2i\Im(p) = i s^\R \). In both cases \( \iota \) corresponds to the choice of sign in \eqref{sine connection matrix} and we interpret \( 1/\sqrt{1+p_{\sin}q_{\sin}} \) as \( i |B^\R| = i/\sqrt{|p|^2-1}\). The choices of \( p_{\sin},q_{\sin} \) that cannot be parametrized by \( p \) as above lead to the non-real solutions \( u(x) \) of sinh-Gordon equations.

The results of Theorem~\ref{Niles result} now follow from \cite[Theorems~7-9]{Niles} and the above correspondence for the monodromy data, see also Appendix~\ref{connection P3}.

\section{Proof of Theorem \ref{main thm}}

As indicated in the introduction, we shall use the nonlinear steepest descent method of Deift and Zhou \cite{DZ} to prove the desired asymptotics. In Section~\ref{case 1} we review the case $B^{\R} = 0$ that includes the McCoy-Tracy-Wu one-parameter family of smooth solutions (this proof has already been outlined in Niles' thesis \cite[Apendix~D]{Niles}). Then, in Section~\ref{case 2}, we consider the case $B^{\R} \neq 0$.

\subsection{Case 1: $B^{\R} = 0$} \label{case 1}

When $B^{\R} = 0$, it readily follows from \eqref{monodromy surface} that $A = \pm 1$. As we pointed out in the remark of Section~\ref{2.2}, it is enough to consider the case $A = 1$ only. Then, the jump matrices on the arcs of \( S^1_\rho \) become equal to the identity matrix.  That is, $\hat\Psi$ has the discontinuity only along the imaginary line. Let
\begin{align}
    Y(\lambda) := \hat{\Psi}\left( \frac{4\lambda}{x} \right) e^{-x \theta(\lambda)\sigma_3}, \quad \theta(\lambda) := \frac{i}{4}\left(  \lambda - \frac1\lambda \right).
    \label{defn of Y}
\end{align}
Then \( Y(\lambda) \) is the solution of the following Riemann-Hilbert problem: 
\begin{RHP}
\label{RHP Y}
Find a $2 \times 2$ matrix function $Y(\lambda)$ such that
\begin{enumerate}
    \item $Y(\lambda)$ is analytic for $\lambda \in \C \setminus \Gamma_Y$, where \( \Gamma_Y\) is the imaginary line oriented as on Figure~\ref{gl Y problem pic};
    \item one-sides traces $Y_\pm(\lambda)$ exists a.e. on \( \Gamma_Y \), belong to \( L^2(\Gamma_Y) \), and satisfy
    \begin{align*}
        Y_+(\lambda) = Y_-(\lambda) G_{Y}, \quad \lambda \in \Gamma_Y,
    \end{align*}
    where the jump matrices $G_{Y}$ are given by
    \begin{align}
\label{defn GY}
    G_{Y} = e^{x \theta (\lambda) \sigma_3} \, G_{\hat{\Psi}} \, e^{-x \theta (\lambda) \sigma_3}, \quad \lambda \in \Gamma_Y;
\end{align}
    \item it holds that
    \begin{align*}
        Y(\lambda) = \begin{cases}
            I + \O(1/\lambda)  &  \text{as} \quad \lambda\to\infty, \smallskip \\
            P_0 (I + \O(\lambda)) & \text{as} \quad \lambda\to0,
        \end{cases}
    \end{align*}
    where \( P_0 \) was defined in \eqref{defn of P0}.
\end{enumerate}
\end{RHP}

\begin{figure}[ht!]
    \centering
    \includegraphics[width=6cm]{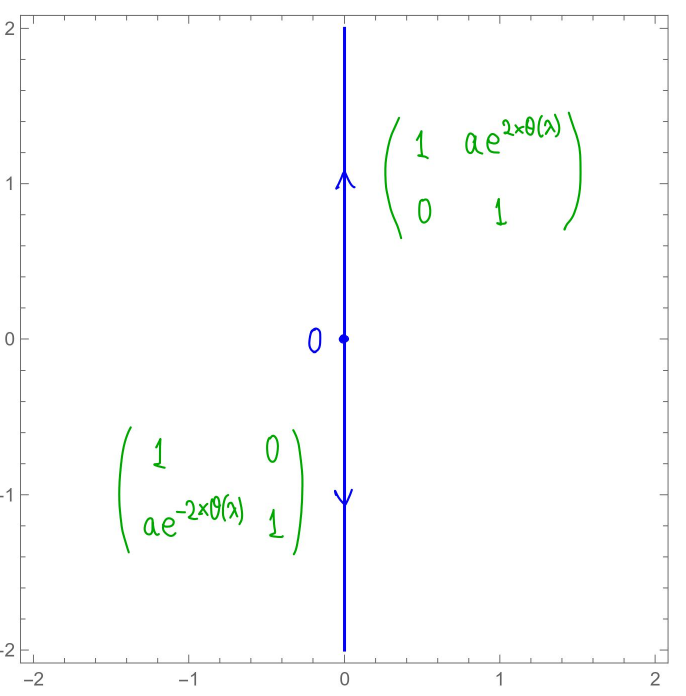}
    \caption{The jump of $Y$ on the imaginary line \( \Gamma_Y \).}
    \label{gl Y problem pic}
\end{figure}

Note that $2 \theta(\lambda) < -1$ when \( \lambda\in\Gamma_Y \), \(\Im(\lambda)>0\), and $-2 \theta(\lambda) <-1$ when \( \lambda\in\Gamma_Y \), \(\Im(\lambda)<0\). Thus, both jump matrices comprising $G_Y$ exponentially decay to the identity matrix $I$. In fact, one can show that
\begin{align}
    \| G_Y - I \|_{L^2(\Gamma_Y) \cap L^{\infty} (\Gamma_Y)} \leq C \, \frac{e^{-x}}{x^{1/4}},
    \label{gl G-I is small}
\end{align}
where $C$ is some constant. Hence, by the small norm theorem, see for example \cite[Theorem~8.1]{FIKN}, there exists a unique solution of RHP~\ref{RHP Y} for all $x$ large and
\begin{align}
    Y(\lambda) = I + \frac{1}{2 \pi i} \int_{\Gamma_Y} \frac{\rho(\lambda')(G_{Y}(\lambda') - I)}{\lambda' - \lambda} d\lambda', \quad \lambda \notin \Gamma_{Y}, \label{gl Y singular int eq}
\end{align}
where $\ds \rho(\lambda') = Y_-(\lambda^\prime)$ for $\lambda' \in \Gamma_Y$ (hence, \( \rho(\lambda^\prime) \) is the solution of the singular integral equation obtained by taking the traces of both sides of \eqref{gl Y singular int eq} on the left-hand side of \(\Gamma_Y\)). The small norm theorem further implies that 
\begin{align}
    \|\rho - I\|_{L^2(\Gamma_Y)} \leq \Tilde{C} \, \frac{e^{-x}}{x^{1/4}} \label{gl rho - I has a small norm}
\end{align}
for some constant $\Tilde{C}$. Observe that \( G_{Y}(\lambda') - I \) vanishes exponentially at the origin. Hence, \( Y_+(0) = Y_-(0) =: Y(0)\) and one has that
\begin{align}
\begin{aligned}
    Y(0) &= I + \frac{1}{2\pi i} \int_{\Gamma_{Y}} \frac{G_{Y}(\lambda') - I}{\lambda'} d\lambda'  + \frac{1}{2\pi i} \int_{\Gamma_Y} \frac{(\rho(\lambda') - I)(G_{Y}(\lambda') - I)}{\lambda'} d\lambda'\\
    &= I + \frac{1}{2\pi i} \int_{\Gamma_{Y}} \frac{G_{Y}(\lambda') - I}{\lambda'} d\lambda' + \O \left(\frac{e^{-2x}}{\sqrt{x}}\right),
\end{aligned} \label{gl Y asymptotic eq 1}
\end{align}
by Cauchy-Schwarz inequality, \eqref{gl G-I is small}, and \eqref{gl rho - I has a small norm}. Recall that
\[
    \frac{1}{2 \pi} \int_{0}^{\infty} e^{-\frac{x}{2}(y + \frac{1}{y})} \frac{dy}{y} = \frac1{2\pi} \int_{-\infty}^\infty e^{-x\cosh s}ds = \frac1\pi K_0(x),
\]
see \cite[Equation (10.32.9)]{NIST:DLMF}. Then, \eqref{gl Y asymptotic eq 1} becomes
\begin{align}
    Y(0) = \begin{pmatrix}
    1 & \frac1\pi s^\R K_0(x)\\
    \frac1\pi s^\R K_0(x) & 1
    \end{pmatrix} + \O \left(\frac{e^{-2x}}{\sqrt{x}}\right). \label{gl Y asymptotic eq 2}
\end{align}
On the other hand, it immediately follows from RHP~\ref{RHP Y}(3) and \eqref{defn of P0} that
\begin{align}
    Y(0) = \begin{pmatrix}
        \cosh \frac{u}{2} & \sinh \frac{u}{2}\\
        \sinh \frac{u}{2} & \cosh \frac{u}{2}
    \end{pmatrix}. \label{gl Y asymptotic eq 3}
\end{align}
By comparing \eqref{gl Y asymptotic eq 2} with \eqref{gl Y asymptotic eq 3}, we get that
\[
u(x) = 2\ln \left[ 1 + \frac{s^\R}\pi K_0(x) + \O\left(\frac{e^{-2x}}{\sqrt{x}}\right) \right]
    = \frac{2 s^\R}\pi K_0(x) + \O\left(\frac{e^{-2x}}{\sqrt{x}}\right)
\]
as $x \to \infty$, where we used the fact that \( 0\leq \sqrt x e^x K_0(x) \leq C^\prime \) to deduce the leading order behavior of the logarithm. 
This finishes the proof of Theorem~\ref{main thm} in the case \( B^\R=0 \).

\subsection{Case 2: $B^{\R} \neq 0$}
\label{case 2}

Now we will consider the case $B^{\R} \neq 0$. Again, we only look at \( B^\R>0 \).

\subsubsection{Opening of Lenses}

Recall that \( \rho \) in RHP~\ref{rhp original} was arbitrary. Motivated by the scaling used in \eqref{defn of Y}, we now set \( \rho = 4/x \). Transformation \eqref{defn of Y} no longer fits our needs because $\Re( \theta(\lambda)) < 0$ in the upper half-plane while $\Re(\theta(\lambda)) > 0$ in the lower half-plane and so the jump matrices on the unit circle would always have exponentially growing entries. The next best thing one can do is to make off-diagonal entries of the jump matrices on the unit circle oscillating. To this end, consider
\begin{align}
\label{defn X}
    X(\lambda) := \hat{\Psi}\left( \frac{4\lambda}{x} \right) \begin{Bmatrix}
        I, & |\lambda| > 1 \\
        \sigma_1, & |\lambda| < 1
    \end{Bmatrix} e^{-x \varphi(\lambda)\sigma_3}, \quad \varphi(\lambda) = \frac{i}4\left(\lambda + \frac{1}{\lambda} \right).
\end{align}
Then \( X(\lambda) \) is the solution of the following Riemann-Hilbert problem: 
\begin{RHP}
\label{RHP X}
Find a $2 \times 2$ matrix function $X(\lambda)$ such that
\begin{enumerate}
\item $X(\lambda)$ is analytic for $\lambda \in \C \setminus \Gamma_X$, where \( \Gamma_X \) is the union of the imaginary axis and the unit circle oriented as on Figure~\ref{X problem pic};
    \item one-sides traces $X_\pm(\lambda)$ exists a.e. on \( \Gamma_X \), belong to \( L^2(\Gamma_X) \), and satisfy
    \begin{align*}
        X_+(\lambda) = X_-(\lambda) G_{X}, \quad \lambda \in \Gamma_X,
    \end{align*}
    where the jump matrices $G_{X}$ on \( \Gamma_X \) are as on Figure~\ref{X problem pic};
    \item it holds that
    \begin{align}
    \label{X asymptotics lambda -> 0}
        X(\lambda) = \begin{cases}
            I + \O(1/\lambda)  &  \text{as} \quad \lambda\to\infty, \smallskip \\
            P_0 \, \sigma_1 (I + \O(\lambda)) & \text{as} \quad \lambda\to0,
        \end{cases}
    \end{align}
    where \( P_0 \) was defined in \eqref{defn of P0}.
\end{enumerate}
\end{RHP}

\begin{figure}[ht!]
    \centering
    \includegraphics[width=7cm]{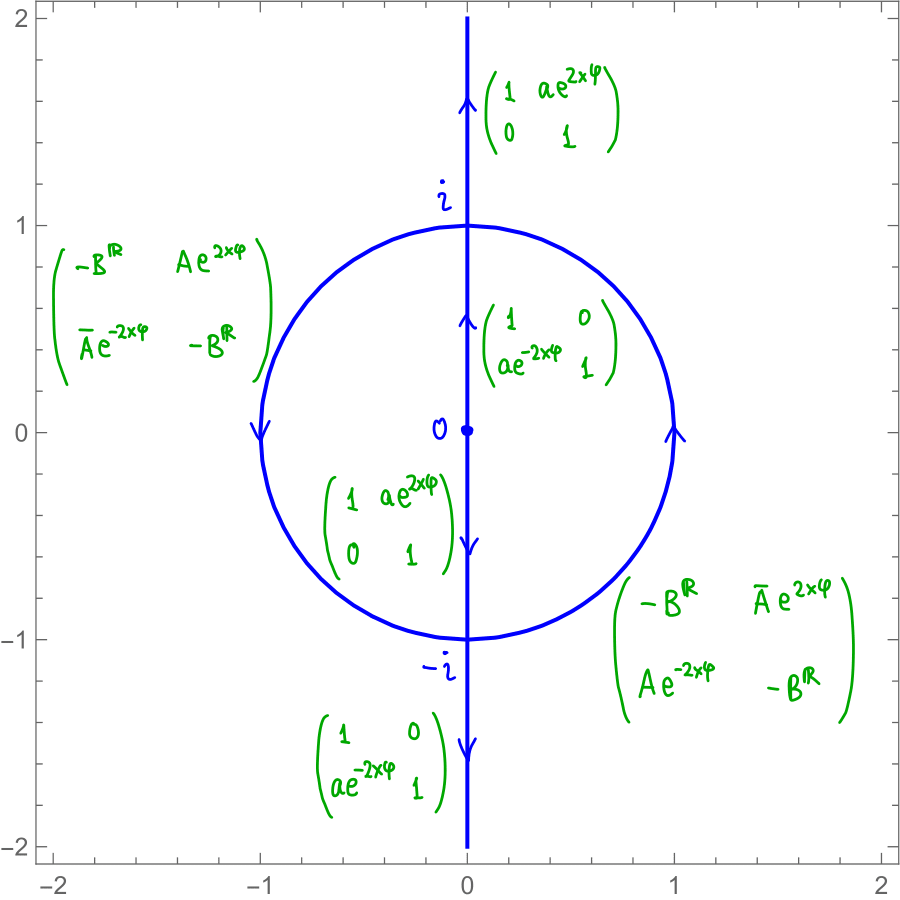}
    \caption{The jump matrices $G_{X}$ for RHP~\ref{RHP X}.}
    \label{X problem pic}
\end{figure}

Let $\lambda = \xi + i \eta$, where $\xi, \, \eta \in \R$. Then,
\begin{align}
\label{real part varphi}
    \Re (2 \varphi (\lambda)) = \frac{\eta \left( 1 - \xi^2 - \eta^2 \right)}{2 \left( \xi^2 + \eta^2 \right)}, \quad \Im (2 \varphi (\lambda)) = \frac{\xi \left( 1 + \xi^2 + \eta^2 \right)}{2 \left( \xi^2 + \eta^2 \right)}.
\end{align}
\begin{figure}[ht!]
\centering
\includegraphics[width=7cm]{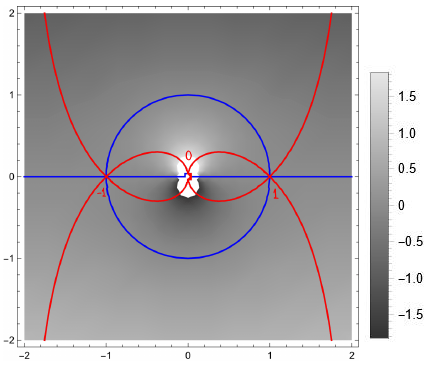}
\caption{The zero level curve $\Re(2\varphi(\lambda)) = 0$ (blue), the sign of $\Re(2\varphi(\lambda))$ (gray), and the stationary contour $\Im(2\varphi) = \pm 1$ (red).}
\label{Re 2 phi}
\end{figure}
Clearly, \( \varphi(\lambda) \) is purely imaginary on the unit circle and therefore off-diagonal entries of \( G_X \) on the unit circle are oscillating as desired. Notice also that the stationary points of $2\varphi(\lambda)$ are $\lambda = \pm 1$ and $2 \varphi ( \pm 1 ) = \pm i$. The stationary contour $\Im (2 \varphi(\lambda)) = \pm 1$ is shown on Figure \ref{Re 2 phi}.

Next, we decompose the jump matrices on the unit circle in RHP~\ref{RHP X}(2) as follows:
\begin{align*}
    &\begin{pmatrix}
        -B^{\R} & \overline{A} e^{2x\varphi}\\
        A e^{-2x\varphi} & -B^{\R}
    \end{pmatrix} = L_1 D_1 R_1 = L_2 D_2 R_2,\\
    &\begin{pmatrix}
        -B^{\R} & A e^{2x\varphi}\\
        \overline{A} e^{-2x\varphi} & -B^{\R}
    \end{pmatrix} = L_3 D_3 R_3 = L_4 D_4 R_4,
\end{align*}
where
\begin{align}
    &L_1 = \begin{pmatrix}
        1 & -\frac{\overline{A}}{B^{\R}} e^{2 x \varphi} \\
        0 & 1
    \end{pmatrix}, \,
    D_1 = \begin{pmatrix}
        \frac{1}{B^{\R}} & 0 \\
        0 & -B^{\R}
    \end{pmatrix}, \,
    R_1 = \begin{pmatrix}
        1 & 0 \\
        -\frac{A}{B^{\R}} e^{-2 x \varphi} & 1
    \end{pmatrix}, \label{L1, D1, R1}\\
    &L_2 = \begin{pmatrix}
        1 & 0 \\
        -\frac{A}{B^{\R}} e^{-2 x \varphi} & 1
    \end{pmatrix}, \,
    D_2 = \begin{pmatrix}
        -B^{\R} & 0 \\
        0 & \frac{1}{B^{\R}}
    \end{pmatrix}, \,
    R_2 = \begin{pmatrix}
        1 & -\frac{\overline{A}}{B^{\R}} e^{2 x \varphi} \\
        0 & 1
    \end{pmatrix}, \label{L2, D2, R2}\\
    &L_3 = \begin{pmatrix}
        1 & -\frac{A}{B^{\R}} e^{2 x \varphi} \\
        0 & 1
    \end{pmatrix}, \,
    D_3 = \begin{pmatrix}
        \frac{1}{B^{\R}} & 0 \\
        0 & -B^{\R}
    \end{pmatrix}, \,
    R_3 = \begin{pmatrix}
        1 & 0 \\
        -\frac{\overline{A}}{B^{\R}} e^{-2 x \varphi} & 1
    \end{pmatrix}, \label{L3, D3, R3}\\
    &L_4 = \begin{pmatrix}
        1 & 0 \\
        -\frac{\overline{A}}{B^{\R}} e^{-2 x \varphi} & 1
    \end{pmatrix}, \,
    D_4 = \begin{pmatrix}
        -B^{\R} & 0 \\
        0 & \frac{1}{B^{\R}}
    \end{pmatrix}, \,
    R_4 = \begin{pmatrix}
        1 & -\frac{A}{B^{\R}} e^{2 x \varphi} \\
        0 & 1
    \end{pmatrix}. \label{L4, D4, R4}
\end{align}
Using the above decomposition we define a matrix function \( \tilde X(\lambda) \) as on Figure~\ref{defn of X tilde}.
\begin{figure}[ht!]
\centering
\includegraphics[width=7cm]{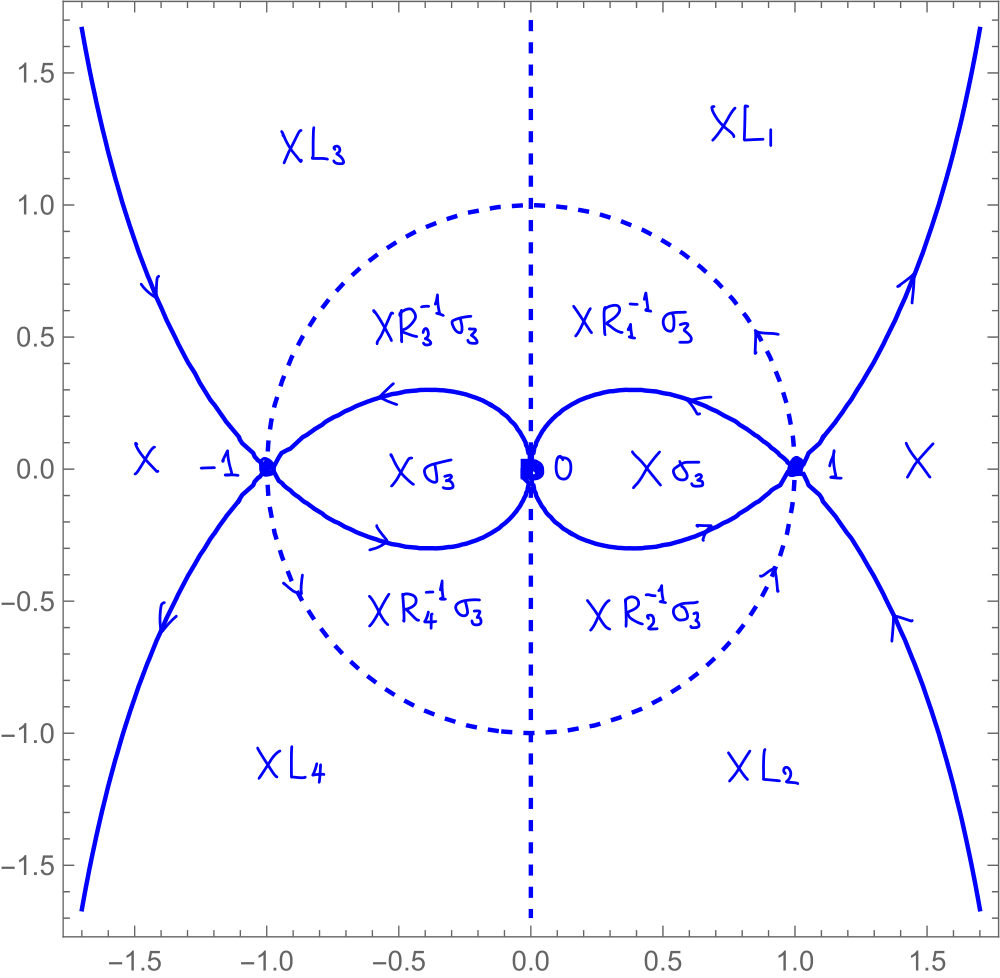}
\caption{Definition of the matrix $\Tilde{X}$. The solid contour is the stationary contour of \( 2\varphi\), see Figure~\ref{Re 2 phi}.}
\label{defn of X tilde}
\end{figure}
Then \( \tilde X(\lambda) \) is the solution of the following Riemann-Hilbert problem: 
\begin{RHP} \label{X tilde RH problem}
Find a $2 \times 2$ matrix function $\Tilde{X}(\lambda)$ such that
\begin{enumerate}
    \item $\Tilde{X}(\lambda)$ is analytic for $\lambda \in \C \setminus \Gamma_{\Tilde{X}}$, where $\Gamma_{\Tilde{X}}$ is the union of the unit circle \( S^1 \) and the stationary contour of \( 2\varphi \) oriented as on Figure~\ref{X tilde problem pic};
    \item the one-sides traces $\tilde X_\pm(\lambda)$ exists a.e. on \( \Gamma_{\tilde X} \), belong to \( L^2(\Gamma_{\tilde X}) \), and satisfy
    \begin{align*}
        \Tilde{X}_+(\lambda) = \Tilde{X}_-(\lambda) G_{\Tilde{X}}, \quad \lambda \in \Gamma_{\Tilde{X}},
    \end{align*}
    where the jump matrices $G_{\Tilde{X}}$ on \( \Gamma_{\tilde X} \) are as in Figure \ref{X tilde problem pic};
    \item it holds that
    \begin{align}
    \label{X tilde near zero}
            \tilde X(\lambda) = \begin{cases}
            I + \O(1/\lambda)  &  \text{as} \quad \lambda\to\infty, \smallskip \\
            P_0 \, \sigma_1\sigma_3 (I + \O(\lambda)) & \text{as} \quad \lambda\to0.
        \end{cases}
    \end{align}
\end{enumerate}
\end{RHP}
\begin{rem}
    Notice that RHP~\ref{X tilde RH problem}(3) readily follows from RHP~\ref{RHP X}(3) upon recalling that \( -\Re(\varphi(\bar\lambda)) = \Re(\varphi(\lambda)) >0 \) for \( \lambda \in \{|\lambda|<1\}\cap\{\Im(\lambda)>0\} \) and \( \lambda \in \{|\lambda|>1\}\cap\{\Im(\lambda)<0\} \), see \eqref{real part varphi}. This observation also implies that the jump matrices \( G_{\tilde X} \) tends to the identity matrix as \( x\to\infty \) on \( \Gamma_{\tilde X}\setminus S^1 \).
\end{rem}
\begin{figure}[htbp]
\centering
\includegraphics[width=7cm]{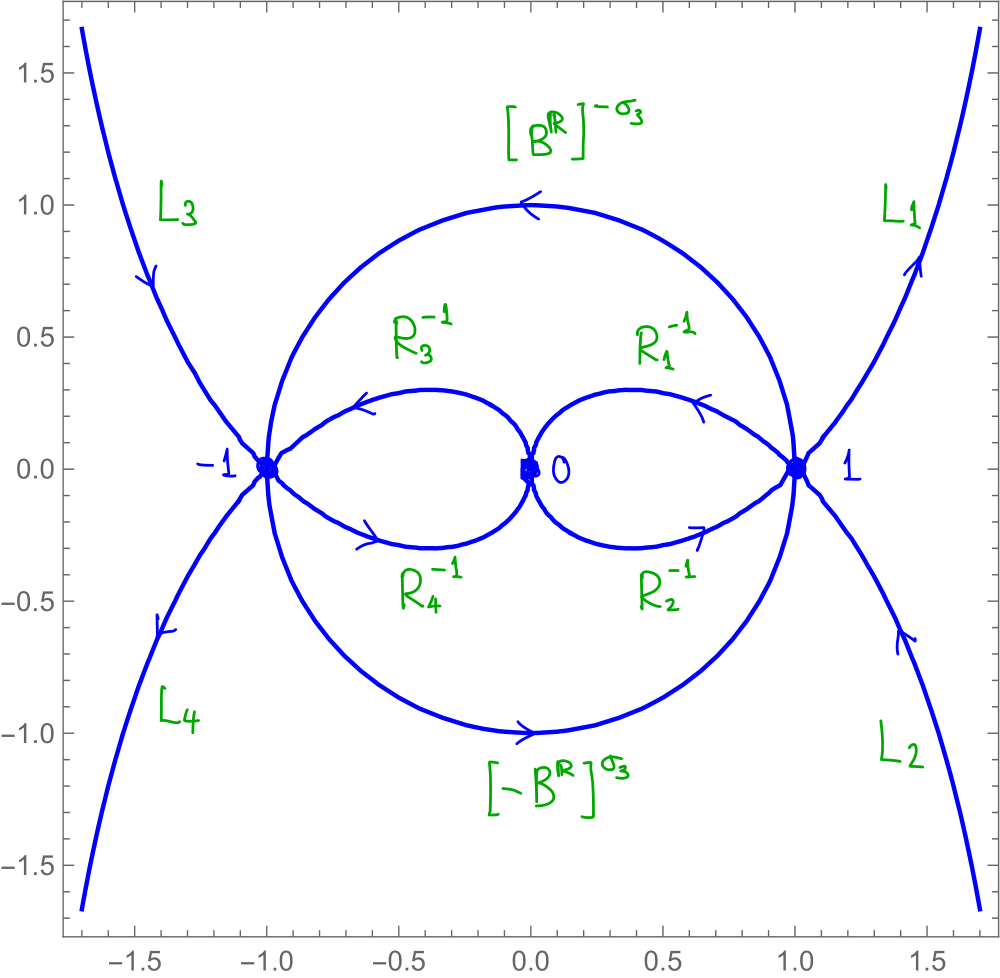}
\caption{The contour $\Gamma_{\Tilde{X}}$ and the jump matrices $G_{\Tilde{X}}$ for RHP~\ref{X tilde RH problem}.}
\label{X tilde problem pic}
\end{figure}


\subsubsection{The Global Parametrix} \label{global param}

As we have mentioned above, the jump matrices in RHP~\ref{RHP X} are close to the identity except on the unit circle \( S^1 \) (oriented counter-clockwise). As standard in the theory, we now look for a {\it global parametrix}, a solution of a Riemann-Hilbert problem with the same jump on the unit circle as in RHP~\ref{RHP X}(2). To this end, let
\begin{align*}
    &C_1 := S^1 \cap \{ \lambda \in \C \,|\, \Im \lambda \geq 0\}, \\
    &C_2 := S^1 \cap \{\lambda \in \C \,|\, \Im \lambda \leq 0\}.
\end{align*}
Consider the following model Riemann-Hilbert problem:
\begin{RHP} \label{rhp global}
Find a $2 \times 2$ matrix function $P^{(gl)}(\lambda)$ such that
\begin{enumerate}
    \item $P^{(gl)}(\lambda)$ is analytic for $\lambda \in \overline\C \setminus S^1$;
    \item continuous one-sides traces $P_\pm^{(gl)}(\lambda)$ exists on \( S^1\setminus\{\pm1\} \) and satisfy
    \begin{align*}
        P^{(gl)}_+(\lambda) = P^{(gl)}_-(\lambda) G_{P^{(gl)}}, \quad \lambda \in S^1,
    \end{align*}
    where
    \begin{align*}
        G_{P^{(gl)}} := \begin{dcases}
            \left[B^{\R}\right]^{-\sigma_3} & \text{if} \quad \lambda \in C_1\setminus\{\pm1\},\\
            \left[-B^{\R}\right]^{\sigma_3} & \text{if} \quad \lambda \in C_2\setminus\{\pm1\};
        \end{dcases}
    \end{align*}
    \item it holds that \( P^{(gl)}(\lambda) = I + \O(1/\lambda) \) as $\lambda \to \infty$.
\end{enumerate}
\end{RHP}

\begin{lem}
\label{lemma global par}
    The solution of RHP~\ref{rhp global} is given by
    \begin{align}
        P^{(gl)}(\lambda) = f_1(\lambda) f_2(\lambda),
        \label{defn of Pgl}
    \end{align}
    where
    \[
    f_1(\lambda) := \left( \frac{\lambda -1}{\lambda + 1} \right)^{\frac{1}{2\pi i} (\ln B^{\R}) \sigma_3}
    \]
    is holomorphic in \( \overline \C\setminus C_1 \) and normalized so that \( f_1(0) = \big[ \sqrt{B^\R} \big]^{-\sigma_3} \), and
    \[
    f_2(\lambda) := \left( \frac{\lambda - 1}{\lambda + 1} \right)^{\frac{1}{2\pi i} (\ln B^{\R} + i \pi)\sigma_3}
    \]
    is holomorphic in \( \overline \C\setminus C_2 \) and normalized so that \( f_2(0) = \big[ i\sqrt{B^\R} \big]^{\sigma_3} \). 
\end{lem}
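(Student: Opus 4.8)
The plan is to verify directly that the explicit pair $f_1,f_2$ in the statement assembles into a matrix function meeting all three requirements of RHP~\ref{rhp global}, and then to invoke the usual Liouville argument for uniqueness. Write $m(\lambda):=\frac{\lambda-1}{\lambda+1}$; this M\"obius map sends $1\mapsto 0$, $-1\mapsto\infty$, $\infty\mapsto 1$, $0\mapsto -1$, and carries $S^1$ (counter-clockwise, so with the unit disk on the $+$ side) onto the imaginary axis, with $C_1\mapsto i\R_{>0}$ and $C_2\mapsto i\R_{<0}$. Then $f_1=m^{c_1\sigma_3}$ with $c_1=\frac{1}{2\pi i}\ln B^\R$ is holomorphic off $C_1$ exactly for the branch of $\log m$ cut along $i\R_{\ge 0}$, and the normalization $f_1(0)=[\sqrt{B^\R}]^{-\sigma_3}$, i.e.\ $m(0)^{c_1}=(-1)^{c_1}=(B^\R)^{-1/2}$, fixes that branch; likewise $f_2=m^{c_2\sigma_3}$ with $c_2=\frac{1}{2\pi i}(\ln B^\R+i\pi)=\tfrac12-\tfrac{i}{2\pi}\ln B^\R$ is holomorphic off $C_2$ for the branch of $\log m$ cut along $i\R_{\le 0}$, normalized by $f_2(0)=(-1)^{c_2}=i\sqrt{B^\R}$. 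Both factors are diagonal, hence commute, so $P^{(gl)}=f_1f_2$ is unambiguous, $\det P^{(gl)}=1$, and since $C_1\cup C_2=S^1$ it is holomorphic in $\overline\C\setminus S^1$: this is condition (1).

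For condition (3): as $\lambda\to\infty$ we have $m(\lambda)=1-\frac{2}{\lambda+1}=1+\O(1/\lambda)$, and since $w=1$ lies in the interior of both cut planes (so $\log 1=0$ for both branches) we get $f_j(\lambda)=I+\O(1/\lambda)$, whence $P^{(gl)}(\lambda)=I+\O(1/\lambda)$.

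The main step is condition (2). On $C_1$ the factor $f_2$ is analytic, so the jump of $P^{(gl)}$ there is the jump of $f_1$; as $\lambda$ crosses $C_1$ from its $-$ side (exterior of the disk) to its $+$ side (interior), $m(\lambda)$ crosses $i\R_{>0}$ from $\{\Re w>0\}$ to $\{\Re w<0\}$ (checked via $m'(i)=-i$), and for the branch used in $f_1$ this increments $\log m$ by $-2\pi i$, giving $f_1^+=f_1^-\,e^{-2\pi i c_1\sigma_3}=f_1^-\,[B^\R]^{-\sigma_3}$ and hence, by commutativity with the diagonal $f_2$, $P^{(gl)}_+=P^{(gl)}_-\,[B^\R]^{-\sigma_3}$, matching $G_{P^{(gl)}}$ on $C_1$. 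On $C_2$, symmetrically, $f_1$ is analytic, $m(\lambda)$ crosses $i\R_{<0}$ from $\{\Re w>0\}$ to $\{\Re w<0\}$ (via $m'(-i)=i$), the branch used in $f_2$ increments $\log m$ by $+2\pi i$, and $f_2^+=f_2^-\,e^{2\pi i c_2\sigma_3}=f_2^-\,[-B^\R]^{\sigma_3}$ — the $i\pi$ inside $c_2$ being exactly what produces the sign — so $P^{(gl)}_+=P^{(gl)}_-\,[-B^\R]^{\sigma_3}$, matching $G_{P^{(gl)}}$ on $C_2$. Thus $P^{(gl)}$ solves RHP~\ref{rhp global}. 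For uniqueness, if $\tilde P$ is another solution then $R:=P^{(gl)}\tilde P^{-1}$ has no jump across $S^1$ (the constant jump matrices cancel), is holomorphic in $\C\setminus\{-1,1\}$ and tends to $I$ at $\infty$; since near $\pm1$ the entries of $P^{(gl)}$ and of $\tilde P$ grow no faster than $|\lambda\mp 1|^{-1/2}$ and $\det P^{(gl)}=1$, the singularities of $R$ at $\pm1$ are removable, so $R\equiv I$ by Liouville.

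I expect the only genuinely delicate point to be the branch-and-orientation bookkeeping in condition (2) — in particular obtaining the factor $-B^\R$ (rather than $+B^\R$) on $C_2$, which relies on the extra $i\pi$ in $c_2$ together with the identification of the unit disk as the $+$ side of $S^1$; everything else is routine verification.
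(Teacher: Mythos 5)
Your proof is correct and follows essentially the same route as the paper: both write $f_1,f_2$ as powers of the M\"obius map $m(\lambda)=(\lambda-1)/(\lambda+1)$, which carries $C_1,C_2$ onto the positive and negative imaginary rays, and then verify the normalizations via $m(0)=-1$ and the jump condition by tracking the $\pm 2\pi i$ increment of $\log m$ across the respective cuts (the extra $i\pi$ in the exponent of $f_2$ producing the sign $-B^\R$ on $C_2$). The Liouville uniqueness paragraph you append is extra relative to the paper's proof and, as written, tacitly assumes that a competing solution $\tilde P$ obeys the same $|\lambda\mp1|^{-1/2}$ growth bound near $\pm1$ that $P^{(gl)}$ does, a condition RHP~\ref{rhp global} does not explicitly impose; this does not affect the main content of the lemma, since the paper itself only exhibits the solution and never invokes uniqueness.
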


\begin{proof}
We check that $P^{(gl)}(\lambda)$ defined by \eqref{defn of Pgl} satisfies RHP~\ref{rhp global}. 

RHP~\ref{rhp global}(1) follows immediately from the choice of the branch cuts for \( f_1(\lambda) \), \( f_2(\lambda) \). 

Notice that a linear fractional transformation \( (\lambda - 1)/(\lambda + 1) \) maps \( C_1\) and \( C_2\) onto \( \{iy~|~y\geq0\} \) and \( \{iy~|~y\leq0\} \), respectively. Then, one can readily check that
\[
\begin{cases}
f_1(\lambda) & \displaystyle = \exp\left\{\frac{\ln B^{\R}}{2 \pi i}L_{\pi/2} \left( \frac{\lambda - 1}{\lambda + 1} \right)\sigma_3 \right\}, \medskip \\ 
f_2(\lambda) & \displaystyle = \exp\left\{\frac{\ln B^{\R}+\pi i}{2 \pi i}L_{3\pi/2} \left( \frac{\lambda - 1}{\lambda + 1} \right)\sigma_3 \right\},
\end{cases}
\]
where \( L_\theta(z) := \ln|z| + iA_\theta(z) \) and \( A_\theta(z) \) is the argument of \( z \) that belongs to \( (\theta-2\pi,\theta] \) (we use here the fact that \( L_{\pi/2}(-1) = -\pi i \) and \( L_{3\pi/2}(-1) = \pi i \)). Since
\[
L_{\theta+}(z) = L_{\theta-}(z) + 2\pi i
\]
on the ray \( \{z~|~\arg(z)=\theta\} \) oriented towards the origin, it holds that
\[
f_{1+}(\lambda) = f_{1-}(\lambda) \big[B^\R\big]^{\sigma_3} \quad \text{and} \quad f_{2+}(\lambda) = f_{2-}(\lambda) \big[-B^\R\big]^{\sigma_3}
\]
on \( C_1 \) and \( C_2 \), respectively. This immediately shows that RHP~\ref{rhp global}(2) is fulfilled. 

Finally, as \( L_{\pi/2}(1) = L_{3\pi/2}(1) = 0 \), it holds that \( f_1(\infty) = f_1(\infty) = I\), which shows the validity of RHP~\ref{rhp global}(3) and finishes the proof of the lemma.
\end{proof}

Let $K^{(1,2,3)}$ be the following sub-domains of a sufficiently small neighborhood $U_1$ of $\lambda = 1$:
\begin{align*}
    K^{(1)} &= U_1 \cap \text{(outside of the unit circle)}\\
    K^{(2)} &= U_1 \cap \text{(inside of the unit circle)} \cap \text{(upper half plane)}\\
    K^{(3)} &= U_1 \cap \text{(inside of the unit circle)} \cap \text{(lower half plane)}.
\end{align*}
\begin{figure}[H]
    \centering
    \begin{subfigure}{.33\textwidth}
    \includegraphics[width=0.9\linewidth]{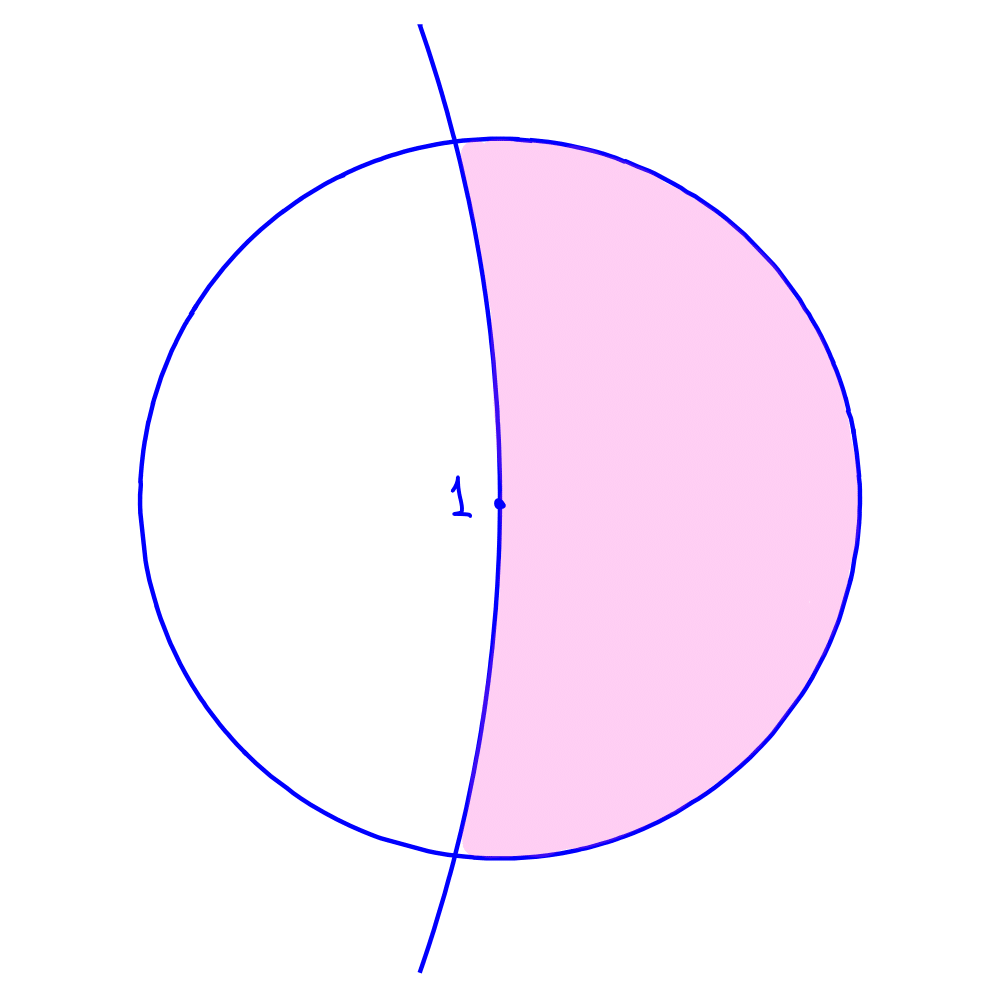}
    \caption{$K^{(1)}$}
    \label{Picture of K^{(1)}}
    \end{subfigure}%
    \begin{subfigure}{.33\textwidth}
    \includegraphics[width=0.9\linewidth]{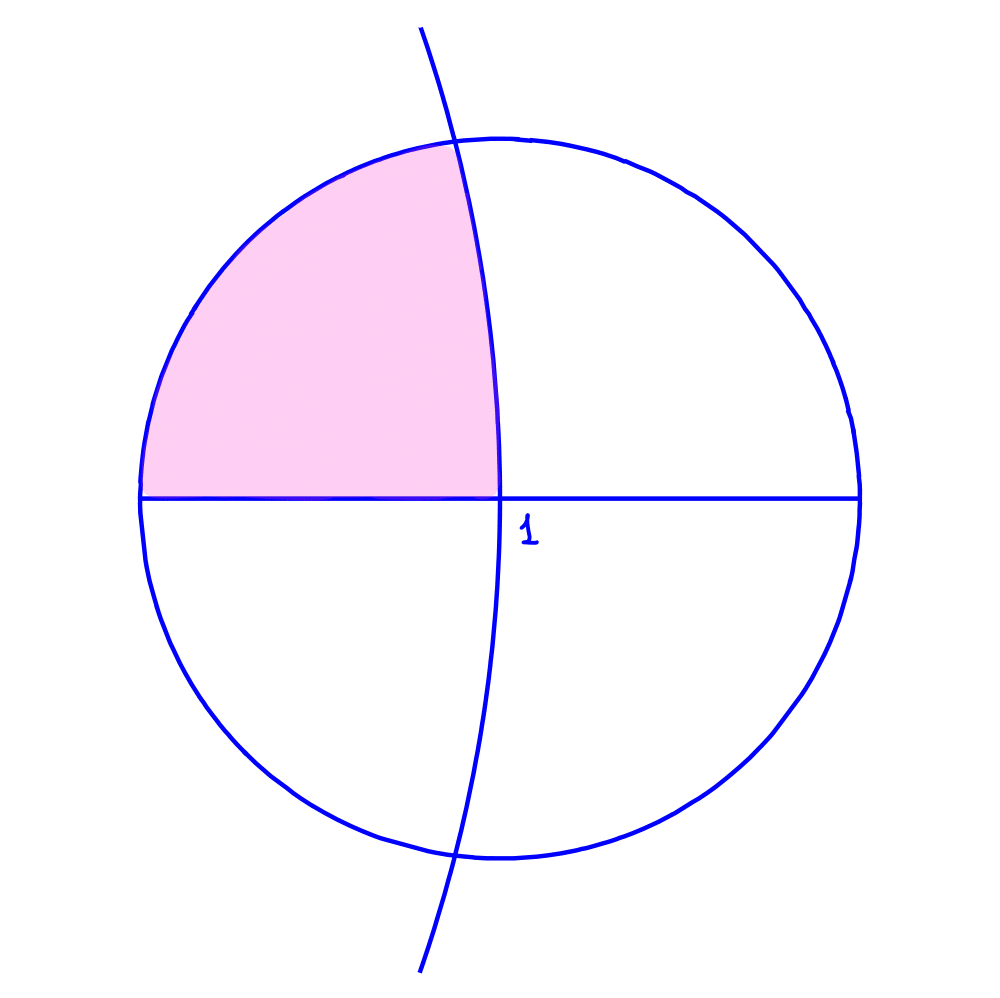}
    \caption{$K^{(2)}$}
    \label{Picture of K^{(2)}}
    \end{subfigure}
    \begin{subfigure}{.33\textwidth}
    \includegraphics[width=0.9\linewidth]{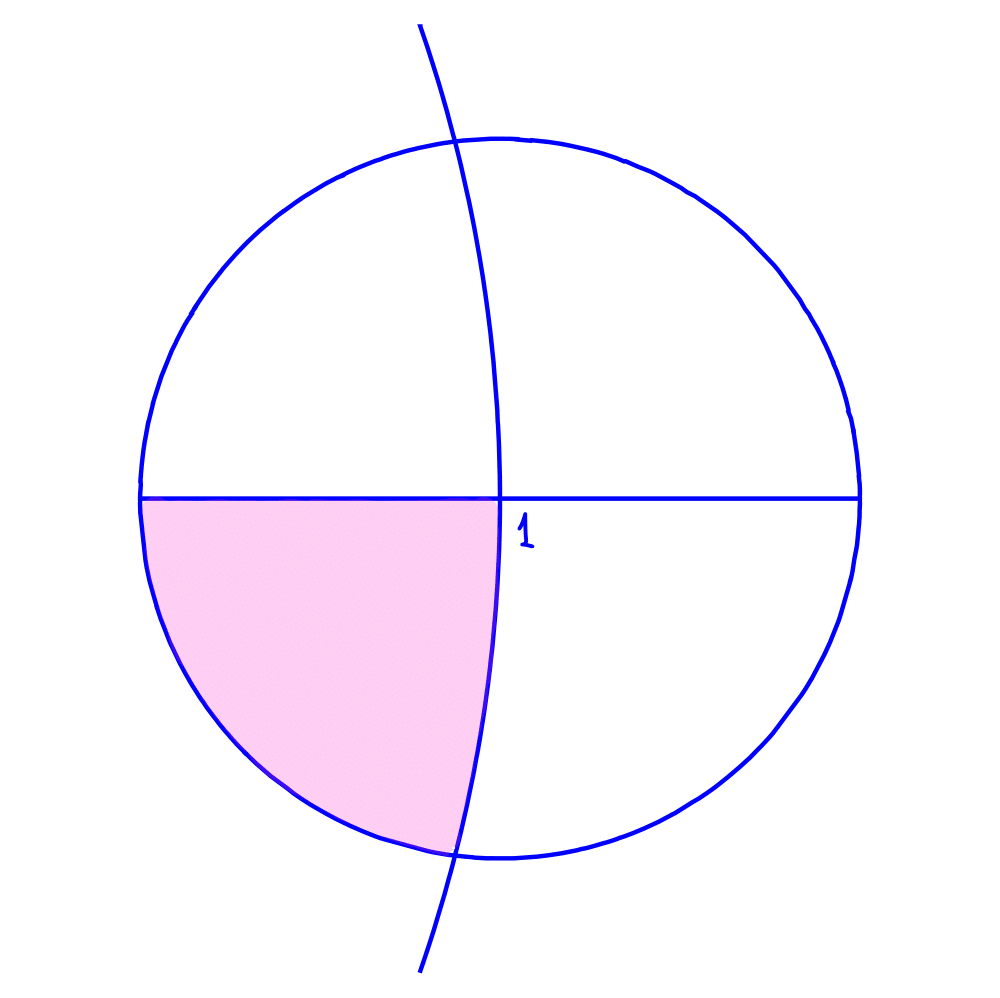}
    \caption{$K^{(3)}$}
    \label{Picture of K^{(3)}}
    \end{subfigure}
    \caption{Regions $K^{(1,2,3)}$.}
    \label{Pictures of K^{(i)}}
\end{figure}

Let \( (\lambda-1)^\nu \) be the branch with the cut \( \{ \lambda \leq 1\} \) positive for \(\lambda>1\), where
\begin{align}
    \nu = \frac{1}{\pi i} \ln B^{\R} + \frac{1}{2} \label{defn of nu}.
\end{align}
Then, one can write the global parametrix $P^{(gl)}(\lambda)$  in \( U_{1} \) as
\begin{equation}
\begin{aligned}
P^{(gl)}(\lambda) = \Theta_1(\lambda)^{-\sigma_3} (\lambda - 1)^{\nu \sigma_3}C^{(1)}, \quad C^{(1)} :=
\begin{dcases}
    I & \text{in } K^{(1)},\\
    \left[ B^{\R}\right]^{-\sigma_3} & \text{in } K^{(2)},\\
    \left[ -B^{\R}\right]^{\sigma_3} & \text{in } K^{(3)},
\end{dcases}
\end{aligned} \label{asymptotics of Pgl near 1}
\end{equation}
where $\Theta_1(\lambda)$ is an analytic function in a neighborhood of $\overline U_{1}$ such that \( \Theta_1(1) = 2^\nu \).

Similarly, we define the following sub-domains $K^{(4,5,6)}$ of a sufficiently small neighborhood $U_{-1}$ of $\lambda = -1$:
\begin{align*}
    K^{(4)} &= U_{-1} \cap \text{(outside of the unit circle)}\\
    K^{(5)} &= U_{-1} \cap \text{(inside of the unit circle)} \cap \text{(upper half plane)}\\
    K^{(6)} &= U_{-1} \cap \text{(inside of the unit circle)} \cap \text{(lower half plane)}.
\end{align*}
\begin{figure}[H]
    \centering
    \begin{subfigure}{.33\textwidth}
    \includegraphics[width=0.9\linewidth]{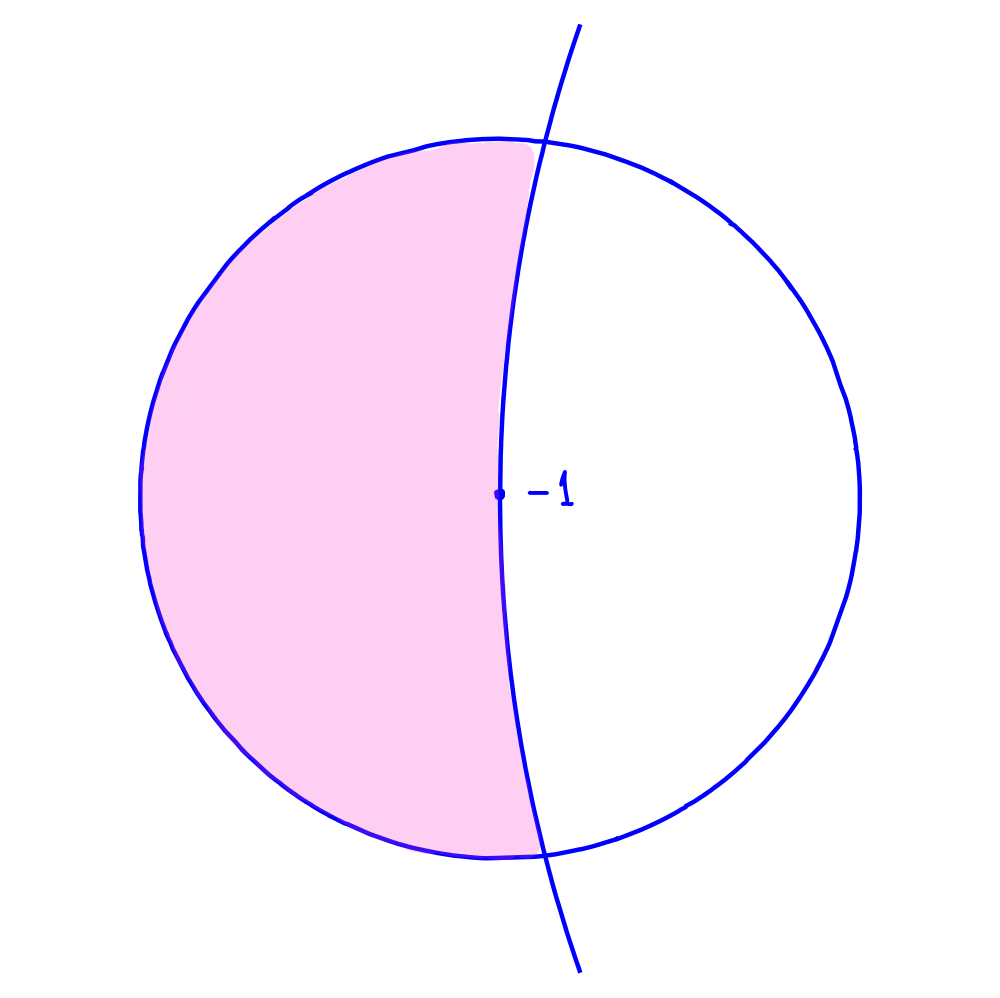}
    \caption{$K^{(4)}$}
    \label{Picture of K^{(4)}}
    \end{subfigure}%
    \begin{subfigure}{.33\textwidth}
    \includegraphics[width=0.9\linewidth]{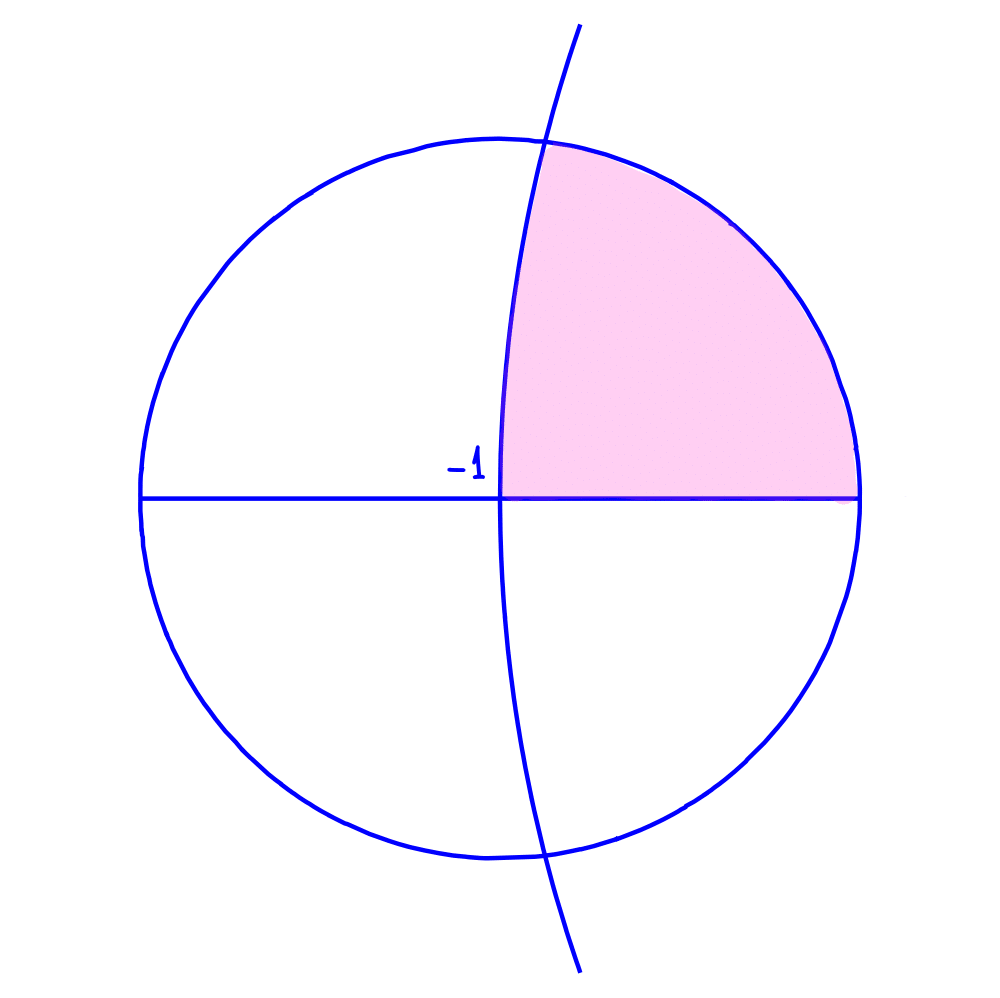}
    \caption{$K^{(5)}$}
    \label{Picture of K^{(5)}}
    \end{subfigure}
    \begin{subfigure}{.33\textwidth}
    \includegraphics[width=0.9\linewidth]{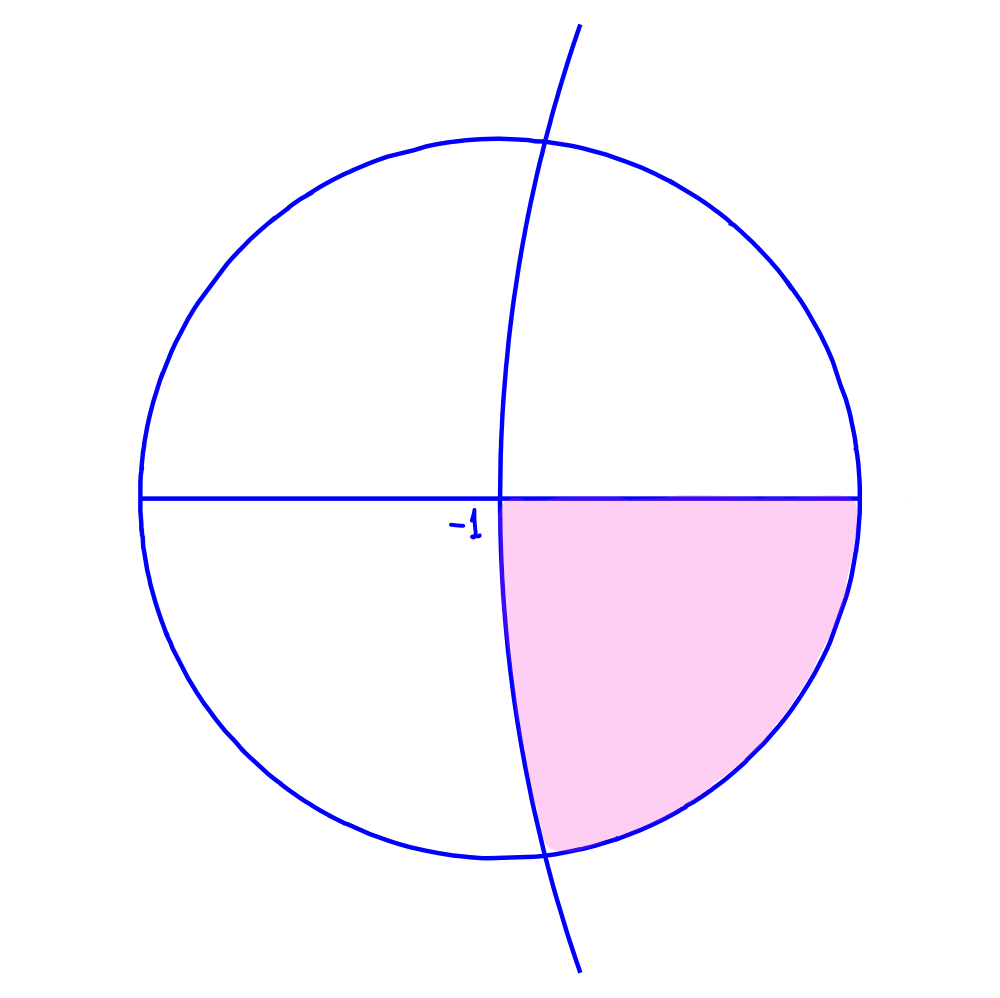}
    \caption{$K^{(6)}$}
    \label{Picture of K^{(6)}}
    \end{subfigure}
    \caption{Regions $K^{(4,5,6)}$.}
    \label{Pictures of K^{(ii)}}
\end{figure}
Let \( (\lambda+1)^\nu \) be the branch with the cut \( \{ \lambda \geq -1\} \) where we take \(\arg(\lambda+1) \in (0,2\pi) \). Then, one can write the global parametrix $P^{(gl)}(\lambda)$ in \( U_{-1} \) as
\begin{equation}
\begin{aligned}
P^{(gl)}(\lambda) = \Theta_{-1} (\lambda)^{\sigma_3} (\lambda + 1)^{-\nu \sigma_3} C^{(-1)}, \quad C^{(-1)} :=
\begin{dcases}
    I & \text{in } K^{(4)},\\
    \left[ B^{\R}\right]^{-\sigma_3} & \text{in } K^{(5)},\\
    \left[ -B^{\R}\right]^{\sigma_3} & \text{in } K^{(6)},
\end{dcases}
\end{aligned} \label{asymptotics of Pgl near -1}
\end{equation}
where $\Theta_{-1}(\lambda)$ is analytic in a neighborhood of $\overline U_{-1}$ and \( \Theta_{-1}(-1) = 2^\nu e^{\pi i\nu} \).


\subsubsection{Local Parametrix at \( \lambda = 1 \)} \label{local param}

Since the jump matrices forming \( G_{\tilde X} \) are not uniformly close to the identity matrix in \( U_1 \), we need to solve RHP~\ref{X tilde RH problem} locally in \( U_1 \). There are many such solutions as any of them can be multiplied by a holomorphic matrix function on the left. We seek one that matches well the global parametrix on \( \partial U_1 \).

To find the desired local parametrix, let us observe that the jump relations in RHP~\ref{X tilde RH problem}(2) can be changed within \( U_1 \) into the ones depicted on Figure~\ref{jump matrices for Phi near 1} 
\begin{figure}[ht!]
    \renewcommand\thesubfigure{\Alph{subfigure}}

    \centering
    \begin{subfigure}{.5\textwidth}
    \includegraphics[width=1\linewidth]{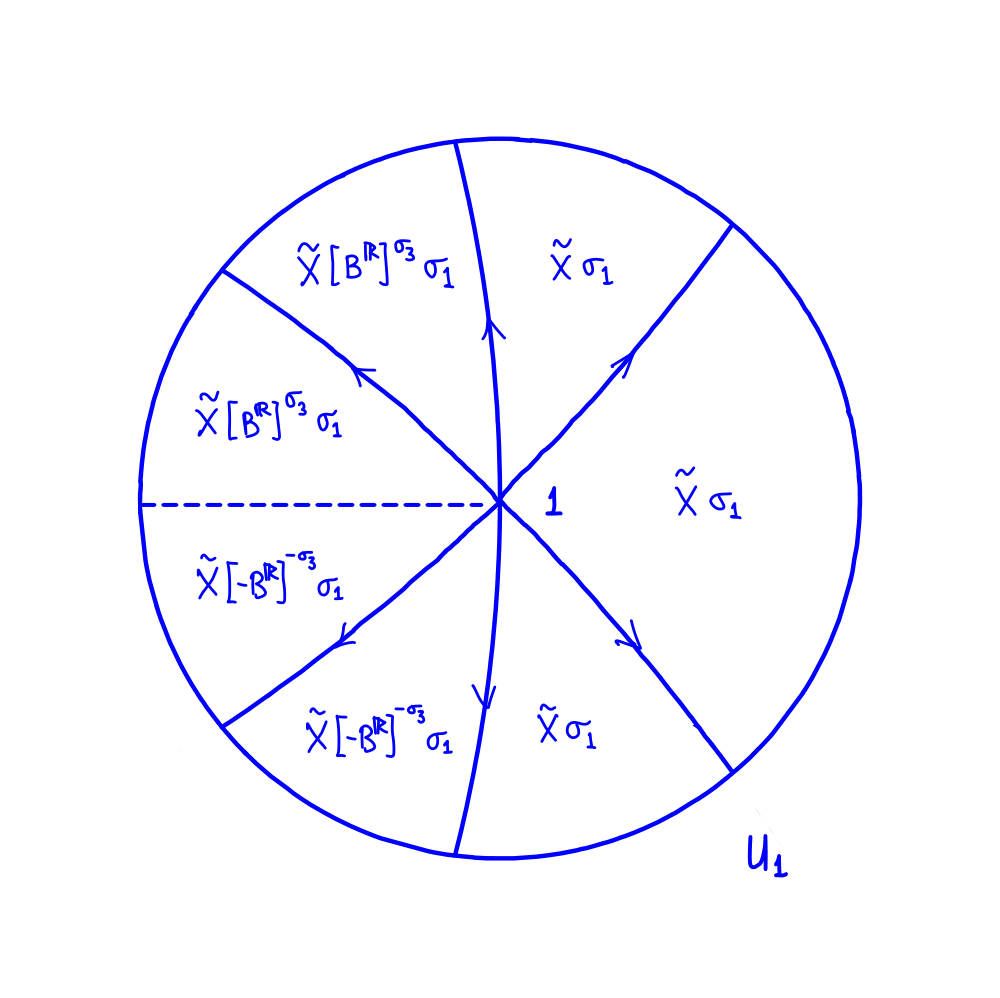}
    \caption{}
    \label{defn of Phi near 1}
    \end{subfigure}%
    \begin{subfigure}{.5\textwidth}
    \includegraphics[width=1\linewidth]{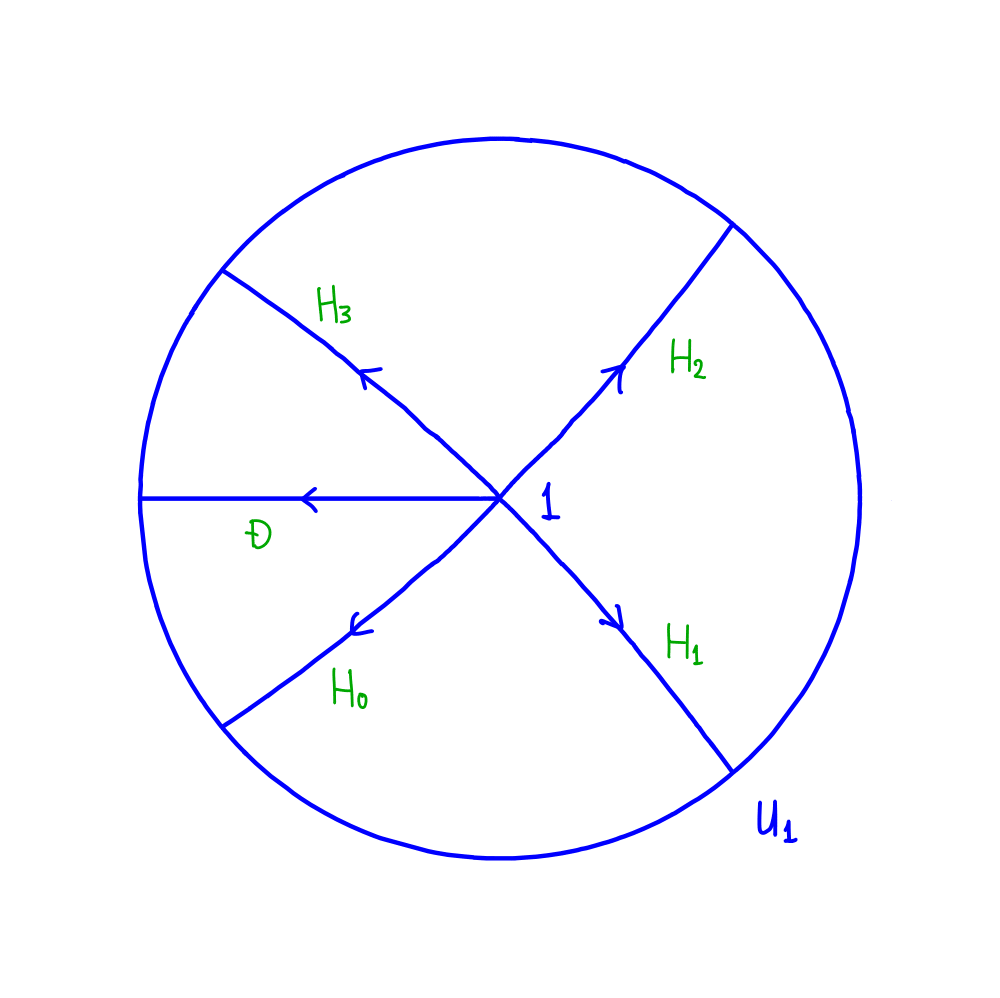}
    \caption{}
    \label{jump matrices for Phi near 1}
    \end{subfigure}
    \caption{Local transformation of \( \tilde X(\lambda) \mapsto \tilde X(\lambda)C^{(1)-1}\sigma_1 \) and the resulting jump relations.}
\end{figure}
via the transformation \( \tilde X(\lambda) \mapsto \tilde X(\lambda)C^{(1)-1}\sigma_1 \), see Figure~\ref{defn of Phi near 1}, where \( C^{(1)}\) was introduced in \eqref{asymptotics of Pgl near 1} and the resulting jump matrices are given by
\begin{align}
\begin{aligned}
&H_0 = \begin{pmatrix}
    1 & 0\\
    \frac{\overline{A}}{B^{\R}} e^{2 \pi i \nu} e^{2x \varphi} & 1
\end{pmatrix}, \quad
H_1 = \begin{pmatrix}
    1 & \frac{A}{B^{\R}} e^{-2x \varphi}\\
    0 & 1
\end{pmatrix}, \\
&H_2 = \begin{pmatrix}
    1 & 0\\
    -\frac{\overline{A}}{B^{\R}} e^{2x \varphi} & 1
\end{pmatrix}, \quad
H_3 = \begin{pmatrix}
    1 & -\frac{A}{B^{\R}} e^{2\pi i \nu} e^{-2x \varphi}\\
    0 & 1
\end{pmatrix},\\
& D = e^{2 \pi i \nu \sigma_3}, 
\end{aligned} \label{Phi jump matrices}
\end{align}
and $\nu$ was defined in \eqref{defn of nu}. It readily follows from \eqref{defn X} that
\begin{align}
    2 \varphi(\lambda) = i + \frac{i}{2}(\lambda - 1)^2 + \O\left((\lambda - 1)^3\right) \quad \text{as} \quad \lambda\to 1.
    \label{2phi near 1}
\end{align}
Hence, the quadratic local form of $2 \varphi(\lambda)$ and the structure of the jump matrices \eqref{Phi jump matrices} lead us to the following Riemann-Hilbert problem: 
\begin{RHP}\label{parabolic cylinder rhp}
Find a $2 \times 2$ matrix function $Z(z)$ such that
\begin{enumerate}
    \item $Z(z)$ is analytic for $z \in \C \setminus \Gamma_{Z}$, where $\Gamma_Z$ is depicted in Figure~\ref{pic for the Z problem};
    \item $Z(z)$ has continuous traces on $\Gamma_Z\setminus\{0\}$ that satisfy
    \begin{align*}
        Z_+(z) = Z_-(z) G_{Z}, \quad z \in \Gamma_{Z},
    \end{align*}
    where the jump matrices comprising $G_{Z}$ are assigned as on Figure \ref{pic for the Z problem} and given by
    \begin{align}
    \begin{aligned}
        &\widetilde H_0 = \begin{pmatrix}
        1 & 0\\
        - s_2 e^{2 \pi i \nu} e^{-\frac{z^2}{2}} & 1
        \end{pmatrix}, \quad
        \widetilde H_1 = \begin{pmatrix}
        1 & s_1 e^{\frac{z^2}{2}}\\
        0 & 1
        \end{pmatrix}, \\
        &\widetilde H_2 = \begin{pmatrix}
        1 & 0\\
        s_2 e^{-\frac{z^2}{2}} & 1
        \end{pmatrix}, \quad
        \widetilde H_3 = \begin{pmatrix}
        1 & - s_1 e^{2\pi i \nu} e^{\frac{z^2}{2}}\\
        0 & 1
        \end{pmatrix},\\
        & \widetilde D = e^{2 \pi i \nu \sigma_3}, \quad s_1 = \frac{A}{B^{\R}} e^{-ix}, \quad s_2 = -\frac{\overline{A}}{B^{\R}} e^{ix};
    \end{aligned}
    \end{align}
    \item it holds that
    \begin{align*}
        Z(z) = (I + \O(1/z)) z^{-\nu \sigma_3} \quad \text{as} \quad z\to\infty.
    \end{align*}
\end{enumerate}
\end{RHP}
\begin{figure}[htbp]
    \centering
    \includegraphics[width=6cm]{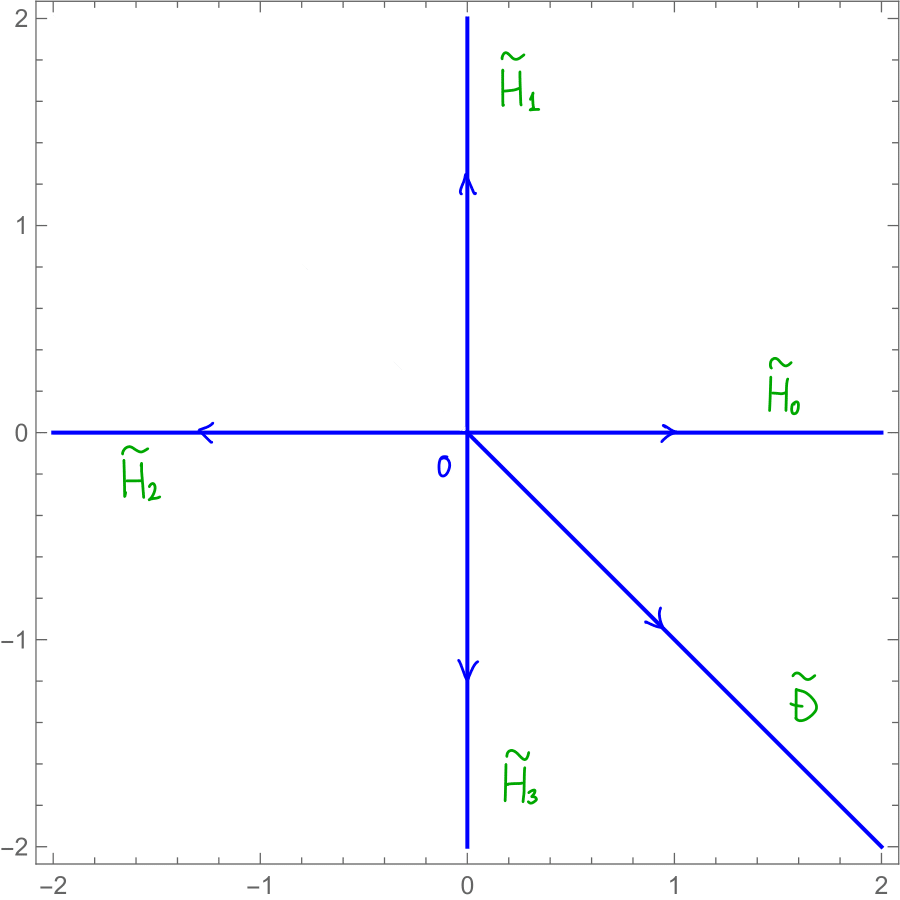}
    \caption{The contour $\Gamma_Z$, which consists of coordinate axes and the ray \( \arg(z) = -\pi /4 \), and the jump matrices $G_{Z}$ for RHP~\ref{parabolic cylinder rhp}.}
    \label{pic for the Z problem}
\end{figure}

RHP~\ref{parabolic cylinder rhp} admits an explicit solution in terms of the parabolic cylinder functions. Namely, if we set \( \arg(z)\in(-\frac\pi 4,\frac {7\pi}4 ) \), then \( Z(z)e^{\frac{z^2}4\sigma_3} \) is equal to
\[
 \begin{cases}
    \begin{pmatrix} D_{-\nu}\left(ze^{\frac{\pi i}2} \right )e^{\frac{\pi i\nu }2} & -\alpha D_{\nu-1}(z) \\ i\beta D_{-\nu-1}\left(ze^{\frac{\pi i}2}\right )e^{\frac{\pi i\nu }2} & D_{\nu}(z) \end{pmatrix}, & \arg(z)\in\left(-\frac\pi4,0\right), \medskip \\
    \begin{pmatrix} D_{-\nu}\left(ze^{-\frac{\pi i}2} \right )e^{-\frac{\pi i\nu }2} & -\alpha D_{\nu-1}(z) \\ -i\beta D_{-\nu-1}\left(ze^{-\frac{\pi i}2}\right )e^{-\frac{\pi i\nu }2} & D_{\nu}(z) \end{pmatrix}, & \arg(z)\in\left(0,\frac\pi2\right), \medskip \\
    \begin{pmatrix} D_{-\nu}\left(ze^{-\frac{\pi i}2} \right )e^{-\frac{\pi i\nu }2} & \alpha D_{\nu-1}\left(ze^{-\pi i} \right)e^{\pi i \nu} \\ -i\beta D_{-\nu-1}\left(ze^{-\frac{\pi i}2}\right )e^{-\frac{\pi i\nu }2} & D_{\nu}\left(ze^{-\pi i} \right)e^{\pi i \nu} \end{pmatrix}, & \arg(z)\in\left(\frac\pi2,\pi\right), \medskip \\
    \begin{pmatrix} D_{-\nu}\left(ze^{-\frac{3\pi i}2} \right )e^{-\frac{3\pi i\nu }2} & \alpha D_{\nu-1}\left(ze^{-\pi i} \right)e^{\pi i \nu} \\ i\beta D_{-\nu-1}\left(ze^{-\frac{3\pi i}2}\right )e^{-\frac{3\pi i\nu }2} & D_{\nu}\left(ze^{-\pi i} \right)e^{\pi i \nu} \end{pmatrix}, & \arg(z)\in\left(\pi,\frac{3\pi}2\right), \medskip \\
    \begin{pmatrix} D_{-\nu}\left(ze^{-\frac{3\pi i}2} \right )e^{-\frac{3\pi i\nu }2} & -\alpha D_{\nu-1}\left(ze^{-2\pi i} \right)e^{2\pi i \nu} \\ i\beta D_{-\nu-1}\left(ze^{-\frac{3\pi i}2}\right )e^{-\frac{3\pi i\nu }2} & D_{\nu}\left(ze^{-2\pi i} \right)e^{2\pi i \nu} \end{pmatrix}, & \arg(z)\in\left(\frac{3\pi}2,\frac{7\pi}{4}\right),
\end{cases}
\]
where
\[
\alpha = -e^{-2\pi i\nu}\frac{i}{s_2}\frac{\sqrt{2\pi}}{\Gamma(\nu)} = e^{-\pi i \nu} s_1 \frac{\Gamma(1-\nu)}{\sqrt{2\pi}} \quad \text{and} \quad \alpha\beta = - \nu.
\]
Recall that \(\overline A = p B^\R \), \( |A|^2 = 1 + (B^\R)^2 \), \(\kappa = -\frac1\pi \ln B^\R\), and \( \nu = \frac12+i\kappa \). Therefore,
\begin{align}
\alpha &= e^{\frac{\pi i}2 - ix - 2\pi i\nu }\frac{B^\R}{\overline A}  \frac{\sqrt{2\pi}}{\Gamma(\nu)} \\
&= e^{- ix + 2\pi \kappa -\frac{\pi i}2 } \frac{B^\R}{\overline A} \frac{|A|}{\sqrt{B^\R}} e^{-i\arg(\Gamma(\nu))} \\
& = \big( B^\R\big)^{-3/2} e^{-i(x + \arg(p\Gamma(\nu)) + \frac\pi 2)},
\label{defn of alpha}
\end{align}
where we used the identity \( \pi |\Gamma(\frac12+ i \kappa)|^{-2} = \cosh(\pi \kappa)\) for the first equality. 

The fact that the above-defined matrix functions satisfy RHP~\ref{parabolic cylinder rhp}(2) can be easily verified using the identity
\[
D_\mu(z) = \frac{\Gamma(\mu+1)}{\sqrt{2\pi}} \left( e^{\frac{\pi i \mu}2} D_{-\mu-1}\left( ze^{\frac{\pi i}2}\right) + e^{-\frac{\pi i \mu}2} D_{-\mu-1}\left( ze^{-\frac{\pi i}2}\right) \right).
\]
Furthermore, using the asymptotic expansion formula
\[
D_\mu(z) \sim z^\mu \left( 1 - \frac{\mu(\mu-1)}{2z^2} + \frac{\mu(\mu-1)(\mu-2)(\mu-3)}{8z^4} + \cdots \right) e^{-\frac{z^2}4},
\]
which holds uniformly in sectors \( |\arg(z)|\leq \frac{3\pi}4 - \delta\), \( \delta>0 \), one can check that
\begin{equation}
\label{matrix Z}
Z(z) = \left( I + \frac1z \begin{pmatrix} 0 & -\alpha \\ \beta & 0 \end{pmatrix} + \frac1{z^2} \begin{pmatrix} \frac{\nu(1+\nu)}2 & 0 \\ 0 & \frac{\nu(1-\nu)}2 \end{pmatrix} + \O\left(\frac1{z^3}\right) \right) z^{-\nu\sigma_3}
\end{equation}
uniformly for \( z\to\infty \), which is a refinement of RHP~\ref{parabolic cylinder rhp}(3).

To set up correspondence between \( \lambda \) and \( z \) planes, we introduce the following conformal mapping on $U_1$:
\begin{equation}
\label{Z <--> lambda}
    z_1(\lambda) := e^{\frac{\pi i}{2}} \sqrt{4 \varphi(\lambda) - 2i} = e^{\frac{3\pi i}{4}} \left(\lambda - 1 + \O\big((\lambda-1)^2\big) \right) \quad \text{as} \quad \lambda\to1,
\end{equation}
where asymptotics at \( 1 \) follows from \eqref{2phi near 1} and also fixes the branch of the square root used in \eqref{Z <--> lambda}. Recall that the contour \( \Gamma_{\tilde X}\cap U_1 \) consists of the curves on which the real or imaginary parts of \( 2\varphi(\lambda) - i \) vanish. Hence, one can easily see that \( x^{1/2} z_1(\lambda )\) maps the contour on Figure~\ref{jump matrices for Phi near 1} into \( \Gamma_Z \).

We now define local parametrix \( P^{(1)}(\lambda) \) by setting
\[
P^{(1)}(\lambda) := V^{(1)}(\lambda) Z\left(x^{1/2} z_1(\lambda)\right) \sigma_1 C^{(1)}, \quad \lambda \in U_1,
\]
where the matrix \( Z(x^{1/2} z_1(\lambda)) \sigma_1 C^{(1)} \) has the same jumps in \( U_1 \) as \( \tilde X(\lambda) \) and \( V^{(1)}(\lambda) \) is a holomorphic prefactor, see \eqref{asymptotics of Pgl near 1}, given by
\[
    V^{(1)}(\lambda) := P^{(gl)}(\lambda) \, \left(x^{1/2} z_1(\lambda)\right)^{- \nu \sigma_3} C^{(1)-1} \sigma_1.
\]

Since $B^{\R} > 0$ and therefore $\Re \nu = \frac{1}{2}$, direct computations using \eqref{asymptotics of Pgl near 1} and \eqref{matrix Z} yield the following lemma.

\begin{lem} \label{lemma 2.3} 
It holds uniformly for \( \lambda\in\partial U_1 \) that
\[
    P^{(gl)}(\lambda) \left[ P^{(1)}(\lambda) \right]^{-1} = I + \alpha \, x^{\nu-\frac12} \, \frac{p_1(\lambda)}{\lambda-1} \begin{pmatrix}
        0 & 0 \\
        1 & 0
    \end{pmatrix} + \O\left( \frac1x \right)
\]
as \( x\to \infty \), where \( p_1(\lambda) \) is a holomorphic function on \( \overline U_1 \) given by
\[
p_1(\lambda) = \Theta_1(\lambda)^2 \left( \frac{z_1(\lambda)}{\lambda-1} \right)^{2\nu-1}.
\]
In particular, it follows from \eqref{asymptotics of Pgl near 1} and \eqref{Z <--> lambda} that
\begin{equation}
\label{p1(1)}
p_1(1) = 2^{2\nu} e^{\frac{3\pi i}{4}(2\nu-1)} = 2\big( B^\R \big )^{3/2} e^{i\kappa \ln 4}.    
\end{equation}
\end{lem}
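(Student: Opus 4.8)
The plan is to carry out directly the computation indicated by the phrase ``direct computations using \eqref{asymptotics of Pgl near 1} and \eqref{matrix Z}''. First I would substitute the definitions of $P^{(1)}$ and $V^{(1)}$ and cancel the non-analytic factors. Writing $z := x^{1/2}z_1(\lambda)$, one has
\[
\big[P^{(1)}(\lambda)\big]^{-1} = C^{(1)-1}\sigma_1\,Z(z)^{-1}\,\sigma_1\,C^{(1)}\,(x^{1/2}z_1(\lambda))^{\nu\sigma_3}\,[P^{(gl)}(\lambda)]^{-1},
\]
and then, using $P^{(gl)}(\lambda) = \Theta_1(\lambda)^{-\sigma_3}(\lambda-1)^{\nu\sigma_3}C^{(1)}$ from \eqref{asymptotics of Pgl near 1}, the two copies of $C^{(1)}$ (which commute with the diagonal power $(x^{1/2}z_1(\lambda))^{\nu\sigma_3}$) cancel, leaving
\[
P^{(gl)}(\lambda)\big[P^{(1)}(\lambda)\big]^{-1} = \Theta_1(\lambda)^{-\sigma_3}(\lambda-1)^{\nu\sigma_3}\,\big[\sigma_1 Z(z)^{-1}\sigma_1\big]\,(x^{1/2}z_1(\lambda))^{\nu\sigma_3}(\lambda-1)^{-\nu\sigma_3}\Theta_1(\lambda)^{\sigma_3}.
\]
The disappearance of $C^{(1)}$ is precisely what renders the final estimate uniform on all of $\partial U_1$, independently of which of the regions $K^{(1)},K^{(2)},K^{(3)}$ the point $\lambda$ lies in.

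Next I would insert the refined expansion \eqref{matrix Z}, namely $Z(z) = \big(I + z^{-1}N + z^{-2}M + \O(z^{-3})\big)z^{-\nu\sigma_3}$ with $N = \left(\begin{smallmatrix}0 & -\alpha\\ \beta & 0\end{smallmatrix}\right)$ off-diagonal and $M$ diagonal, invert it (here $N^2 = -\alpha\beta\,I = \nu I$, so the $z^{-2}$ coefficient of $Z(z)^{-1}$ stays diagonal), and commute the $\sigma_1$ through $z^{\pm\nu\sigma_3}$, which flips the sign of the exponent; since $\sigma_1 N\sigma_1$ is again off-diagonal and $\sigma_1 M\sigma_1$ diagonal, the leading $z^{\mp\nu\sigma_3}$ factors telescope with $(\lambda-1)^{\pm\nu\sigma_3}$ to produce $I$. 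What remains is the conjugation of $I - z^{-1}\sigma_1 N\sigma_1 + \O(z^{-2})$ by the diagonal matrix $\Theta_1(\lambda)^{-\sigma_3}\big((\lambda-1)/z_1(\lambda)\big)^{\nu\sigma_3}x^{-\nu\sigma_3/2}$. This conjugation multiplies the $(2,1)$-entry of the off-diagonal correction by $x^{\nu}\Theta_1(\lambda)^{2}(z_1(\lambda)/(\lambda-1))^{2\nu}$ and its $(1,2)$-entry by $x^{-\nu}\Theta_1(\lambda)^{-2}((\lambda-1)/z_1(\lambda))^{2\nu}$; combined with the overall prefactor $z^{-1} = x^{-1/2}z_1(\lambda)^{-1}$, the $(2,1)$-entry becomes $\alpha\,x^{\nu-\frac12}\,p_1(\lambda)/(\lambda-1)$ with $p_1(\lambda) = \Theta_1(\lambda)^{2}\big(z_1(\lambda)/(\lambda-1)\big)^{2\nu-1}$, while the $(1,2)$-entry is $\O(x^{-\nu-\frac12}) = \O(1/x)$ because $\Re\nu = \tfrac12$. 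The diagonal $z^{-2}M$ contribution is unaffected by the diagonal conjugation and is $\O(z^{-2}) = \O(1/x)$ on $\partial U_1$; likewise the $\O(z^{-3})$ tail of \eqref{matrix Z}, even after the worst-case amplification by $x^{1/2}$ coming from the $g^{-2}$-type factors, stays $\O(1/x)$. Collecting terms gives the displayed expansion, and $\Theta_1(1) = 2^\nu$, $z_1(\lambda)/(\lambda-1)\to e^{3\pi i/4}$ as $\lambda\to1$ from \eqref{Z <--> lambda}, together with $2\nu-1 = 2i\kappa$ and $B^\R = e^{-\pi\kappa}$, yield \eqref{p1(1)}.

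The main point requiring care — indeed essentially the only non-mechanical one — is the bookkeeping of branch cuts: the branch of $(\lambda-1)^\nu$ used in $P^{(gl)}$ and in $p_1$, the branch of $z_1(\lambda)^{-\nu}$ entering $(x^{1/2}z_1(\lambda))^{-\nu\sigma_3}$ inside $V^{(1)}$, and the branch of the resulting $\big(z_1(\lambda)/(\lambda-1)\big)^{2\nu-1}$ must be chosen mutually consistently, so that $(\lambda-1)^{\nu\sigma_3}(x^{1/2}z_1(\lambda))^{-\nu\sigma_3}$ is single-valued on $U_1$ (this is exactly what makes $V^{(1)}$ analytic and $p_1$ holomorphic on $\overline{U_1}$); once the branch of $z_1$ is pinned down by the asymptotics in \eqref{Z <--> lambda}, all of these are determined. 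The uniformity of the $\O(1/x)$ remainder is then immediate: on the fixed circle $\partial U_1$ the functions $z_1(\lambda)$ and $\Theta_1(\lambda)$ are bounded and bounded away from $0$, $|z| = x^{1/2}|z_1(\lambda)|\to\infty$ uniformly, and the error term in \eqref{matrix Z} is $\O(z^{-3})$ uniformly in the relevant sector of the $z$-plane.
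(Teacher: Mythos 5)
Your computation is correct and is precisely the ``direct computation using \eqref{asymptotics of Pgl near 1} and \eqref{matrix Z}'' that the paper invokes without writing out. The key steps you identify — cancellation of the diagonal matrices $C^{(1)}$, telescoping of the $z^{\pm\nu\sigma_3}$ and $(\lambda-1)^{\pm\nu\sigma_3}$ factors, and the observation that $\Re\nu=\tfrac12$ forces the $(1,2)$-entry of the off-diagonal correction and the $\O(z^{-2})$, $\O(z^{-3})$ terms all to be $\O(1/x)$ while the $(2,1)$-entry survives at order $x^{\nu-1/2}$ — together with the branch-cut bookkeeping you flag, constitute exactly the intended argument.
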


\subsubsection{Local Parametrix at \( \lambda = -1 \)} \label{local param 2}

The local parametrix $P^{(-1)}(\lambda)$ is constructed similarly. First, we transform \( \lambda \)  plane into \( z \) plane via the conformal map
\begin{align}
\label{Z <--> lambda -1}
    z_{-1}(\lambda) := e^{-\pi i} \sqrt{4\varphi(\lambda) + 2i} = e^{-\frac{\pi i}{4}} (\lambda + 1) + \O\big((\lambda+1)^2\big),
\end{align}
where the asymptotic formula holds as \( \lambda\to -1\) and the value \( z_{-1}^\prime(-1)=e^{-\frac{ \pi i}{4}} \) fixes the branch of the square root used to define the map. Again, one can see that \( x^{1/2} z_{-1}(\lambda) \) takes the stationary contour \( \Im(2\varphi+i) = 0\) into the coordinate axes.

Now, using the solution \( Z(z) \) of RHP~\ref{parabolic cylinder rhp}, the matrix \( C^{(-1)}\) from \eqref{asymptotics of Pgl near -1}, and the above conformal map, we define
\[
P^{(-1)}(\lambda) := V^{(-1)}(\lambda)Z\left(x^{1/2}z_{-1}(\lambda)\right)C^{(-1)}, \quad \lambda\in U_{-1},
\]
where the matrix \( Z(x^{1/2}z_{-1}(\lambda))C^{(-1)} \) has the same jumps on \( \Gamma_{\tilde X}\cap U_{-1} \) as \( \tilde X(\lambda)\) and \( V^{(-1)}(\lambda) \) is a holomorphic prefactor given by
\begin{align}
    V^{(-1)}(\lambda) = P^{(gl)}(\lambda) \, \left( x^{1/2} z_{-1}(\lambda)\right)^{\nu \sigma_3}C^{(-1)-1}.
    \label{defn of V check}
\end{align}
Similarly to Lemma~\ref{lemma 2.3}, we then conclude from \eqref{asymptotics of Pgl near -1} and \eqref{matrix Z} that the following asymptotics holds.

\begin{lem}\label{lemma 2.4} 
It holds uniformly for \( \lambda\in\partial U_{-1} \) that
\begin{align}
    P^{(gl)}(\lambda) \left[ P^{(-1)}(\lambda) \right]^{-1} = I + \alpha \, x^{\nu-\frac12}\,\frac{p_{-1}(\lambda)}{\lambda+1}\begin{pmatrix}
        0 & 1 \\
        0 & 0
    \end{pmatrix} + \O\left( \frac{1}{x} \right)
\end{align}
as \( x\to\infty \), where \( p_{-1}(\lambda) \) is a holomorphic function on \( \overline U_{-1} \) given by
\[
p_{-1}(\lambda) = \Theta_{-1}(\lambda)^2\left( \frac{z_{-1}(\lambda)}{\lambda+1} \right)^{2\nu-1}.
\]
In particular, it follows from \eqref{asymptotics of Pgl near -1} and \eqref{Z <--> lambda -1} that
\begin{align}
    p_{-1}(-1) = \left(2 \, e^{\pi i}\right)^{2 \nu} e^{-\frac{\pi i}{4}(2\nu-1)} = -p_1(1).
\end{align}
\end{lem}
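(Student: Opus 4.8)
The plan is to run the argument of Lemma~\ref{lemma 2.3}, adjusting only for the two structural differences between $P^{(-1)}$ and $P^{(1)}$: the holomorphic prefactor $V^{(-1)}$ carries $\bigl(x^{1/2}z_{-1}(\lambda)\bigr)^{+\nu\sigma_3}$ rather than $\bigl(x^{1/2}z_1(\lambda)\bigr)^{-\nu\sigma_3}$, and $Z$ is not multiplied by $\sigma_1$ on the right in the definition of $P^{(-1)}$. Set $w:=x^{1/2}z_{-1}(\lambda)$ and $M(w):=Z(w)\,w^{\nu\sigma_3}$. From $P^{(-1)}=V^{(-1)}Z(w)C^{(-1)}$ and $V^{(-1)}=P^{(gl)}\,w^{\nu\sigma_3}C^{(-1)-1}$, and since the diagonal matrices $C^{(-1)}$ and $w^{\pm\nu\sigma_3}$ commute, one obtains
\[
P^{(gl)}(\lambda)\bigl[P^{(-1)}(\lambda)\bigr]^{-1}
= P^{(gl)}(\lambda)\,C^{(-1)-1}\Bigl(w^{\nu\sigma_3}M(w)^{-1}w^{-\nu\sigma_3}\Bigr)C^{(-1)}\,\bigl[P^{(gl)}(\lambda)\bigr]^{-1}.
\]

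By \eqref{matrix Z}, $M(w)=I+\tfrac1w\begin{pmatrix}0&-\alpha\\\beta&0\end{pmatrix}+\tfrac1{w^{2}}\begin{pmatrix}\frac{\nu(1+\nu)}2&0\\0&\frac{\nu(1-\nu)}2\end{pmatrix}+\O(w^{-3})$; since $\begin{pmatrix}0&-\alpha\\\beta&0\end{pmatrix}^2=-\alpha\beta\,I=\nu I$ and the $1/w^{2}$-coefficient above is diagonal, the inverse $M(w)^{-1}$ has the same shape, with $1/w$-coefficient $-\begin{pmatrix}0&-\alpha\\\beta&0\end{pmatrix}$ (off-diagonal) and a purely diagonal $1/w^{2}$-coefficient. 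Conjugating by $w^{\nu\sigma_3}$ multiplies the $(1,2)$ and $(2,1)$ entries of each coefficient by $w^{2\nu}$ and $w^{-2\nu}$. Because $B^{\R}>0$ forces $\Re\nu=\tfrac12$, the term $\alpha\,w^{2\nu-1}$ in position $(1,2)$ is the only correction that is not $\O(1/x)$: on $\partial U_{-1}$ one has $|w^{2\nu-1}|=|w^{2i\kappa}|$ bounded, $|w^{-2\nu-1}|=\O(|w|^{-2})$, and $|w|\ge c\,x^{1/2}$ since $\partial U_{-1}$ is a fixed contour staying away from $-1$; here the uniformity in $z$ of \eqref{matrix Z} is used. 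Hence
\[
w^{\nu\sigma_3}M(w)^{-1}w^{-\nu\sigma_3}=I+\alpha\,w^{2\nu-1}\begin{pmatrix}0&1\\0&0\end{pmatrix}+\O\!\left(\tfrac1x\right)
\]
uniformly on $\partial U_{-1}$.

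Substituting $P^{(gl)}=\Theta_{-1}^{\sigma_3}(\lambda+1)^{-\nu\sigma_3}C^{(-1)}$ from \eqref{asymptotics of Pgl near -1}, and using that $P^{(gl)}$ and $\bigl[P^{(gl)}\bigr]^{-1}$ are bounded on $\partial U_{-1}$, the $\O(1/x)$ term contributes only $\O(1/x)$, the two $C^{(-1)}$ factors cancel, and conjugation of $\begin{pmatrix}0&1\\0&0\end{pmatrix}$ by $\Theta_{-1}^{\sigma_3}(\lambda+1)^{-\nu\sigma_3}$ multiplies it by $\Theta_{-1}(\lambda)^{2}(\lambda+1)^{-2\nu}$. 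Writing $w^{2\nu-1}=x^{\nu-\frac12}z_{-1}(\lambda)^{2\nu-1}$ and $z_{-1}^{2\nu-1}(\lambda+1)^{-2\nu}=\bigl(z_{-1}(\lambda)/(\lambda+1)\bigr)^{2\nu-1}(\lambda+1)^{-1}$ yields the asserted expansion with $p_{-1}(\lambda)=\Theta_{-1}(\lambda)^{2}\bigl(z_{-1}(\lambda)/(\lambda+1)\bigr)^{2\nu-1}$ (the leading coefficient $\alpha x^{\nu-1/2}p_{-1}(\lambda)/(\lambda+1)$ is genuinely $\O(1)$ on $\partial U_{-1}$, as $|\alpha|=(B^{\R})^{-3/2}$ and $|x^{\nu-1/2}|=1$ by \eqref{defn of alpha}). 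Because $z_{-1}$ is conformal on $U_{-1}$ with a simple zero at $-1$, the ratio $z_{-1}(\lambda)/(\lambda+1)$ is holomorphic and non-vanishing on $\overline U_{-1}$, so $p_{-1}$ is holomorphic there; evaluating at $\lambda=-1$ with $z_{-1}'(-1)=e^{-\pi i/4}$ (see \eqref{Z <--> lambda -1}) and $\Theta_{-1}(-1)=2^{\nu}e^{\pi i\nu}$ gives $p_{-1}(-1)=\bigl(2e^{\pi i}\bigr)^{2\nu}e^{-\frac{\pi i}{4}(2\nu-1)}$, and comparing exponents with \eqref{p1(1)}, using $e^{2\pi i\nu}=e^{\pi i}e^{\pi i(2\nu-1)}$, gives $p_{-1}(-1)=-p_1(1)$.

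The only point requiring genuine care is the branch bookkeeping: the branch of $w^{\nu\sigma_3}$ built into $V^{(-1)}$ (equivalently, of $z_{-1}^{2\nu-1}$), the branch of $(\lambda+1)^{-\nu}$ fixed below \eqref{asymptotics of Pgl near -1} by $\arg(\lambda+1)\in(0,2\pi)$, and the branch of $\bigl(z_{-1}(\lambda)/(\lambda+1)\bigr)^{2\nu-1}$ must be chosen compatibly so that the rearrangement $z_{-1}^{2\nu-1}(\lambda+1)^{-2\nu}=\bigl(z_{-1}/(\lambda+1)\bigr)^{2\nu-1}(\lambda+1)^{-1}$ holds along all of $\partial U_{-1}$ and so that the value at $\lambda=-1$ comes out as stated; concretely, the jump of $z_{-1}^{2\nu-1}$ across the cut $\{\lambda\ge-1\}\cap U_{-1}$ must cancel that of $(\lambda+1)^{-2\nu}$, which is precisely what makes $p_{-1}$ single-valued on $\overline U_{-1}$. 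It is also worth recording that the surviving nilpotent factor is $\begin{pmatrix}0&1\\0&0\end{pmatrix}$ rather than the $\begin{pmatrix}0&0\\1&0\end{pmatrix}$ of Lemma~\ref{lemma 2.3}, reflecting the absence of $\sigma_1$ and the replacement of $-\nu\sigma_3$ by $+\nu\sigma_3$ in the prefactor.
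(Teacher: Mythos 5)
Your proof is correct and follows exactly the route the paper indicates (the paper only says ``Similarly to Lemma~\ref{lemma 2.3}, we then conclude from \eqref{asymptotics of Pgl near -1} and \eqref{matrix Z} that the following asymptotics holds'' without details). You correctly set up the conjugation $w^{\nu\sigma_3}M(w)^{-1}w^{-\nu\sigma_3}$ with $M(w)=Z(w)w^{\nu\sigma_3}$, invert the series from \eqref{matrix Z}, use $\Re\nu=\tfrac12$ to see that only the $(1,2)$-entry survives to leading order while the $(2,1)$-entry and diagonal corrections are $\O(1/x)$, conjugate by $\Theta_{-1}^{\sigma_3}(\lambda+1)^{-\nu\sigma_3}$ to get $p_{-1}$, and verify $p_{-1}(-1)=-p_1(1)$ by direct exponent bookkeeping; the branch discussion (the cut of $z_{-1}^{2\nu-1}$ cancelling that of $(\lambda+1)^{-2\nu}$ so that $p_{-1}$ is single-valued on $\overline U_{-1}$) is precisely the point that needed care and you flag it correctly.
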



\subsubsection{Small Norm Problem} \label{dressing}

We look for the solution of RHP~\ref{X tilde RH problem} in the form
\begin{equation}
\label{form of tilde X}
    \tilde{X} (\lambda) = R(\lambda)\begin{dcases}
        P^{(gl)}(\lambda) & \textit{if } \lambda \in \C \setminus (S^1 \cup U_1 \cup U_{-1}),\\
        P^{(\pm1)}(\lambda) & \textit{if } \lambda \in U_{\pm1},
    \end{dcases}
\end{equation}
where the error function \( R(\lambda)\) solves of the following Riemann-Hilbert problem.
\begin{RHP}\label{R-rhp}
Find a $2 \times 2$ matrix function $R(\lambda)$ such that
\begin{enumerate}
    \item $R(\lambda)$ is analytic for $\lambda \in \C \setminus \Gamma_R$, where \( \Gamma_R \), see Figure \ref{pic for the R problem}, is given by 
    \[ 
    \Gamma_R:= \Gamma_R^* \cup \partial U_1 \cup \partial U_{-1}, \quad \Gamma_R^* := \Gamma_{\tilde X}\setminus (S^1\cup \overline U_1 \cup \overline U_{-1}); 
    \]
    \item one-sides traces $R_\pm(\lambda)$ exists a.e. on \( \Gamma_R \), belong to \( L^2(\Gamma_R) \), and satisfy
    \begin{align*}
        R_+(\lambda) = R_-(\lambda) G_{R}, \quad \lambda \in \Gamma_R,
    \end{align*}
    where
    \begin{align}
        G_R := \begin{dcases}
            P^{(gl)}(\lambda) G_{\tilde{X}} \left[ P^{(gl)}(\lambda) \right]^{-1} & \text{if } \lambda \in \Gamma_R^*,\\
            P^{(gl)}(\lambda) \left[ P^{(\pm1)}(\lambda) \right]^{-1} & \text{if } \lambda \in \partial U_{\pm1};
        \end{dcases}
    \end{align}
    \item it holds that
    \begin{align*}
        R(\lambda) = I + \O(1/\lambda) \quad \text{as} \quad \lambda\to\infty.
    \end{align*}
\end{enumerate}
\end{RHP}
\begin{figure}[htbp]
    \centering
    \includegraphics[width=6cm]{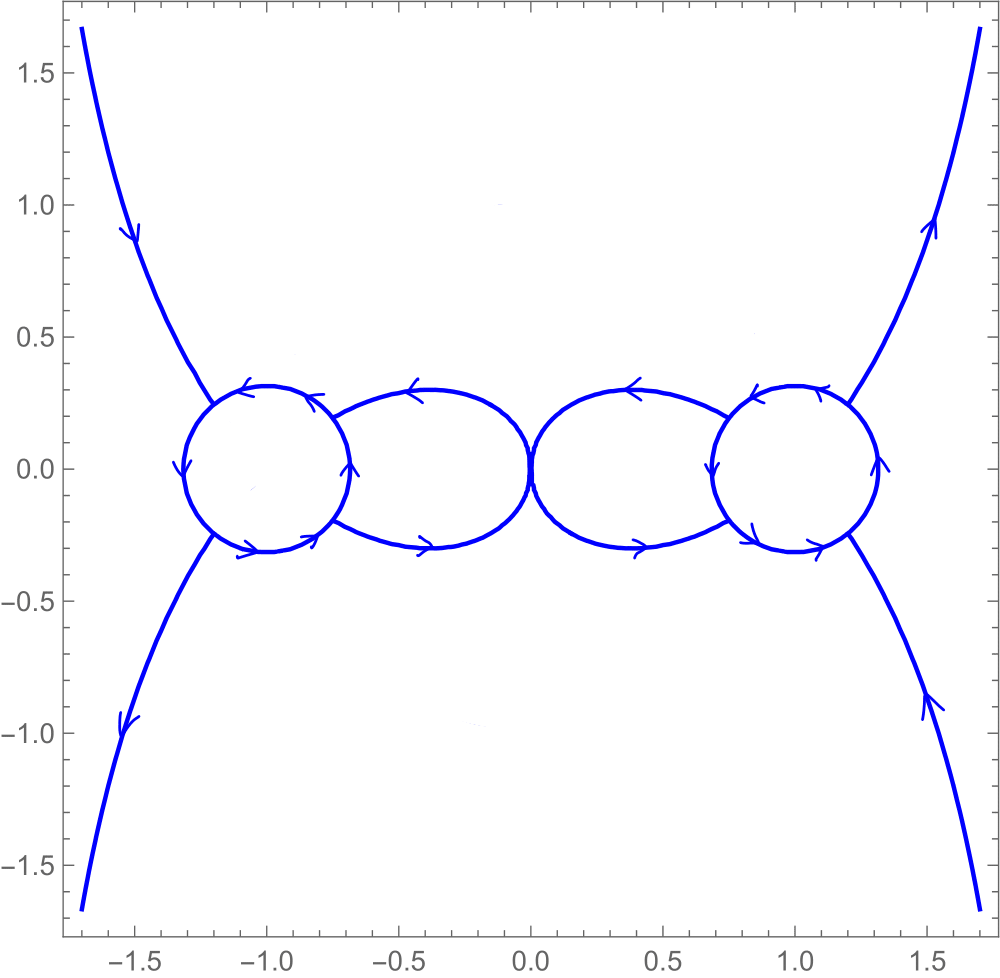}
    \caption{The contour $\Gamma_R$ for RHP~\ref{R-rhp}.}
    \label{pic for the R problem}
\end{figure}

It readily follows from \eqref{defn of alpha} and Lemmas~\ref{lemma 2.3}--\ref{lemma 2.4} that the jump \( G_R \) is not uniformly close to \( I \) on \(\partial U_1 \cup \partial U_{-1} \) as \( |x^{\nu-\frac12}|=1 \). To circumvent this difficulty, let
\[
E^{(1)}(\lambda) := I - \alpha x^{\nu-\frac12}\frac{p_1(\lambda)}{\lambda-1} \begin{pmatrix}
        0 & 0 \\
        1 & 0
    \end{pmatrix}, \quad 
E^{(-1)}(\lambda) := I - \alpha x^{\nu-\frac12}\frac{p_{-1}(\lambda)}{\lambda+1}\begin{pmatrix}
        0 & 1 \\
        0 & 0
    \end{pmatrix}.
\]
We look for the solution of RHP~\ref{R-rhp} in the form
\begin{align}
    R(\lambda) = \Tilde R(\lambda) \begin{dcases}
        E^{(\pm1)}(\lambda)^{-1} & \text{if } \lambda \in U_{\pm1}, \\
        I & \text{otherwise},
    \end{dcases}
    \label{defn of R tilde}
\end{align}
where $\tilde{R}(\lambda)$ solves the following Riemann-Hilbert problem.
\begin{RHP}\label{R tilde rhp}
Find a $2 \times 2$ matrix function $\Tilde{R}(\lambda)$ such that
\begin{enumerate}
    \item $\Tilde{R}(\lambda)$ is analytic for $\lambda \in \C \setminus (\Gamma_R \cup \{\pm 1\})$, where \( \Gamma_R\) is the same as in RHP~\ref{R-rhp};
    \item one-sides traces $\Tilde R_\pm(\lambda)$ exists a.e. on \( \Gamma_R \), belong to \( L^2(\Gamma_R) \), and satisfy
    \begin{align*}
        \Tilde{R}_+(\lambda) = \Tilde{R}_-(\lambda) G_{\Tilde{R}}, \quad \lambda \in \Gamma_R,
    \end{align*}
    where
    \begin{align}
        G_{\Tilde{R}} = G_R \begin{dcases}
             I & \text{ if } \lambda\in \Gamma_R^*, \\
             E^{(\pm1)}(\lambda) & \text{if } \lambda \in \partial U_{\pm1};
        \end{dcases} \label{jump matrices G R tilde}
    \end{align}
    \item it holds that
    \begin{align*}
        \Tilde{R}(\lambda) = I + \O(1/\lambda) \quad \text{as} \quad \lambda\to\infty;
    \end{align*}
        \item $\Tilde{R}(\lambda) = (\Tilde{R}^{(1)}(\lambda), \Tilde{R}^{(2)}(\lambda))$ has simple poles at $\lambda = \pm 1$ with residues
    \begin{align}
        \res_{\lambda = 1} \Tilde{R}^{(1)}(\lambda) &= - \Tilde{\alpha} \Tilde{R}^{(2)}(1) \label{residue relation 1},\\
        \res_{\lambda = -1} \Tilde{R}^{(2)}(\lambda) &=  \Tilde{\alpha} \Tilde{R}^{(1)}(-1), \label{residue relation 2}
    \end{align}
    where, see \eqref{kappa phi}, \eqref{defn of alpha}, and \eqref{p1(1)}, we have that
    \begin{equation}
    \label{tilde alpha}
        \Tilde\alpha := \alpha x^{\nu-\frac12}p_1(1) = 2e^{-i( x - \kappa \ln x + \phi+\frac\pi2)}.
    \end{equation} 
\end{enumerate}
\end{RHP}

It readily follows from Lemmas~\ref{lemma 2.3}--\ref{lemma 2.4} and \eqref{tilde alpha} that 
\begin{equation}
\label{G jump 1}
\|G_{\Tilde R} - I \|_{L^\infty(\partial U_1\cup \partial U_{-1})} = \O\left( x^{-1} \right) \quad \text{as} \quad x\to\infty.    
\end{equation}
Moreover, it follows from \eqref{L1, D1, R1}, \eqref{L2, D2, R2}, \eqref{L3, D3, R3}, \eqref{L4, D4, R4} that the absolute value of the only non-zero entry of \( G_{\Tilde R}(\lambda) - I \)  on \( \Gamma_R^* \) is equal to \( |p|e^{-x|\Re(2\varphi)|} \), see Figure~\ref{X tilde problem pic}. Recall also that \( \Gamma_R^* \) is the set where \( |\Im(2\varphi)| = 1 \). We then get from \eqref{real part varphi} that
\begin{equation}
\label{real part varphi 2}
|\Re(2\varphi(\lambda))| = |\eta| \frac{|1-|\lambda|^2|}{2|\lambda|^2} = \frac1{|\xi|^{1/2}} \frac{|1-|\xi||}{\sqrt{2-|\xi|}}\frac{|1-|\lambda|^2|}{1+|\lambda|^2},
\end{equation}
where \( \lambda = \xi+i\eta \). This expression shows that \( |\Re(2\varphi(\lambda))|\geq c>0 \) on \( \Gamma_R^* \), where  constant \( c \) depends on the radii of \( U_1 \) and \( U_{-1} \). Observe also that \( \Gamma_R \) lies within the strip \(|\xi|<2\) and is asymptotic to the vertical lines \(\xi=\pm2\). Hence, we can conclude that
\begin{equation}
\label{G jump 2}
\|G_{\Tilde R} - I\|_{L^q(\Gamma_R^*)} = \O\left( e^{-cx} \right) \quad \text{as} \quad x\to\infty
\end{equation}
for a possibly adjusted constant \( c \) and any \( q\in[1,\infty]\). Altogether, we have that RHP~\ref{R tilde rhp} is a small norm problem. However, \( \Tilde R(\lambda) \) has poles. Following the dressing technique of \cite{BothnerIts}, we look for the solution of RHP~\ref{R tilde rhp} in the form
\begin{align}
    \Tilde{R}(\lambda) = (\lambda I + B) \hat{R}(\lambda) \begin{pmatrix}
        \lambda - 1 & 0 \\
        0 & \lambda + 1
    \end{pmatrix}^{-1},
    \label{defn of R check}
\end{align}
where the matrix $B$ will be specified shortly and $\hat{R}(\lambda)$ solves the following (small norm) Riemann-Hilbert problem.
\begin{RHP}\label{R hat rhp}
Find a $2 \times 2$ matrix function $\hat{R}(\lambda)$ such that
\begin{enumerate}
    \item $\hat{R}(\lambda)$ is analytic for $\lambda \in \C \setminus \Gamma_R$, where $\Gamma_R$ is the same as in RHP~\ref{R-rhp};
    \item one-sides traces $\hat R_\pm(\lambda)$ exists a.e. on \( \Gamma_R \), belong to \( L^2(\Gamma_R) \), and satisfy
    \begin{align*}
        \hat{R}_+(\lambda) = \hat{R}_-(\lambda) G_{\hat{R}}, \quad \lambda \in \Gamma_R,
    \end{align*}
    where
    \begin{align}
        G_{\hat{R}} = \begin{pmatrix}
        \lambda - 1 & 0 \\
        0 & \lambda + 1
    \end{pmatrix}^{-1} G_{\tilde{R}}
    \begin{pmatrix}
        \lambda - 1 & 0 \\
        0 & \lambda + 1
    \end{pmatrix}; \label{jump matrices G R hat}
    \end{align}
    \item it holds that
    \begin{align*}
        \hat{R}(\lambda) = I + \O(1/\lambda) \quad \text{as} \quad \lambda\to\infty.
    \end{align*}
\end{enumerate}
\end{RHP}

This Riemann-Hilbert problem can be solved as in Section~\ref{case 1} because
\begin{equation} 
\label{small norm jump}
\| G_{\hat R} - I\|_{L^2(\Gamma_R)\cap L^\infty(\Gamma_R)} = \O\left( x^{-1} \right) \quad \text{as} \quad x\to\infty
\end{equation}
by \eqref{G jump 1} and \eqref{G jump 2}. Hence, provided we can choose matrix \( B \) so that \( \Tilde R(\lambda) \) as in \eqref{defn of R check} solves RHP~\ref{R tilde rhp}, we can solve all the previous Riemann-Hilbert problems. Matrix \( B \) is algebraically determined by residue conditions \eqref{residue relation 1} and \eqref{residue relation 2}. Write $\ds \hat{R}(\lambda) = (\hat{R}^{(1)}(\lambda), \hat{R}^{(2)}(\lambda))$. Then, 
\begin{align*}
    \left(\Tilde{R}^{(1)}(\lambda), \Tilde{R}^{(2)}(\lambda)\right) = (\lambda I + B) \left( \frac{1}{\lambda - 1}\hat{R}^{(1)}(\lambda), \frac{1}{\lambda + 1} \hat{R}^{(2)}(\lambda)\right)
\end{align*}
by \eqref{defn of R check}. It is a tedious but straightforward computation to verify that
\[
G_{\hat R}(\lambda) = \sigma_1 G_{\hat R}(-\lambda) \sigma_1.
\]
As a solution of a small norm Riemann-Hilbert problem, \( \hat R(\lambda) \) is unique and therefore \( \hat R(\lambda) = \sigma_1 \hat R(-\lambda) \sigma_1 \). Write \( \hat R(\lambda) = I + E(\lambda) \). Then,
\begin{equation}
    \label{symmerries of E}
    \hat E(\lambda) = \sigma_1 \hat E(-\lambda) \sigma_1.
\end{equation}
Write \( E(\lambda) = (E^{(1)}(\lambda),E^{(2)}(\lambda))\). Then, we get from \eqref{residue relation 1} and \eqref{residue relation 2} that
\begin{align}
\label{solve for B}
\begin{dcases}
    (I + B) \left( \mathbf{e}_1 + E^{(1)}(1) \right) = - \frac{\Tilde{\alpha}}{2} (I + B) \left( \mathbf{e}_2 + E^{(2)}(1) \right), \\
    (-I + B) \left( \mathbf{e}_2 + E^{(2)}(-1) \right) = -\frac{\Tilde{\alpha}}{2} (-I + B) \left( \mathbf{e}_1 + E^{(1)}(-1) \right),
\end{dcases}
\end{align}
where \(\mathbf{e}_1\) and \( \mathbf{e}_2 \) are the standard coordinate vectors in \( \R^2 \). Thus, symmetry \eqref{symmerries of E} at \( \lambda=1\) now implies that \( \)
\[
b_{11} = - b_{22} \quad \text{and} \quad b_{12} = - b_{21},  \quad B = \begin{pmatrix}
        b_{11} & b_{12}\\
        b_{21} & b_{22}
    \end{pmatrix}.    
\]
Since \( \hat R(\lambda) \) is a solution of a small norm problem, \( E(1) = \O(x^{-1}) \) as \( x\to\infty \) by \eqref{small norm jump}, see the next subsection. Because \( |\tilde \alpha|=2 \), see \eqref{tilde alpha}, we can further deduce that
\begin{align*}
    b_{11} = - b_{22} = - \frac{4 + \tilde{\alpha}^2 + \O(x^{-1})}{4 - \tilde{\alpha}^2 + \O(x^{-1})}, \quad b_{12} = -b_{21} = \frac{ 4\tilde{\alpha} + \O(x^{-1})}{4 - \tilde{\alpha}^2 + \O(x^{-1})}
\end{align*}
provided the denominator above is non-zero. Using \eqref{tilde alpha}, we then get that 
\begin{equation}
\label{bees}
ib_{11} = \frac{\sin(x-\kappa \ln x + \phi) + \O(x^{-1})}{\cos(x-\kappa \ln x+\phi) + \O(x^{-1})}, \quad ib_{12} = \frac{ 1 + \O(x^{-1})}{\cos(x-\kappa \ln x+\phi) + \O(x^{-1})}.
\end{equation}


\subsubsection{Asymptotic Analysis}

Exactly as in Section~\ref{case 1}, by the small norm theorem, the unique solution $\hat{R}(\lambda)$ of  RHP~\ref{R hat rhp} is given by
\begin{align}
    \hat{R}(\lambda) &= I + \frac{1}{2 \pi i } \int_{\Gamma_R} \frac{\rho(\lambda') (G_{\hat{R}} (\lambda') - I)}{\lambda' - \lambda} d\lambda' \\
    & =  I + \frac{1}{2 \pi i } \int_{\Gamma_R} \frac{G_{\hat{R}} (\lambda') - I}{\lambda' - \lambda} d\lambda' + \frac{1}{2 \pi i } \int_{\Gamma_R} \frac{(\rho(\lambda') -I ) (G_{\hat{R}} (\lambda') - I)}{\lambda' - \lambda} d\lambda'
     \label{sing singular integral equation}
\end{align}
for \( \lambda \notin \Gamma_R \), where $\displaystyle{\rho(\lambda') := \hat{R}_-(\lambda^\prime)}$, $\lambda' \in \Gamma_R$, which itself is a solution of the corresponding integral equation. By \eqref{small norm jump}, the small norm theorem implies that
\begin{align}
    \|\rho - I\|_{L^2(\Gamma_R)} = \O \left( x^{-1} \right) \quad \text{as} \quad x\to\infty. \label{sing rho - I has a small norm}
\end{align}
In particular, we get that \( |E(1)| = \O\left( x^{-1} \right) \) by \eqref{small norm jump}, \eqref{sing rho - I has a small norm}, and the Cauchy-Schwarz inequality as claimed before. This means that if \( \{x_n\} \) is the increasing sequence of values of \( x \) for which the matrix \( B \) cannot be constructed, i.e., the denominator in \eqref{bees} vanishes, then
\[
x_n = \tilde x_n + \O\big(\tilde x_n^{-1}\big), \quad \tilde x_n - \kappa \ln \tilde x_n + \phi = \pi\left( n-\frac12 \right)
\]
for all \( n \) large enough. The above relations readily yield asymptotic formula \eqref{asymptotic of x_n}.

Next, we can infer from \eqref{real part varphi 2} that \( \Gamma_R \) is tangential to the imaginary axis at the origin and that \( G_{\hat R} - I \) vanishes exponentially there with respect to \( \eta \). Hence, as in Section~\ref{case 1}, \( \hat R(0) \) is well defined and it holds that
\[
   \lim_{\lambda\to0} \hat{R}(0) = I + \O (x^{-1})
\]
by \eqref{sing singular integral equation}, where the second integral is estimated by Cauchy-Schwarz inequality, \eqref{small norm jump}, and \eqref{sing rho - I has a small norm}, while \( L^q(\Gamma_R)\)-norms of \( (G_{\hat R}(\lambda^\prime)-I)/\lambda^\prime\) can be estimated around the origin using \eqref{G jump 2} and exponential vanishing of the integrand at \( \lambda=0 \). Recall now that \( P^{(gl)}(0) = i^{\sigma_3} \), see Lemma~\ref{lemma global par}. Then, it follows from \eqref{X tilde near zero}, \eqref{form of tilde X}, \eqref{defn of R tilde}, and \eqref{defn of R check} that
\begin{align*}
    P_0 &= \lim_{\lambda \to 0} \Tilde X (\lambda)\sigma_3\sigma_1 = \lim_{\lambda \to 0} R (\lambda) P^{(gl)}(\lambda) \sigma_3\sigma_1 \\ & = \lim_{\lambda \to 0} (\lambda I + B) \hat{R}(\lambda) \begin{pmatrix}
        \frac{1}{\lambda - 1} & 0 \\
        0 & \frac{1}{\lambda + 1}
    \end{pmatrix} i^{\sigma_3} \sigma_3 \sigma_1 \\ &= -B i^{\sigma_3} \sigma_1 + \O(x^{-1}).    
\end{align*}
Thus, we get from \eqref{defn of P0} and \eqref{bees} that
\begin{align}
    e^{\frac{u}{2}} & = \sinh \left(\frac u2\right) + \cosh \left(\frac u2\right) = ib_{12}-ib_{11} \\
    & = \frac{1 - \sin(x-\kappa \ln x + \phi) + \O(x^{-1})}{\cos(x-\kappa \ln x+\phi) + \O(x^{-1})}.
    \label{eu2}
\end{align}
Since we defined real solutions modulo addition of integer multiples of \( 2\pi i\), the above formula readily yields \eqref{asymp for B neq 0}.

\appendix

\section{Connection Problem for Real Solutions of Painlev\'e III}
\label{connection P3}

In this appendix we connect the behavior as \( x\to+\infty\) and \( x\to 0^+\) of the real solutions \( w(x) \) of \eqref{PIIID6} with $\alpha = \beta = 0$ and $\gamma = -\delta = 1/4$. Recall that if \( w(x)\) is such a solution, then so are \( -w(x) \) and \( \pm1/w(x) \). Thus, we will only describe solutions that correspond to \( u(x) \), real solutions of \eqref{sinh-gordon PIII}, via
\begin{equation}
\label{u to w}
w(x) = e^{u(x)/2}.
\end{equation}
As explained in \eqref{p parametrization}, real solutions of \eqref{sinh-gordon PIII} are parametrized by pairs \( (\iota,p)\), where \( \iota\in\{\pm1\}\) and \( p\in\overline\C\setminus\overline{\mathbb D}\). Moreover, as pointed out in the remark at the end of Section~\ref{2.2}, \( u(x;-\iota,p) = u(x;\iota,p) + (2n+1)2\pi i\), \( n\in\Z \). This means that
\begin{equation}
\label{iota to -iota}
w(x;-1,p) = -w(x;1,p)
\end{equation}
and therefore it is enough to consider the case \( \iota = 1 \) as was done in the proof of Theorem~\ref{main thm}. Next, one can readily see using \eqref{Stokes matrices} and \eqref{E parametrization} that \( \sigma_3 \Hat \Psi(\lambda) \sigma_3\) solves RHP~\ref{rhp original} when \( \Hat \Psi(\lambda) \) does, but with the monodromy data \( (a,A,B^\R) \) replaced by \( (-a,A,-B^\R)\), that is, \( (\iota,p)\) replaced by \( (-\iota,-p) \) when \( p \) is finite and \( (\iota,s^\R) \) replaced by \( (\iota,-s^\R) \) when \( p=\infty \). One can also readily check that \( \sigma_3P_0(u)\sigma_3 = P_0(-u)\), see \eqref{defn of P0}. Hence,
\begin{equation}
\label{p to -p}
w(x;1,-p) = \frac{-1}{w(x;1,p)} \quad \text{and} \quad w(x;1,\infty,-s^\R) = \frac{1}{w(x;1,\infty,s^\R)}
\end{equation}
when \( p \) is finite and infinite, respectively (the case \( p=\infty \) and \( s^\R=0\) corresponds to the trivial solutions \( w(x) = \iota \)). In particular, it is sufficient for us to consider only solutions \( w(x) \) with \( s^\R\leq 0 \) (\( \sigma=1\) when \(s^\R<0\)). Altogether, below we assume  \eqref{u to w} and take \( \iota=1 \), \( s^\R\leq 0 \).

We get from \eqref{gl Y asymptotic eq 2} and \eqref{gl Y asymptotic eq 3} as well as \eqref{eu2} that
\begin{equation}
\label{w at infinity}
w(x) \sim \begin{cases} \displaystyle 1 + (s^\R/\pi) K_0(x), & p=\infty, \medskip \\
\displaystyle \frac{1-\sin(x-\kappa \ln x + \phi)}{\cos(x-\kappa \ln x+\phi)}, & p\neq\infty, \end{cases}
\end{equation}
as \( x\to + \infty\). The corresponding formulae as \( x\to 0^+ \) now need to be deduced from \cite[Section~3]{Niles}. As this reference does not include final computations, we felt compelled to provide them below. To this end, we recall that \( u(x) = \pi i - i v(x) \), where \( v(x) \) is the solution of \eqref{sine-gordon PIII}. Then
\[
w(x) = e^{u(x)/2} = i e^{-iv(x)/2}.
\]
Let \( P_{\sin} \) be as in \eqref{Phi hat as lambda -> 0}. When \(s^\R \neq -2 \), it is stated on \cite[pages 60 and 63]{Niles} that
\[
P_{\sin} \sim \frac\pi8\frac1{\cos^2(\pi\alpha)}\begin{pmatrix} -i & 1 \\ i & 1\end{pmatrix} \begin{pmatrix} 0 & u_{-\alpha} \\ u_\alpha & 0 \end{pmatrix} \left(\frac x4\right)^{2\alpha\sigma_3} \begin{pmatrix} 0 &  m \\ \tilde m & 0 \end{pmatrix} d^{-\sigma_3} \begin{pmatrix} 0 & u_{-\alpha} \\ u_\alpha & 0 \end{pmatrix}\begin{pmatrix} 1 & 1 \\ -i & i \end{pmatrix},
\]
where we will introduce the symbols \( \alpha,\tilde m \), and \( d\) further below, while
\[
u_\nu = \frac{2^{\frac12-\nu}}{\Gamma(\frac12+\nu)}, \quad m = i \begin{cases} -2\cos(\pi\alpha), & p=\infty, \\ \displaystyle \frac{2-\overline p e^{i\pi\alpha} - p e^{-i\pi\alpha}}{\sqrt{|p|^2-1}}, & p\neq\infty, \end{cases}
\]
see \cite[pages 56-57]{Niles} and recall that we take \( 1/\sqrt{1+p_{\sin}q_{\sin}} =  i/\sqrt{|p|^2-1} \). Multiplying out the above product gives
\[
P_{\sin} \sim \frac\pi 8 \frac1{\cos^2(\pi\alpha)} \begin{pmatrix}
    mu_\alpha^2d(x/4)^{2\alpha} - \tilde m u_{-\alpha}^2d^{-1}(x/4)^{-2\alpha} & * \smallskip \\
    mu_\alpha^2d(x/4)^{2\alpha} + \tilde m u_{-\alpha}^2d^{-1} (x/4)^{-2\alpha} & *
\end{pmatrix},
\]
which, in turn, yields that
\begin{equation}
\label{solution w}
w(x) \sim \frac{\pi i}{4} \frac{mu_\alpha^2d}{\cos^2(\pi\alpha)} \left( \frac x4\right)^{2\alpha}.
\end{equation}

Let now \( s^\R\in(-2,0] \). In this case \( d=1 \) and \( \alpha=\gamma/4 \), where \( \gamma \) was introduced in Theorem~\ref{Niles result}, see \cite[pages 55 and 60]{Niles}. Recall further that we set \( q_p = \Re(p)+ \sqrt{1-\Im(p)^2}\) in this case (when \( p\neq \infty\)). Hence, by noticing that \( \cos(\pi\alpha) = \sqrt{1-\Im(p)^2} \) and after some straightforward simplifications, we get that
\[
m = -2i\cos(\pi\alpha) \frac{\sqrt{|p|^2-1}}{q_p},
\]
where we treat the last fraction as \( 1 \) when \( p=\infty \). Thus, using Euler's reflection formula for the gamma function, we can conclude that
\begin{equation}
\label{solution w -2<s<=0}
w(x) \sim \left( \frac x8 \right)^{\frac\gamma2} \frac{\Gamma(\frac12-\frac\gamma4)}{\Gamma(\frac12+\frac\gamma4)}\frac{\sqrt{|p|^2-1}}{q_p}.
\end{equation}

Let now \( s^\R<-2 \). In this case \( \alpha = \frac12 + it \), where \( t \) was introduced in Theorem~\ref{Niles result}, and
\[
d = 1 - \left(\frac 8x\right)^{4it}\frac{\Gamma^2(1+it)}{\Gamma^2(1-it)}\frac{\tilde m}{m}, \quad \tilde m = -i \begin{cases} 2\cos(\pi\alpha), & p=\infty, \\ \displaystyle \frac{2+\overline p e^{-i\pi\alpha} + p e^{i\pi\alpha}}{\sqrt{|p|^2-1}}, & p\neq\infty, \end{cases}
\]
see \cite[pages 61 and 63]{Niles}. Thus, it holds that
\[
m = -2i \cos(\pi\alpha) \frac{q_p}{|q_p|} \quad \text{and} \quad \tilde m = -2i \cos(\pi\alpha) \frac{\overline q_p}{|q_p|},
\]
where \( q_p = \Re(p) + i \sqrt{\Im(p)^2-1} \) (and therefore \( |q_p|=\sqrt{|p|^2-1} \)) and we treat the above fractions as \( 1 \) when \( p=\infty \). Hence, it holds that
\[
d = \left( \frac 8x \right)^{2it} \frac{\Gamma(1+it)}{\Gamma(1-it)} \frac{\overline q_p}{|q_p|} \, 2i \sin \left( 2t \ln \frac x8 + \Delta_p\right),
\]
where \( \Delta_p \) was introduced in Theorem~\ref{Niles result}. Consequently, we get from \eqref{solution w} that
\begin{equation}
\label{solution w s<-2}
w(x) \sim -\frac{x}{4t}\sin \left( 2t \ln \frac x8 + \Delta_p\right),
\end{equation}
where we used the identity \( |\Gamma(1+it)|^2 = \pi t/\sinh(\pi t)\).

Finally, let \( s^\R = -2 \), i.e., \( p=\Re(p)+i \). Then, it holds that
\[
P_{\sin} \sim \frac\pi{32}\begin{pmatrix} 1 & \frac 2\pi \\ 1 & -\frac 2\pi\end{pmatrix} \begin{pmatrix}
    m(x) & 0 \\ 0 & \tilde m(x) \end{pmatrix} \begin{pmatrix} 1 & 1 \\ \frac2\pi & -\frac2\pi \end{pmatrix} = \frac \pi{32} \begin{pmatrix} m(x) + \frac4{\pi^2} \tilde m(x) & * \\  m(x) - \frac4{\pi^2} \tilde m(x) & * \end{pmatrix},
\]
see \cite[pages 64-67]{Niles}, where
\[
m(x) = \frac{8i x}{\pi}\left(\ln\frac x8 + \delta_\infty \right) + \frac{4i x}{\overline p+i}
\]
(as in Theorem~\ref{Niles result}, \( \delta_\infty \) is the Euler's constant) and the explicit expression for \( \tilde m(x) \) is not important to us. Therefore, exactly as in the previous cases it holds that
\begin{equation}
\label{solution w s=-2}
w(x) \sim \frac{\pi i}{16}m(x) = -\frac x2\left(\ln\frac x8+\delta_p\right), \quad \delta_p = \delta_\infty + \frac{\pi}{2\Re(p)}.
\end{equation}
Formulae \eqref{w at infinity}, \eqref{solution w -2<s<=0}--\eqref{solution w s=-2} together with symmetries \eqref{iota to -iota} and \eqref{p to -p} finish the description of the connection formulae for the real solutions (on \( \R_{>0}\)) of Painlev\'e III($D_6$) equation \eqref{PIIID6} with $(\alpha, \beta, \gamma, \delta) = (0,0,1/4, -1/4)$.

\section{On Li's Conjecture} \label{conj by Li}

The conjecture raised in \cite{YuqiLi} concerns the connection between behavior at infinity and near zero of the radial solutions of \eqref{tt*-toda A3}. In the special case \( w_0=-w_1 \) it is assumed that a solution of \eqref{tt*-toda A3} satisfies
\begin{equation}
\label{Li assumption}
w_0(r) \sim s_2^\R \, 2^{-5/2} \, (\pi r)^{-1/2} e^{-4r},
\end{equation}
as \( r\to+\infty\), from which the behavior at the origin is inferred. More precisely, set
\[
\cos\left(\frac\pi2 (1+\gamma_0)\right) := \frac{s_2^{\R}}{2} \quad \text{and} \quad e^{-\rho_0} := 2^{ \gamma_0} \frac{\Gamma \left(\frac{1 + \gamma_0}{2} \right)}{\Gamma \left(\frac{1 - \gamma_0}{2} \right)},
\]
see \cite[Equations (2.2) and (4.1)]{YuqiLi} (\(\gamma_1=-\gamma_0\) and \( s_1^\R = 0 \) in the considered special case, where we also used Legendre duplication formula). It is then claimed, see cases \(\Omega_2,\Omega_5 \) of \cite[Conjecture~4.1]{YuqiLi}, that
\begin{equation}
\label{Li conclusion}
|s_2^\R|>2 \quad \Rightarrow \quad e^{2\sigma_2 w_0(r)} \sim 2\Re\big( e^{\sigma_2(\gamma_0\ln r+\rho_0)} \big)
\end{equation}
as \(r\to 0^+ \) with \(  e^{2w_0(r)} \) being smooth on \( \R_{>0} \), where \( \sigma_2 := -\sign(s_2^\R) \). 

Clearly, in the considered special case, \( w_0(r) \) is a solution of the tt\( ^*\)-Toda equation \eqref{tt*-Toda n=1} and therefore \( 4w_0(x/4) \) solves the sinh-Gordon Painlev\'e III equation \eqref{sinh-gordon PIII}. Assumption \eqref{Li assumption} and Theorem~\ref{main thm} immediately imply that \( 4w_0(x/4) \) is the solution corresponding to \( p=\infty \) and \( s^\R = s_2^\R \). It then follows that
\[
\gamma_0 = 2i t + \sigma, \quad e^{\sigma\gamma_0 \ln r} = re^{\sigma i 2t \ln r}, \quad \text{and} \quad e^{\sigma\rho_0} = \frac{\sigma i}{2t} e^{\sigma i(-2t\ln 2 + \Delta_\infty)},
\]
where \( t,\Delta_\infty,\sigma=\sigma_2 \) are the same as in Theorem~\ref{Niles result} and we used conjugate symmetry of the Gamma function. Hence,
\[
e^{2\sigma w_0(r)} \sim -\sigma \frac rt \Im\left( e^{\sigma i(2t \ln \frac r2 + \Delta_\infty)} \right)= - \frac rt \sin\left(2t \ln \frac r2 + \Delta_\infty\right),
\]
according to \eqref{Li conclusion}. Since \( 2w_0(r) = \frac12 u(4r) \), the conjecture now follows from   \eqref{u to w}, the second formula in \eqref{p to -p}, and \eqref{solution w s<-2}.

Of course, Theorems~\ref{main thm} and~\ref{Niles result} provide the connection between the behavior at infinity and zero of the radial solutions of \eqref{tt*-toda A3} when \( w_0 = - w_1 \) and \(|s_2^\R|\leq 2 \). However, these results are not new, see \cite{GIL1, GIL2, GIL3}, and were not part of \cite[Conjecture~4.1]{YuqiLi}.


\bibliographystyle{alpha}
\bibliography{reference.bib}

\end{document}